\newtheorem{obs}{Observation}[section]
\crefname{obs}{Observation}{Observations}
\newtheorem{remark}{Remark}[section]
\newtheorem{basicprop}{Basic Property}[section]
\newtheorem{corollary}{Corollary}[section]
\newtheorem{lemma}{Lemma}[section]
\newcommand{\bestparameterizedsolution}{O(k(|V| + |E|)\log{|V|})}
\newcommand{\tmp}{\mathcal{T}}
\newcommand{\pathcover}{\mathcal{P}}
\newcommand{\flowG}{\mathcal{G}}
\newcommand{\flowV}{\mathcal{V}}
\newcommand{\flowE}{\mathcal{E}}
\newcommand{\Nzero}{\mathbb{N}_{0}}
\newcommand{\invA}{\textbf{Invariant A}}
\newcommand{\invB}{\textbf{Invariant B}}
\newcommand{\invC}{\textbf{Invariant C}}
\newcommand{\invAB}{\textbf{Invariants A and B}}
\newcommand{\residual}[2]{\mathcal{R}(#1, #2)}
\newcommand{\width}{\mathsf{width}}
\newcommand{\blue}{$blue$}
\newcommand{\red}{$red$}
\newcommand{\purple}{$purple$}
\newcommand{\processed}{processed}
\title{Minimum Path Cover in Parameterized Linear Time\thanks{A preliminary version of this work was published in the proceedings of SODA 2022~\cite{caceres2022sparsifying}. This work was partially funded by the US Fulbright program, the Fulbright Finland Foundation, the Helsinki Institute for Information Technology (HIIT), the US National Science Foundation (award DBI-1759522), the European Research Council (ERC) under the European Union's Horizon 2020 research and innovation programme (grant agreement No.~851093, SAFEBIO), and the Academy of Finland (grants No.~322595, 328877).}}
\author{Manuel Cáceres\thanks{Department of Computer Science, University of Helsinki, Finland, \texttt{manuel.caceresreyes@helsinki.fi}.}
\and Massimo Cairo\thanks{Department of Computer Science, University of Helsinki, Finland, \texttt{cairomassimo@gmail.com}.}
\and Brendan Mumey\thanks{Gianforte School of Computer Science, Montana State University, USA, \texttt{brendan.mumey@montana.edu}.}
\and Romeo Rizzi\thanks{Department of Computer Science, University of Verona, Italy, \texttt{romeo.rizzi@univr.it}.}
\and Alexandru~I.~Tomescu\thanks{Department of Computer Science, University of Helsinki, Finland, \texttt{alexandru.tomescu@helsinki.fi}.}}
\begin{document}

\date{}

\maketitle

\begin{abstract} A \emph{minimum path cover} (MPC) of a directed acyclic graph (DAG) $G = (V,E)$ is a minimum-size set of paths that together cover all the vertices of the DAG. Computing an MPC is a basic polynomial problem, dating back to Dilworth's and Fulkerson's results in the 1950s. Since the size $k$ of an MPC (also known as the \emph{width}) can be small in practical applications, research has also studied algorithms whose running time is parameterized on $k$.

 We obtain a new MPC parameterized algorithm for DAGs running in time $O(k^2|V| + |E|)$. Our algorithm is the first solving the problem in parameterized linear time. Additionally, we obtain an edge sparsification algorithm preserving the width of a DAG but reducing $|E|$ to less than $2|V|$.
 This algorithm runs in time $O(k^2|V|)$ and requires an MPC of a DAG as input, thus its total running time is the same as the running time of our MPC algorithm.
 \end{abstract}



\section{Introduction}

A \emph{Minimum Path Cover (MPC)} of a (directed) graph $G = (V, E)$ is a minimum-sized set of paths such that every vertex appears in at least one path of the set. While computing an MPC is NP-hard in general, it is a classic result, dating back to Dilworth~\cite{dilworth2009decomposition} and Fulkerson~\cite{fulkerson1956note}, that this can be done in polynomial time on directed acyclic graphs (\emph{DAGs}). Computing an MPC of a DAG has applications in various fields. In bioinformatics, it allows efficient solutions to the problems of multi-assembly~\cite{eriksson2008viral,trapnell2010transcript,rizzi2014complexity,chang2015bridger,liu2017strawberry,caceres2021safety,caceres2022width}, perfect phylogeny haplotyping~\cite{bonizzoni2007linear,gramm2007haplotyping}, and alignment to pan-genomes~\cite{makinen2019sparse,Ma2022.01.07.475257,cotumaccio2021indexing,cotumaccio2022graphs,cotumaccio2021regular,d2021co}. Other examples include scheduling~\cite{colbourn1985minimizing,desrosiers1995time,bunte2009overview,van2016precedence,zhan2016graph,marchal2018parallel}, computational logic~\cite{bova2015model,gajarsky2015fo}, distributed computing~\cite{tomlinson1997monitoring,ikiz2006efficient}, databases~\cite{Jagadish90}, evolutionary computation~\cite{jaskowski2011formal}, program testing~\cite{ntafos1979path}, cryptography~\cite{mackinnon1985optimal}, and programming languages~\cite{kowaluk2008path}. Since in many of these applications the size $k$ (number of paths, also known as \emph{width}) of an MPC is bounded, research has 
also focused in solutions whose running time is parameterized by $k$\footnote{Algorithms of this class do not know the value of $k$ in advance.}. This approach is also related to the line of research ``FPT inside P''~\cite{giannopoulou2017polynomial,caceres2021safety,fomin2018fully,koana2021data,abboud2016approximation,caceres2021a,caceres2022sparsifying,makinen2019sparse,Ma2022.01.07.475257} of finding natural parameterizations for problems already in P. 

MPC algorithms can be divided into those based on a reduction to maximum matching~\cite{fulkerson1956note}, and those based on a reduction to minimum flow~\cite{ntafos1979path}. The former compute an MPC of a \emph{transitive} DAG by finding a maximum matching in a bipartite graph with $2|V|$ vertices and $|E|$ edges. Thus, one can compute an MPC of a transitive DAG in time $O(\sqrt{|V|} |E|)$ with the Hopcroft-Karp algorithm \cite{hopcroft1973n}. Further developments of this idea include the $O(k|V|^2)$-time algorithm of Felsner et~al.~\cite{felsner2003recognition}, and the $O(|V|^2 + k\sqrt{k}|V|)$ and $O(\sqrt{|V|}|E| + k\sqrt{k}|V|)$-time algorithms of Chen and Chen~\cite{chen2008efficient,chen2014graph}.

The reduction to minimum flow consists in building a flow network $\flowG$ from $G$, where a global source $s$ and global sink $t$ are added, and each vertex $v$ of $G$ is split into an edge $(v^{in}, v^{out})$ of $\flowG$ with a demand (lower bound) of one unit of flow (see \Cref{sec:preliminaries} for details). A minimum-valued (integral) flow of $\flowG$ corresponds to an MPC of $G$, which can be obtained by decomposing the flow into paths. This reduction (or similar) has been used several times in the literature to compute an MPC (or similar object)~\cite{ntafos1979path,mohring1985algorithmic,gavril1987algorithms,Jagadish90,ciurea2004sequential,rademaker2012optimal,pijls2013another,marchal2018parallel}, and it is used in the recent $\bestparameterizedsolution$-time solution of M\"akinen et~al.~\cite{makinen2019sparse}. Furthermore, by noting that a vertex cannot belong to more than $|V|$ paths in an MPC, the problem can be reduced to maximum flow with capacities at most $|V|$ (see for example \cite[Theorem 3.9.1]{bang2008digraphs}). As an example, using the Goldberg-Rao algorithm~\cite{goldberg1998beyond} the problem can be solved in time $\widetilde{O}(|E|\min(|E|^{1/2}, |V|^{2/3})+||\pathcover||)$ ($||\pathcover||$ is the total length of the output cover, this term is needed for decomposing the flow into an MPC, see e.g.~\cite[Lemma 1.11 (full version)]{kogan2022beating}). More recent maximum flow algorithms~\cite{lee2014path,madry2016computing,liu2020faster,kathuria2020unit,van2021minimum,gao2021fully,chen2022maximum} provide an abundant options of trade-offs, though none of them leads to a parameterized linear-time solution for the MPC problem. In particular, by applying the recent breakthrough result of Chen et.~al~\cite{chen2022maximum} the problem is solved in almost optimal $\widetilde{O}(|E|^{1+o(1)} + ||\pathcover||)$ time w.h.p. 

A problem closely related to MPC is to find a \emph{minimum chain cover} (MCC), which is a vertex-disjoint minimum-size set of paths in the \emph{transitive closure} of the DAG. An MCC can be obtained from an MPC $\pathcover$ in time $O(||\pathcover||)$ by simply removing the repeated vertices in the paths (see e.g.~\cite[Lemma 1.12 (full version)]{kogan2022beating}). As such, the running times described before also apply for MCC. Recently, Kogan and Parter~\cite{kogan2022beating} presented an algorithm for MCC running in time $\widetilde{O}(|E|+|V|^{3/2})$, avoiding the $||\pathcover||$ term of MPC-based algorithms by using \emph{reachability shortcuts}.


\subsection{Techniques and Results}
In this paper we propose the first parameterized linear time algorithm to compute an MPC of a DAG. More formally, we obtain the following theorem.
\begin{restatable}{thm}{improvedmpcflow}
\label{thm:improved-mpc-flow}
Given a width-$k$ DAG $G = (V,E)$, we compute an MPC in time ${O(k^2|V|+|E|)}$.
\end{restatable}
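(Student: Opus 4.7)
The plan is to use the classical reduction from MPC to minimum integral flow and obtain the $O(k^2|V|+|E|)$ bound from two complementary ingredients: (i) an initial integer feasible flow whose value is already $O(k)$, constructed in $O(|V|+|E|)$ time, and (ii) a mechanism for reducing the value of the current flow by one in $O(k|V|)$ time independently of $|E|$. Composed, these give setup cost $O(|V|+|E|)$ followed by $O(k)$ decrements at $O(k|V|)$ each, for the claimed total of $O(k^2|V|+|E|)$.

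\textbf{Reduction and initial flow.} I would first build the standard network $\flowG$ (each $v\in V$ split into $(v^{in},v^{out})$ with lower bound $1$, together with a super source $s$ and a super sink $t$), so that integer feasible flows correspond exactly to path covers of $G$ and the minimum feasible flow value is the width $k$. In $O(|V|+|E|)$ time I would then produce an initial integer flow of value $k_0=O(k)$. The candidate I would try is a greedy topological sweep that maintains a set of currently active paths and, upon processing each new vertex, extends it from the head of some active path through any available incoming edge, opening a new path only when forced; since the in-edges of a vertex are inspected only when the vertex is processed, the total edge work in the sweep is $O(|E|)$. Bounding the resulting cover size by $O(k)$ would be argued by exhibiting an antichain of that order from the path-opening events.

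\textbf{Fast decrement via path-cover sparsification.} Starting from the initial cover $\pathcover_0$ of size $k_0=O(k)$, the algorithm performs $k_0-k=O(k)$ flow decrements, each realised by an $s\to t$ path in the residual network $\residual{\flowG}{f}$. The main obstacle, and where the proof will concentrate, is to execute each such search in $O(k|V|)$ time independently of $|E|$. The idea I would pursue is that the current path decomposition $\pathcover$ already induces a compact skeleton: every vertex lies on one of at most $k$ paths, so only the $O(k|V|)$ ``transition'' entries between paths, together with the path edges themselves, need to be considered in a decrementing augmenting search. I would define a sparsified residual network of size $O(k|V|)$, show that it contains a decrementing $s\to t$ path whenever $\residual{\flowG}{f}$ does, and maintain it incrementally as the flow decreases. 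A BFS/DFS on this sparsified residual then costs $O(k|V|)$ per iteration. The hardest technical step is the incremental maintenance: each decrement reroutes part of the cover, and the sparsified network must be updated to reflect the new between-path transitions without paying $|E|$ per update; I expect to amortise these updates across the total $O(k^2|V|)$ budget, using the fact that the number of paths can only decrease.
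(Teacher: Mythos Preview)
Your plan has a genuine gap at step (i). The greedy topological sweep you describe does \emph{not} produce a path cover of size $O(k)$, and the suggested certificate---``an antichain of that order from the path-opening events''---does not exist, because the vertices at which new paths open need not be pairwise incomparable. A concrete counterexample: chain $n$ diamonds, where diamond $i$ has vertices $a_i,b_i,c_i,d_i$ and edges $a_i\to b_i$, $a_i\to c_i$, $b_i\to d_i$, $c_i\to d_i$, together with connectors $d_i\to a_{i+1}$. This DAG has width $2$ (cover by two paths, one through all $b_i$'s and one through all $c_i$'s), yet your greedy in the order $a_1,b_1,c_1,d_1,a_2,\dots$ opens a fresh path at every $c_i$ (since $a_i$ is no longer a head once $b_i$ has been appended), producing $n+1$ paths on a $4n$-vertex width-$2$ graph. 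The path-opening vertices $a_1,c_1,c_2,\dots,c_n$ are a chain, not an antichain. Obtaining an $O(k)$-size initial cover cheaply is in fact the heart of the difficulty; the prior parameterized algorithm of M\"akinen et~al.\ only achieves $O(k\log|V|)$, and anything better essentially requires the new machinery.

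The paper's proof takes a different architecture altogether. It does \emph{not} build an approximate cover and then shrink. Instead it processes vertices one by one in topological order, maintaining at every step a minimum flow (hence an exact MPC) of the current prefix $G_i$. When $v_{i+1}$ arrives, its incoming edges are sparsified to at most $k$ using the current MPC, and then a \emph{single} decrementing path is sought in the residual. The crucial contribution is a \emph{layered} traversal guided by an integer level $\ell_i(v)$ on each vertex, maintained under three invariants (Invariants A--C) that force the layered cuts to be one-way cuts of strictly decreasing demand. An amortized argument over the evolution of these layered-cut demands bounds the total traversal work by $O(k^3|V|)$; the final reduction to $O(k^2|V|)$ comes from replacing explicit path maintenance with constant-size \emph{back links} to vertices of the implicitly maintained antichain structure, so that the $path(\cdot)$ function needed for sparsification is available in $O(1)$ without a full flow decomposition at each step. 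None of this is reflected in your outline, and your step (ii), while plausible in spirit, never has a valid starting point.
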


To obtain this result we further develop and interleave the use of two previously known techniques, namely, \emph{transitive sparsification} and \emph{shrinking}.

\emph{Transitive sparsification} consists in the removal of some transitive edges\footnote{An edge $(u, v)$ is transitive if after removing it $u$ still reaches $v$.}, which preserves the reachability among vertices, and thus the width of the DAG\footnote{Every edge in an MPC removed by a transitive sparsification can be \emph{re-routed} through an alternative path.}. We sparsify the edges to $O(k|V|)$ only, in overall $O(k|V| + |E|)$ time, obtaining a linear dependency on $|E|$ in our running times. Our idea is inspired by the work of Jagadish~\cite{Jagadish90}, which proposed a compressed index for answering reachability queries in constant time: for each vertex $v$ and path $P$ of an MPC, it stores the last vertex in $P$ that reaches $v$, thus storing $k$ pointers per vertex.
However, three issues arise when trying to apply this idea \emph{inside} an MPC algorithm: (i) it is dependent on an initial MPC (whereas we are trying to compute one), (ii) its construction algorithm~\cite{makinen2019sparse} introduces a $k$ multiplicative factor to $|E|$ in the running time, and (iii) transitive edges represented in the index might not be present in the DAG. We address (i) by using a suboptimal (but yet bounded) path cover whose gradual computation is interleaved with transitive sparsifications, and we address (ii) and (iii) by keeping only $O(k)$ incoming edges per vertex in a \emph{single linear pass} over the edges. 

By \emph{shrinking} we refer to the process of transforming an arbitrary path cover into an MPC. As discussed before, \emph{shrinking} can be implemented as a reduction to minimum flow by creating a flow network $\flowG$ from the DAG and the path cover to be shrunk. While previous approaches have used shrinking as a separate black-box, we instead modify it by guiding the search for decrementing paths bounding the total search time to parameterized linear.

C\'aceres et.~al~\cite{caceres2022sparsifying}\footnote{This algorithm was part of the preliminary version of this paper, but we decided to remove it since it is beaten by our improved parameterized linear time algorithm.} proposed a divide and conquer MPC algorithm running in $O(k^2|V|\log{|V|}+|E|)$ time, which interleaves applications of sparsification and shrinking. More specifically, the algorithm divides the DAG in two halves according to a topological order of the vertices, recursively solves each half, sparsifies the edges between both halves, and shrinks the union of both MPCs.

Our MPC algorithm works on top of the minimum flow reduction, but instead of running a minimum flow algorithm and then extracting the corresponding paths (as previous approaches do~\cite{ntafos1979path,mohring1985algorithmic,Jagadish90,ciurea2004sequential,rademaker2012optimal,pijls2013another,marchal2018parallel,kogan2022beating}), it processes the vertices in topological order, and incrementally maintains a minimum flow representing an MPC $\pathcover$ of the corresponding induced subgraph. When a new vertex $v$ is processed, we sparsify the edges incoming to $v$ to at most $k$. After that, we attempt to shrink $\pathcover \cup \{(v)\}$ by searching for a single decrementing path in the corresponding residual graph. The search is guided by assigning an integer level to each vertex and traversing the graph in a \emph{layered} manner (see \Cref{sec:layered-traversal}). The parameterized linear time is achieved by the combination of step-by-step transitive sparsification and level reassignment, which allow us to bound the running time to $O(f(k))$ per vertex. In a first version of our algorithm\footnote{A preliminary version of this first version was presented by the authors in SODA22~\cite{caceres2022sparsifying}.} we explicitly maintain a corresponding MPC $\pathcover$ during the algorithm to effectively sparsify, obtaining the following result.

\begin{restatable}{thm}{mpcflow}
\label{thm:mpc-flow}
Given a width-$k$ DAG $G = (V,E)$, we compute an MPC in time ${O(k^3|V|+|E|)}$.
\end{restatable}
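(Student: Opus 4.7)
The plan is to process the vertices of $G$ in topological order and incrementally maintain an MPC $\pathcover$ of the already-processed subgraph together with the corresponding minimum flow on its flow network $\flowG$. When a new vertex $v$ arrives, the width can grow by at most one, so the work per vertex consists of either confirming that the width has strictly increased (leaving $(v)$ as its own path in $\pathcover$) or detecting a single decrementing path in the residual graph and shrinking $\pathcover \cup \{(v)\}$ back to an MPC.

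The first step per vertex is transitive sparsification of the edges incoming to $v$. Since $\pathcover$ is maintained explicitly, every processed vertex $u$ knows the (unique) path $P \in \pathcover$ it lies on. I scan the incoming edges of $v$ in a single pass and keep, for each $P$, only the edge from the $P$-vertex of largest topological index that has an edge into $v$; every other incoming edge is transitive via $P$ and can be discarded without altering reachability. This reduces the in-degree of $v$ to at most $|\pathcover| \leq k$, and the total cost across all vertices telescopes to $O(|E|)$.

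After sparsification I extend the cover by the singleton $(v)$, which corresponds to pushing one unit of flow along $s \to v^{in} \to v^{out} \to t$, and attempt to shrink. Shrinking searches the residual graph $\residual{\flowG}{f}$ for a single decrementing path using the layered traversal of \Cref{sec:layered-traversal}, guided by integer levels attached to each vertex: only edges respecting the level order are explored, and levels are reassigned after each successful push. If such a path exists, I cancel one unit of flow along it and rewire $\pathcover$ accordingly, re-splicing paths as dictated by the residual edges traversed; otherwise $(v)$ stays as a new path and the width has strictly increased.

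For the running-time bound, sparsification amortizes to $O(|E|)$ over the whole algorithm; each vertex additionally triggers a layered search in a graph with in-degree $O(k)$, plus maintenance of the explicit $\pathcover$ along a decrementing path of length $O(k)$. Combining these with the level invariants that prevent successive searches from re-exploring the same portion of the residual graph yields $O(k^3)$ work per processed vertex, hence $O(k^3|V|+|E|)$ in total. The main obstacle is showing that the level-based layered traversal visits only $O(k^2)$ residual edges per iteration (despite the residual graph having up to $O(k|V|)$ edges) and that rerouting $\pathcover$ after a decrementing push can be carried out in $O(k^2)$ time while keeping all level invariants intact for the next iteration; these are precisely the properties captured by \invA, \invB, and \invC in the analysis.
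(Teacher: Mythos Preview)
Your high-level outline matches the paper's approach: topological processing, per-vertex sparsification to $O(k)$ incoming edges, adding the singleton $(v)$, a single layered residual search guided by vertex levels, and explicit maintenance of $\pathcover$. The gap is in the running-time analysis.

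You claim that the layered traversal ``visits only $O(k^2)$ residual edges per iteration'' and that ``rerouting $\pathcover$ after a decrementing push can be carried out in $O(k^2)$ time''; together these would give $O(k^3)$ per processed vertex. Neither statement holds in general. A single layered traversal may visit $\Theta(|V|)$ vertices (there is no per-iteration bound of $O(k^2)$), and a decrementing path need not have length $O(k)$; likewise the path-cover update after a successful search touches every vertex at level $\ge l$, not just $O(k^2)$ of them. The paper's $O(k^3|V|)$ bound is not per-iteration but \emph{amortized}: each vertex $u$ is charged $O(k)$ whenever the current iteration's minimum visited level $l$ satisfies $l \le \ell(u)$, and one then bounds the number of such charges over the whole algorithm. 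The key potential argument is that the sequence $\bigl(d(L^{\le 0}),d(L^{\le 1}),\ldots,d(L^{\le \ell(u)})\bigr)$ is strictly decreasing (by \invC) with entries in $\{1,\ldots,k\}$, and each charge to $u$ increments exactly one entry of this sequence by one (\Cref{lemma:level-path-changes}) and possibly truncates it; hence $u$ is charged $O(k^2)$ times in total. Your write-up correctly names \invA--\invC{} but does not articulate this amortized counting, and the per-iteration bounds you assert are false, so as written the proof does not establish the claimed running time.
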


To obtain $O(k^2|V|+|E|)$ time we remove the maintenance of the path cover $\pathcover$ from the algorithm while still being able to sparsify the edges at each step. We note that (for sparsifying) it suffices that every vertex ``knows'' one path that contains it, thus decreasing the information stored locally in a vertex from $O(k)$ to $O(1)$, which globally decreases the term $k^3$ to $k^2$ in the running time of the algorithm. Analogously, this improvement can be seen as replacing the maintenance of an MPC by that of an MCC, which suffices to sparsify transitive edges. The key idea to achieve this is to efficiently maintain \emph{back links} from every vertex $v$ to the previous vertex in some path of $\pathcover$ that has a lower level than that of $v$. In fact, our level assignment implicitly maintains a structure of \emph{antichains} (see \Cref{sec:antichain-structure}) that \emph{sweep} the graph during the algorithm, and the back links point to vertices in these antichains.

An \emph{antichain} is a set of pairwise non-reachable vertices, and it is a well-known result, due to Dilworth~\cite{dilworth2009decomposition}, that the maximum size of an antichain equals the size of an MPC. The high-level idea of maintaining a collection of antichains has been used previously by Felsner et~al.~\cite{felsner2003recognition} and C\'aceres et~al.~\cite{caceres2021a} for the problem of computing a maximum antichain. However, apart from being restricted to this related problem, these two approaches have intrinsic limitations. More precisely, Felsner et~al.~\cite{felsner2003recognition} maintain a \emph{tower of right-most antichains} for \emph{transitive} DAGs and $k \leq 4$, mentioning that ``the case $k=5$ already seems to require an unpleasantly involved case analysis''~\cite[p.~359]{felsner2003recognition}. C\'aceres et~al.~\cite{caceres2021a} overcome this by maintaining $O(2^k)$ many \emph{frontier antichains}, and obtaining a parameterized $O(k^24^k|V| + k2^k|E|)$ time maximum antichain algorithm.

Based on the relation between maximum one-way cuts in the minimum flow reduction and maximum antichains in the original DAG (see for example~\cite{mohring1985algorithmic,pijls2013another,marchal2018parallel}), we obtain algorithms computing a maximum antichain from any of the existing algorithms, preserving their running times (see \Cref{thm:main-antichain-minflow,thm:main-antichain}). In particular, by using our MPC algorithm we obtain an exponential improvement on the function of $k$ of the algorithm of C\'aceres et~al.~\cite{caceres2021a}.

Our last result in \Cref{sec:edge-thinning} is a structural result concerning the problem of \emph{edge sparsification} preserving the width of the DAG. Edge sparsification is a general concept that consists in finding spanning subgraphs (usually with significantly less edges) while (approximately) preserving certain property of the graph. For example, \emph{spanners} are distance preserving (up to multiplicative factors) sparsifiers, and it is a well-known result that $(1+\epsilon)$ cut sparsifiers can be computed efficiently~\cite{benczur1996approximating}. We show that if the property we want to maintain is the (exact) width of a DAG, then its edges can be sparsified to less than~$2|V|$. Moreover, we show that such sparsification is asymptotically tight (\Cref{remark:2-tight}), and it can be computed in $O(k^2|V|)$ time if an MPC is given as additional input. Therefore, by our MPC algorithm we obtain the following result.

\begin{corollary}
\label{cor:main-sparsification}
Given a width-$k$ DAG $G = (V,E)$, we compute a spanning subgraph $G' = (V, E')$ of $G$ with $|E'| < 2|V|$ and width $k$ in time ${O(k^2|V|+|E|)}$.
\end{corollary}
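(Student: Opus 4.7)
My plan is to combine Theorem~\ref{thm:improved-mpc-flow} with the dedicated edge-sparsification procedure developed in Section~\ref{sec:edge-thinning}. First I run the MPC algorithm of Theorem~\ref{thm:improved-mpc-flow} to obtain a path cover $\pathcover$ of size $k$ in $O(k^2|V|+|E|)$ time. Then I feed $(G,\pathcover)$ to the $O(k^2|V|)$-time sparsifier from Section~\ref{sec:edge-thinning}, which outputs a spanning subgraph $G'=(V,E')$ with $|E'|<2|V|$ while preserving the width. Summing the two phases gives $O(k^2|V|+|E|)+O(k^2|V|)=O(k^2|V|+|E|)$, matching the stated bound.

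For correctness, because $G'\subseteq G$ the reachability relation of $G'$ is contained in that of $G$, so any antichain of $G$ is still an antichain of $G'$, and Dilworth's theorem yields $\width(G')\ge\width(G)=k$. In the other direction, the sparsifier is designed to retain (a re-routing of) the $k$ paths of $\pathcover$ inside $G'$, so $G'$ admits a path cover of size $k$, hence $\width(G')\le k$. Combining the two inequalities gives $\width(G')=k$ exactly, and the edge-count guarantee $|E'|<2|V|$ comes directly from the sparsifier.

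The genuine obstacle is not the corollary itself but the internal sparsification step. Keeping only the edges actually traversed by $\pathcover$ does not suffice, since paths can share many vertices and the resulting edge set can exceed $2|V|$; keeping the Jagadish-style index (the last vertex on each of the $k$ paths that reaches $v$, for every $v$) costs $k|V|$ edges, which is also too many. The right strategy is to first derive a vertex-disjoint chain cover from $\pathcover$ (contributing the $|V|-k$ ``chain'' edges) and then, for each vertex $v$, retain at most one additional ``bridge'' edge that compactly certifies reachability from the other $k-1$ chains, totalling fewer than $2|V|$ edges. The delicate part—and what I expect to be the main difficulty of Section~\ref{sec:edge-thinning}—is choosing the bridge edges so that all $k$ paths of $\pathcover$ remain simultaneously realizable in $G'$, and doing so in $O(k^2|V|)$ time given the MPC.
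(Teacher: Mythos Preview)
Your derivation of the corollary in the first two paragraphs is correct and is exactly how the paper obtains it: run \Cref{thm:improved-mpc-flow} to get an MPC $\pathcover$ of size $k$, apply \Cref{thm:edge-thinning} to obtain a size-$k$ path cover $\pathcover'$ using fewer than $2|V|$ distinct edges, and set $E'$ to be those edges; the width argument via Dilworth is the same.

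Your third paragraph, however, mis-guesses the mechanism of Section~\ref{sec:edge-thinning}. The paper does \emph{not} build a chain cover plus one ``bridge'' edge per vertex. Instead it adapts Schrijver's technique for perfect matchings in regular bipartite graphs: vertices of degree $>2$ in the support graph $G'=(V,E_{\pathcover'})$ are colored \emph{red}, the rest \emph{blue}, and the algorithm repeatedly finds an undirected cycle $C$ of red edges. It then \emph{splices} paths of $\pathcover'$ along $C$ so that the multiplicities of edges in one orientation increase by one and those in the other orientation decrease by one, repeating until some edge reaches multiplicity $0$ and drops out of the support. A potential $\Phi=\sum_e \mu(e)^2$ increases by at least $|C|$ per splice and is bounded by $2t^2|V|$, which yields the $O(t^2|V|)$ time. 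Once no red cycle remains, the red subgraph is a forest and the blue subgraph a disjoint union of paths and cycles, from which a direct count gives $|E_{\pathcover'}|<2|V|$. So your instinct that ``the edges of $\pathcover$ alone do not suffice'' is right, but the fix is iterative cycle-cancellation on the support, not a static chain-plus-bridge selection.
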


The main ingredient to obtain this result is an algorithm for transforming any path cover into one of the same size using less than~$2|V|$ distinct edges, a surprising structural result.

\begin{restatable}{thm}{edgethinning}
\label{thm:edge-thinning}
Let $G = (V,E)$ be a DAG, and let $\pathcover, |\pathcover|=t$ be a path cover of $G$. We compute, in $O(t^2|V|)$ time, a path cover $\pathcover',|\pathcover'|=t$, whose number of \emph{distinct} edges is less than $2|V|$.
\end{restatable}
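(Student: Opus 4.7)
The plan is to view the path cover $\pathcover$ as an integer flow in $G$ and repeatedly reduce its edge support until it has fewer than $2|V|$ edges. Concretely, I would encode $\pathcover$ as $x \colon E \to \Nzero$ with $x_e$ equal to the number of paths of $\pathcover$ using $e$; the distinct edge set $E_\pathcover$ is then exactly $\mathrm{supp}(x)$, and $x$ is an integral feasible flow of value $t$ in the standard minimum-flow reduction used throughout the paper, satisfying the per-vertex coverage lower bound of one. The goal becomes: produce $x'$ of value $t$ that is still decomposable into $t$ vertex-covering paths of $G$, but with $|\mathrm{supp}(x')| < 2|V|$.

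The main structural ingredient is a \emph{local consolidation lemma} stating that whenever $|\mathrm{supp}(x)| \geq 2|V|$, there is a vertex $v$ together with two distinct incoming support edges $(u_a,v)$, $(u_b,v)$ at $v$ for which $u_a$ reaches $u_b$ in $G$. Existence follows from the counting identity $\sum_v \mathrm{in}_H(v) = |\mathrm{supp}(x)| \geq 2|V|$, so the average in-degree in the support graph $H$ is at least two, and a case analysis on such a high-in-degree vertex yields a comparable pair $u_a, u_b$. At such a vertex I would reroute a unit of flow from $(u_a,v)$ onto $(u_b,v)$ by pushing it along any path from $u_a$ to $u_b$ and iterate; when the flow on $(u_a,v)$ drops to zero, the edge leaves the support, so each consolidation reduces $|\mathrm{supp}(x)|$ by at least one. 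Processing vertices in topological order, and spending $O(t)$ time per swap with at most $O(t)$ swaps per vertex (there are never more than $t$ support edges incoming to any vertex, since each carries positive integer flow bounded by $t$), yields the target running time $O(t^2|V|)$.

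The main obstacle is maintaining coverage during the reroutings: pushing all flow off $(u_a,v)$ might leave $u_a$ with no path visiting it. The delicate part of the proof will be to show that each consolidation step can be implemented so that coverage is preserved — either by routing the rerouted path $P_a$ along a $u_a$-to-$u_b$ path that still visits $u_a$ (so $u_a$ stays on $P_a$), or by triggering a compensating swap among the $t$ paths so that $u_a$ is absorbed into another path of the cover. Handling this case analysis while keeping each swap cost at $O(t)$ and avoiding cascades that would blow up the amortised time is where the technical effort concentrates; the final counting bound $|E_{\pathcover'}| < 2|V|$ then follows automatically from the termination condition of the consolidation process.
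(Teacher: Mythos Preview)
Your central structural lemma --- that whenever $|\mathrm{supp}(x)| \ge 2|V|$ some vertex $v$ has two \emph{comparable} support in-neighbours $u_a,u_b$ --- is false, and with it the consolidation scheme collapses. Take a complete layered DAG with $m+1$ layers of $n$ vertices each, every vertex in layer $j$ joined to every vertex in layer $j+1$. All in-neighbours of any vertex lie in a single layer and are pairwise incomparable, so no comparable pair ever exists. Yet one can build a path cover whose support has at least $2|V|$ edges: take the $n$ ``straight'' paths (vertex $i$ in every layer) to guarantee coverage, and add further full-length paths whose edge sets are pairwise disjoint and disjoint from the straight edges. With, say, $n=m=10$ and $t=23$ one gets $|V|=110$ and support $230>2|V|$. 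Your averaging argument only yields a vertex of in-degree $\ge 2$; it says nothing about comparability, and in layered DAGs comparability is simply unavailable.

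Even if the lemma held, the remaining steps do not close. Rerouting a unit of flow along an arbitrary $u_a$-to-$u_b$ path may insert fresh edges into the support, so a consolidation need not decrease $|\mathrm{supp}(x)|$ until $(u_a,v)$ is drained, and may meanwhile increase it; restricting to paths inside the current support would require a further existence argument you do not give. The coverage problem you flag as the ``main obstacle'' is left open. And a swap is not an $O(t)$ operation: the $u_a$-to-$u_b$ path you push along can have length $\Theta(|V|)$, so the $O(t^2|V|)$ bound does not follow from $O(t)$ swaps per vertex. The paper avoids all of this by working in the \emph{undirected} support graph. It looks for undirected cycles among high-degree (``red'') vertices, splices the cover around such a cycle shifting multiplicity from the lighter orientation to the heavier one (so no new support edges appear and some edge eventually drops to multiplicity zero), and bounds the total splicing work by the potential $\sum_e \mu(e)^2 \le 2t^2|V|$. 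Crucially, it is the degree condition $\deg(v)>2$, not any reachability relation between in-neighbours, that guarantees a removed edge leaves its endpoints covered.
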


We obtain \Cref{cor:main-sparsification} by using \Cref{thm:edge-thinning} 
with an MPC and defining $E'$ as the edges in $\pathcover'$. Our approach adapts the techniques used by Schrijver~\cite{schrijver1998bipartite} for finding a perfect matching in a regular bipartite graph. In our algorithm, we repeatedly search for undirected cycles~$C$ of edges joining vertices of high degree (in the graph induced by the path cover), and \emph{splice} paths along $C$ (according to the \emph{multiplicty} of the edges of $C$) to remove edges from the path cover. By \emph{splicing} we refer to the general process of reconnecting paths in a path cover so that (after splicing) at least one of them contains a certain path $D$ as a subpath, while working in time proportional to $|D|$ (see \Cref{sec:splicing}).


\section{Preliminaries} \label{sec:preliminaries}

\subsection{Basics}\label{sec:basics}
A directed graph is a tuple $G = (V, E)$, where $V$ is a set of vertices and $E$ is a set of edges, $E\subseteq V^2$. For an edge $e=(u,v) \in E$, it is said that $e$ goes \emph{from} $u$ \emph{to} $v$, that $u$ and $v$ are \emph{neighbors}, and that $e$ is \emph{incident} to both $u$ and $v$. In particular, $u$ is an \emph{in-neighbor} of $v$, $v$ is an \emph{out-neighbor} of $u$, $e$ is an edge \emph{incoming} to $v$ and \emph{outgoing} from $u$. We denote $N^+(v)$ ($N^-(v)$) to the set of out-neighbors (in-neighbors) of $v$, and by $I^+(v)$ ($I^-(v)$) the edges outgoing (incoming) from (to) $v$. A graph $S = (V_S, E_S)$ is said to be a \emph{subgraph} of $G$ if $V_S \subseteq V$ and $E_S \subseteq E$. If $V_S = V$ it is called \emph{spanning} subgraph. If $V' \subseteq V$, then $G[V']$ is the subgraph of $G$ \emph{induced} by $V'$, defined as $G[V'] = (V', E_{V'})$, where $E_{V'} = \{(u, v) \in E ~:~ u,v \in V'\}$. A \emph{path} $P$ in $G$ is a sequence of vertices $v_1, \ldots , v_{\ell}$ of $G$, such that $(v_i, v_{i+1}) \in E$, for all $i \in [1\ldots \ell-1]$, and $v_{i}\not= v_{j}$, for all $i\not=j$. For every $i,j \in\{1,\ldots,\ell\}, i\le j,$ $v_{i},\ldots,v_j$ is a \emph{subpath} of $P$. If $v_{1} = v_{\ell}$ it is called \emph{cycle}, and we denote it by $C$. If $\ell = |P|\ge 2$ it is said that the path is \emph{proper}. A \emph{directed acyclic graph} (DAG) is a directed graph without proper cycles. A \emph{topological ordering} of a DAG is a total order of $V$, $v_1,\ldots, v_{|V|}$, such that for all $(v_{i}, v_{j}) \in E$, $i < j$. A topological ordering can be computed in $O(|V|+|E|)$ time~\cite{kahn1962topological,tarjan1976edge}. If there exists a path $P = v_1, \ldots , v_{\ell}$ in $G$, with $u = v_1$ and $v = v_{\ell}$, it is said that $u$ \emph{reaches} $v$. A \emph{path cover} $\pathcover$ of $G$ is a set of paths such that every vertex $v\in V$ appears in some path of $\pathcover$. The \emph{size} $|\pathcover|$ is the number of paths of $\pathcover$, and its \emph{length} $||\pathcover||$ is the sum of the lengths of its paths, that is, $\sum_{P\in\pathcover} |P|$. If $\pathcover$ has minimum size among all path covers, then it is a \emph{minimum path cover} (MPC), and its size corresponds to the \emph{width} of $G$, that is, $\width(G) = \min_{\pathcover, \text{path cover}} |\pathcover|$ . An \emph{antichain} $A$ is a set of vertices such that for each $u,v \in A$ $u\not=v$ $u$, does not reach $v$, a \emph{maximum antichain} is an antichain of maximum size. Dilworth's theorem~\cite{dilworth2009decomposition} states that the size of a maximum antichain equals the size of an MPC. The \emph{multiplicity} of an edge $e\in E$ with respect to a set of paths $\pathcover$, $\mu_{\pathcover}(e)$ (only $\mu(e)$ if $\pathcover$ is clear from the context), is defined as the number of paths in $\pathcover$ that contain $e$, $\mu_{\pathcover}(e) = |\{P\in\pathcover\mid e \in P\}|$.

In our algorithm we work with subgraphs induced by a consecutive subsequence of vertices in a topological ordering. As such, the following lemma, proven by Cáceres et.al~\cite{caceres2021a}, shows that we can bound the width of these subgraphs by $k = \width(G)$.

\begin{restatable}[\cite{caceres2021a}]{lemma}{topologicalDoNotIncreaseWidth}
    \label{topologicalDoNotIncreaseWidth}
    Let $G = (V, E)$ be a DAG, and $v_1, \ldots , v_{|V|}$ a topological ordering of its vertices. Then, for all $i, j \in [1\ldots |V|], i \le j$, $\width(G_{i,j}) \le \width(G)$, with $G_{i,j} := G[\{v_{i}, \ldots , v_{j}\}]$.
\end{restatable}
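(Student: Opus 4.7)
The plan is to prove the lemma by directly restricting an MPC of $G$ to the induced subgraph $G_{i,j}$ and showing that this yields a valid path cover of $G_{i,j}$ of no greater size.

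First, I would take an MPC $\pathcover$ of $G$, so $|\pathcover| = \width(G) = k$. The key structural observation is that any path $P \in \pathcover$, say $P = v_{a_1}, v_{a_2}, \ldots, v_{a_\ell}$, satisfies $a_1 < a_2 < \cdots < a_\ell$, since $v_1, \ldots, v_{|V|}$ is a topological ordering of $G$ and each edge $(v_{a_s}, v_{a_{s+1}})$ of $P$ must respect it. Consequently, the set of indices $s$ with $a_s \in [i, j]$ forms a \emph{contiguous} range $\{m, m+1, \ldots, m'\}$ (if non-empty): once we enter the window $[i,j]$ and then leave it, we cannot re-enter because the $a_s$ are strictly increasing.

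Second, I would define $P' := v_{a_m}, v_{a_{m+1}}, \ldots, v_{a_{m'}}$, the restriction of $P$ to the window, whenever this range is non-empty. Each edge $(v_{a_s}, v_{a_{s+1}})$ of $P'$ belongs to $E$ (as $P$ is a path in $G$) and has both endpoints in $\{v_i, \ldots, v_j\}$, hence belongs to the edge set of $G_{i,j}$. Therefore $P'$ is a valid path in $G_{i,j}$. Let $\pathcover' := \{P' : P \in \pathcover, P' \neq \emptyset\}$.

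Third, I would verify that $\pathcover'$ covers every vertex of $G_{i,j}$. Any $v \in \{v_i, \ldots, v_j\}$ belongs to some path $P \in \pathcover$ (since $\pathcover$ covers $V$), and then $v$ lies in the restriction $P'$ by construction. Thus $\pathcover'$ is a path cover of $G_{i,j}$, and since $|\pathcover'| \le |\pathcover| = \width(G)$, we conclude $\width(G_{i,j}) \le \width(G)$.

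There is essentially no hard step here; the only thing to be careful about is justifying that the restriction of a path to a consecutive range of a topological order stays a (contiguous) subpath, which follows immediately from the monotonicity of the indices $a_s$ along $P$.
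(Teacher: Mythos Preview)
Your proof is correct. Note, however, that the paper does not actually include a proof of this lemma: it is stated with a citation to \cite{caceres2021a} and explicitly attributed there (``the following lemma, proven by C\'aceres et al.''). Your argument---restricting each path of an MPC of $G$ to the window $\{v_i,\ldots,v_j\}$ and using monotonicity of topological indices along a path to conclude that the restriction is a contiguous subpath---is the standard and natural proof of this fact, and is presumably what the cited reference does as well.
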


\subsection{Minimum Flow}\label{sec:min-flow}
The problem of minimum flow with lower and upper bounds on edges has been studied before (see for example \cite{ahujia1993network,ciurea2004sequential,bang2008digraphs}). The concept of maximum ow-cuts has been studied before but only in the context of some specific problem solved by a reduction to minimum flow (see for example \cite{mohring1985algorithmic,pijls2013another,marchal2018parallel}). For completeness, in this section we include a proof for the case when only lower bounds on the edges are considered. The proof shown is an adaptation of the proof of the maximum flow/minimum cut theorem given in~\cite{williamson2019network}.

Given a (directed) graph $G = (V, E)$, a source $s \in V$, a sink $t \in V$, and a function of \emph{lower bounds} or \emph{demands} on its edges $d: E \rightarrow \Nzero$, an $st$-\emph{flow} (or just \emph{flow} when $s$ and $t$ are clear from the context) is a function on the edges $f: E \rightarrow \Nzero$, satisfying $f(e) \ge d(e)$ for all $e \in E$ ($f$ \emph{satisfies the demands}) and $\sum_{e \in I^-(v)} f(e) = \sum_{e \in I^+(v)} f(e)$ for all $v \in V \setminus \{s,t\}$ (\emph{flow conservation}). If a flow exists, the tuple $(G, s, t, d)$ is said to be a \emph{flow network}. The \emph{size} of $f$ is the net amount of flow exiting $s$, formally $|f| = \sum_{e \in I^+(s)} f(e) - \sum_{e \in I^-(s)} f(e)$. An $st$-\emph{cut} (or just \emph{cut} when $s$ and $t$ are clear from the context) is a partition $(S , T)$ of $V$ such that $s \in S$ and $t \in T$. An edge $(u, v)$ \emph{crosses} the cut $(S, T)$ if $u\in S$ and $v \in T$, or vice versa. If there are no edges \emph{crossing} the cut from $T$ to $S$, that is, if $\{(u,v) \in E \mid u \in T, v \in S\} = \emptyset$, then $(S, T)$ is a \emph{one-way cut} (ow-cut). The \emph{demand} of an ow-cut is the sum of the demands of the edges crossing the cut, formally $d((S,T)) = \sum_{e = (u, v), u \in S, v \in T} d(e)$, and we also denote it $d(S)$. An ow-cut whose demand is maximum among the demands of all ow-cuts is a \emph{maximum ow-cut}.

From these definitions the following properties can be derived:
\begin{basicprop}\label{prop:basicsMinFlow}
For a flow network $(G, s, t, d)$:
\begin{enumerate}
    \item[(a)] For any cut $(S,T)$ and flow $f$:
    \begin{align*}
        |f| = f(S) := \sum_{e = (u, v) \in E, u \in S, v \in T} f(e) - \sum_{e = (v, u) \in E, u \in S, v \in T} f(e).
    \end{align*}
    \item[(b)] For any ow-cut $(S,T)$ and flow $f$, $|f| \ge d((S,T))$.
    
\end{enumerate}
\end{basicprop}
\begin{proof}
  \begin{enumerate}
    \item[(a)] By definition of size, flow conservation and the fact that $(S, T)$ is a partition of $V$.
    \begin{align*}
        |f| &= \sum_{e \in I^+(s)} f(e) - \sum_{e \in I^-(s)} f(e)\\
        &= \left(\sum_{e \in I^+(s)} f(e) - \sum_{e \in I^-(s)} f(e)\right) + \sum_{u \in S\setminus \{s\}}\underbrace{\left(\sum_{e \in I^+(u)} f(e) - \sum_{e \in I^-(u)} f(e)\right)}_{=0}\\
        &=\sum_{u \in S}\left(\sum_{e \in I^+(u)} f(e) - \sum_{e \in I^-(u)} f(e)\right)\\
        &= \sum_{u \in S}\left(\sum_{e = (u, u') \in E, u' \in S} f(e) + \sum_{e = (u, v) \in E, v \in T} f(e)   - \sum_{e = (u', u) \in E, u' \in S} f(e) - \sum_{e = (v, u) \in E, v \in T} f(e)\right)\\
        &= \sum_{u \in S}\left(\sum_{e = (u, v) \in E, v \in T} f(e)   - \sum_{e = (v, u) \in E, v \in T} f(e)\right)\\
        &= \sum_{e = (u, v) \in E, u \in S, v \in T} f(e) - \sum_{e = (v, u) \in E, u \in S, v \in T} f(e)
    \end{align*}
    
    \item[(b)] By using the previous property, the fact that ow-cuts do not have edges crossing from $T$ to $S$ and the lower bounds on the edges.
    \begin{align*}
        |f| &= \sum_{e = (u, v) \in E, u \in S, v \in T} f(e) - \underbrace{\sum_{e = (v, u) \in E, u \in S, v \in T} f(e)}_{=0} \\
        &= \sum_{e = (u, v) \in E, u \in S, v \in T} f(e)\\
        & \ge \sum_{e = (u, v) \in E, u \in S, v \in T} d(e)\\
        &= d((S,T))
    \end{align*}
    
\end{enumerate}  
\end{proof}

Given a \emph{flow network} $(G, s, t, d)$, the problem of \emph{minimum flow} consists of finding a flow $f^*$ of minimum size $|f^*|$ among the flows of the network, such a flow is a \emph{minimum flow}. If a minimum flow exists, then $(G, s, t, d)$ is a \emph{feasible} flow network. The following theorem relates the maximum demand of a ow-cut with the size of a minimum flow~\cite{ahujia1993network,ciurea2004sequential,bang2008digraphs}.

\begin{restatable}{thm}{maxowcutminflow}\label{thm:maxowcut-minflow}
Let $(G, s, t, d)$ be a feasible flow network. Then,
\begin{align*}
    \max_{(S, T), st\text{-}ow\text{-}cut} d((S,T)) = \min_{f, st\text{-}flow} |f|.
\end{align*}
\end{restatable}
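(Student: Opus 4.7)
The plan is to establish the equality by the usual max-flow/min-cut scheme, adapted to the minimum-flow setting: weak duality comes for free from \textbf{Basic Property}~(b), so the real work is to exhibit an ow-cut whose demand matches the value of a minimum flow. I will do this by defining a ``decrementing'' residual graph and reading the cut off from the reachability structure when no further decrement is possible.

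More concretely, let $f^{*}$ be a minimum flow and define a residual graph $\residual{}{f^{*}}$ on vertex set $V$ as follows. For every original edge $e=(u,v)\in E$ I add a ``backward'' residual arc $(v,u)$ whenever $f^{*}(e)>d(e)$, with capacity $f^{*}(e)-d(e)$ (this represents the possibility of pushing flow back along $e$), and a ``forward'' residual arc $(u,v)$ with infinite capacity (this represents the possibility of absorbing compensation by increasing $f$ along $e$). A directed path from $t$ to $s$ in $\residual{}{f^{*}}$ is a \emph{decrementing path}: pushing a positive amount of flow along it preserves demand satisfaction and flow conservation while strictly decreasing $|f^{*}|$. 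Hence, minimality of $f^{*}$ forces the absence of any such $t$-to-$s$ path in $\residual{}{f^{*}}$.

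Now let $T$ be the set of vertices reachable from $t$ in $\residual{}{f^{*}}$ and $S = V\setminus T$. Since no decrementing path exists, $s\in S$, so $(S,T)$ is an $st$-cut. I then check the two closure properties that pin down the cut. First, if some edge $(u,v)\in E$ had $u\in T$ and $v\in S$, the forward residual arc $(u,v)$ would give $v\in T$, a contradiction; so $(S,T)$ is an ow-cut. Second, for every $e=(u,v)\in E$ crossing from $S$ to $T$, if $f^{*}(e)>d(e)$ then the backward residual arc $(v,u)$ would put $u\in T$, another contradiction; so $f^{*}(e)=d(e)$ on every edge crossing from $S$ to $T$. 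Plugging these two facts into \textbf{Basic Property}~(a) gives
\begin{align*}
|f^{*}| \;=\; \sum_{\substack{e=(u,v)\in E\\ u\in S,\,v\in T}} f^{*}(e) \;-\; \sum_{\substack{e=(v,u)\in E\\ u\in S,\,v\in T}} f^{*}(e) \;=\; \sum_{\substack{e=(u,v)\in E\\ u\in S,\,v\in T}} d(e) \;=\; d((S,T)).
\end{align*}
Combining this with \textbf{Basic Property}~(b), which says $|f|\ge d((S,T))$ for every flow $f$ and every ow-cut $(S,T)$, yields the claimed equality.

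The routine part is verifying that such an $f^{*}$ exists (guaranteed by feasibility) and that pushing a positive amount along a $t$-to-$s$ residual path genuinely decreases $|f|$ while keeping $f\ge d$ and conservation intact; both follow from the capacity choices on the two types of residual arcs. The main obstacle I expect is getting the direction conventions right: the residual graph is designed for \emph{decrementing}, so the roles of $s$ and $t$ are swapped relative to the classical max-flow argument, and one must be careful that the ``infinite-capacity'' forward arc corresponds to the ability to \emph{increase} the original flow (which is always possible since we only have lower bounds) so that the pushed decrement can be balanced back along the path without violating conservation.
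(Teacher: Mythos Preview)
Your proof is correct and follows essentially the same approach as the paper: build a residual graph encoding ``decrementable'' directions, argue that a minimum flow admits no decrementing path, and read off an ow-cut from the reachability structure, showing it is tight by combining \textbf{Basic Properties}~(a) and~(b). The only cosmetic difference is orientation: the paper's residual has arcs $(u,v)$ whenever $(v,u)\in E$ and whenever $(u,v)\in E$ with $f(u,v)>d(u,v)$, so decrementing paths run from $s$ to $t$ and the cut is taken as $S=\{\text{reachable from }s\}$; your residual is exactly the reverse of this, so your decrementing paths run from $t$ to $s$ and you take $T=\{\text{reachable from }t\}$---the two constructions are mirror images and the verification steps line up one-to-one.
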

\begin{proof}
Given a flow $f$ in $(G, s, t, d)$, the \emph{residual network} of $G$ with respect to $f$ is defined as $\residual{G}{f} = (V, E_f)$ with $E_f = \{(u, v) \mid (v, u) \in E\} \cup \{e \mid f(e) > d(e)\}$, that is, the \emph{reverse edges} of $G$, plus the edges of $G$ on which the flow can be decreased without violating the demands (\emph{direct edges})\footnote{In the literature of flows, a \emph{residual} capacity/demand is defined on the edges of $\residual{G}{f}$. We do not define these capacities/demands since our approaches to MPC do not exploit them.}. Note that a path from $s$ to $t$ in $\residual{G}{f}$ can be used to create another flow $f'$ of smaller size by increasing flow on reverse edges and decreasing flow on direct edges of the path, such a path its is called \emph{decrementing path}. Therefore, for a minimum flow $f^*$ there is no decrementing path in $\residual{G}{f^*}$. Taking $S$ as the set of vertices reachable from $s$ in $\residual{G}{f^*}$ (and $T = V\setminus S$), $(S,T)$ is an ow-cut ($s \in S$, $t \in T$, and there is no edge in $G$ from $T$ to $S$, since there is no edge in the opposite direction in $\residual{G}{f^*}$ by definition of $S$). Moreover, for every edge $e\in E$ from $S$ to $T$, $f(e) = d(e)$, since otherwise this edge would appear in $\residual{G}{f^*}$, which is not possible by definition of $S$. Therefore, the inequality of Property (b) is an equality and $|f^*| = d((S,T))$. Finally, since the demand of any ow-cut is a lower bound for the size of the flow, $(S,T)$ is a maximum ow-cut.
\end{proof}

\subsection{MPC in DAGs through Minimum Flow}\label{sec:minflow-reduction}
The reduction from MPC in DAGs to minimum flow has been stated several times in the literature~\cite{ntafos1979path,mohring1985algorithmic,gavril1987algorithms,Jagadish90,ciurea2004sequential,rademaker2012optimal,pijls2013another,marchal2018parallel,makinen2019sparse}, we include it here for completeness.

\begin{figure}[t]
    \centering
    \begin{subfigure}[b]{0.32\textwidth}
        \centering
        \includegraphics[width=0.95\textwidth]{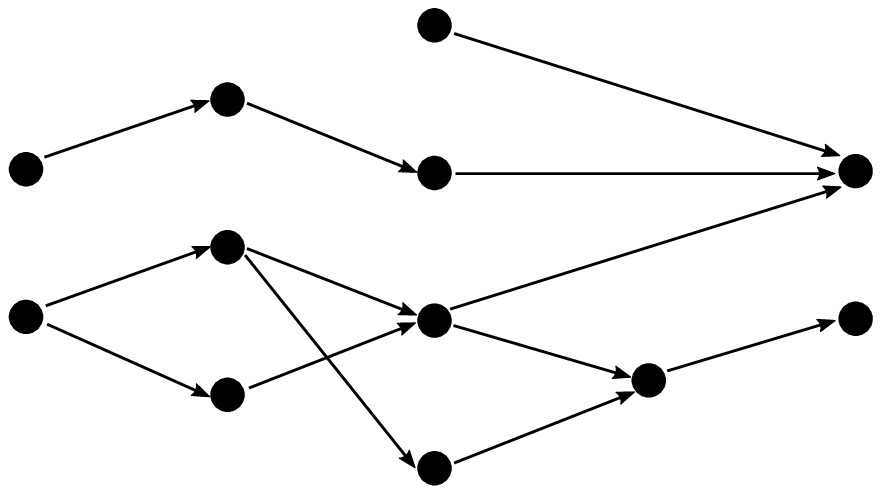}
        \caption[]%
        {{\small Example DAG}}
        \label{subfig:dag}
    \end{subfigure}
    \begin{subfigure}[b]{0.32\textwidth}
        \centering
        \includegraphics[width=0.95\textwidth]{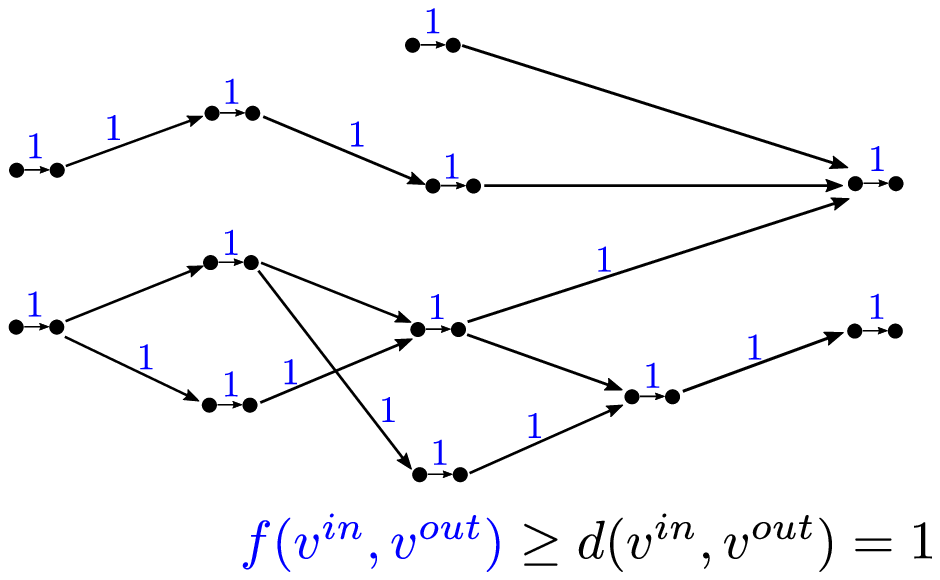}
        \caption[]%
        {{\small Flow reduction and min flow}}
        \label{subfig:minflow}
    \end{subfigure}
    \begin{subfigure}[b]{0.32\textwidth}
        \centering
        \includegraphics[width=0.95\textwidth]{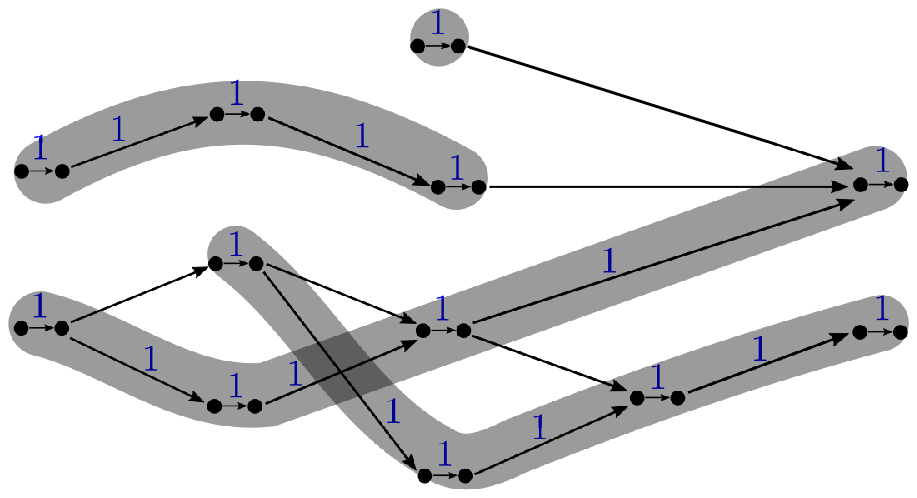}
        \caption[]%
        {{\small Flow decomposition}}
        \label{subfig:decomposition}
    \end{subfigure}
    \caption[]
    {\small Minimum Path Cover in DAGs reduced to Minimum Flow. \Cref{subfig:dag} shows an example DAG $G$. \Cref{subfig:minflow} shows the corresponding flow reduction $\flowG$ as well as a minimum flow $f: \flowE \to \mathbb{N}_0$. Vertices $s$ and $t$, and flow zero values were removed for simplicity. \Cref{subfig:decomposition} shows a flow decomposition of $f$ (each path highlighted in gray), which corresponds to an MPC of $G$.}
    \label{fig:minFlowReduction}
\end{figure}

The problem of finding an MPC in a DAG $G = (V, E)$ can be solved by a reduction to the problem of minimum flow on an appropriate feasible flow network $(\flowG = (\flowV, \flowE), s, t, d)$, defined as: $\flowV = \{s, t\} \cup \{v^{in} \mid v \in V\} \cup \{v^{out} \mid v \in V\}$ ($\{s,t\}\cap V = \emptyset$), that is, the source $s$, the sink $t$ and two vertices $v^{in}, v^{out}$ per a \emph{split} of every vertex $v \in V$; $\flowE = \{(s, v^{in}) \mid v \in V\} \cup \{(v^{out}, t) \mid v \in V\} \cup \{(v^{in}, v^{out}) \mid v \in V\} \cup \{(u^{out}, v^{in}) \mid (u,v) \in E\}$, that is, $s$ is connected to all vertices $v^{in}$, $t$ from all vertices $v^{out}$, the split vertices are connected from $v^{in}$ to $v^{out}$, and also the topology of $G$ is represented by connecting from $u^{out}$ to $v^{in}$ if $(u,v) \in E$. The demands are defined as $d(e) = 1$ if $e = (v^{in}, v^{out})$ for some $v \in V$ and $0$ otherwise. The tuple $(\flowG, s, t, d)$ is the \emph{flow reduction} of $G$. Note that $|\flowV| = 2|V| + 2 = O(|V|)$, $|\flowE| = 3|V| + |E| = O(|V| + |E|)$, and $\flowG$ is a DAG. See \Cref{fig:minFlowReduction}.

A path cover $\pathcover = P_1, \ldots, P_\ell$ of $G$ directly translates into a flow $f$ for $\flowG, s, t, d$ of size $|f| = \ell$. Starting with a function $f(e) = 0, e \in \flowE$ and iteratively increasing it. For every path $P_i$, it suffices to attach $s$ and $t$ at the ends and to replace every $v \in P_i$ by $v^{in}, v^{out}$, then the flow through the edges of the resulting path is increased by $1$. Since the flow is increased through paths from $s$ to $t$ this procedure maintains the flow conservation constrains, furthermore, since $\pathcover$ is a path cover, the flow through every edge $(v^{in}, v^{out})$ is increased by at least $1$ for every $v \in V$, thus $f$ corresponds to a flow of size $|\pathcover|$. 

Moreover, every flow $f$ of $(\flowG, s, t, d)$ can be decomposed into $|f|$ paths corresponding to a path cover of $G$. Iteratively, starting from $f$, a path $P$ from $s$ to $t$ whose edges have positive flow is found, and then the flow on the edges of $P$ is decreased by $1$. By flow conservation, $P$ can be found while $|f| > 0$, and since $|f|$ is decreased by $1$ at each iteration, exactly $|f|$ paths are obtained. By construction of $\flowG$ these paths can easily be transformed into a path cover of size $\ell$ of $G$, by removing $s$ and $t$ and merging the split vertices. Such a decomposition can be computed in time proportional to the total length of the MPC $\pathcover$, $O(||\pathcover||)$(see e.g.~\cite[Lemma 1.11 (full version)]{kogan2022beating}).

As such, a minimum flow of $(\flowG, s, t, d)$ provides an MPC of $G$. Moreover, by noting that a vertex cannot belong to more than $|V|$ paths of an MPC, one can add capacity $|V|$ to every edge, which allows to solve the problem by a well known reduction to \emph{maximum flow}~\cite[Theorem 3.9.1]{bang2008digraphs}.

\begin{restatable}{thm}{reductiontomaxflow}\label{thm:reduction-to-max-flow}
Given a DAG $G = (V,E)$, we compute an MPC $\pathcover$ of $G$ in time\\ $O(\texttt{MaxFlow}(|V|, |E|, |V|) + ||\pathcover||)$, where $\texttt{MaxFlow}(n, m, c)$ is the running time of a maximum flow algorithm on a network of $n$ vertices, $m$ edges and maximum capacity $c$.
\end{restatable}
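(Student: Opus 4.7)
The plan is to chain together three steps already laid out in \Cref{sec:minflow-reduction}: (i) the MPC-to-minimum-flow reduction, (ii) a minimum-flow-to-maximum-flow reduction, and (iii) a flow decomposition into paths. First, I would construct the flow reduction $(\flowG, s, t, d)$ of $G$ as described in \Cref{sec:minflow-reduction}. This can be done in $O(|V|+|E|)$ time and yields a DAG with $|\flowV| = O(|V|)$ vertices and $|\flowE| = O(|V|+|E|)$ edges, where every edge has demand $0$ or $1$.

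Next, since no vertex of $G$ can appear in more than $|V|$ paths of an MPC, I would install an upper capacity of $|V|$ on every edge of $\flowE$ without changing the minimum flow value. I then reduce minimum flow with lower bounds and capacities to ordinary maximum flow on an auxiliary network of the same asymptotic size and maximum capacity $|V|$, using the standard textbook reduction cited in the excerpt (\cite[Theorem 3.9.1]{bang2008digraphs}): briefly, one first finds a feasible flow by saturating the demands through an auxiliary super-source and super-sink, and then cancels flow by running a max-flow on the residual graph from $t$ to $s$. This transformation preserves the $O(|V|)$/$O(|V|+|E|)$/$|V|$ bounds on vertices, edges and capacities, so it costs $O(\texttt{MaxFlow}(|V|,|E|,|V|))$ time and produces a minimum flow $f^*$ of $(\flowG,s,t,d)$ whose value equals $\width(G)$ by \Cref{thm:maxowcut-minflow} and the correspondence between path covers and flows.

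Finally, I would apply the iterative flow decomposition described in \Cref{sec:minflow-reduction}: repeatedly trace an $s$-to-$t$ path $P$ along edges with positive flow, decrement $f^*$ by $1$ along $P$, and collect the path. By flow conservation such a path always exists while $|f^*|>0$, and each extracted path contributes its length to $||\pathcover||$; using the standard implementation referenced as~\cite[Lemma 1.11 (full version)]{kogan2022beating} this phase runs in $O(||\pathcover||)$ time. Merging each pair $v^{in}, v^{out}$ and stripping $s,t$ converts the extracted paths into an MPC of $G$, giving the claimed total time $O(\texttt{MaxFlow}(|V|,|E|,|V|) + ||\pathcover||)$.

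The only step that is not essentially bookkeeping is the minimum-flow-to-maximum-flow reduction, where the main thing to verify is that the auxiliary network preserves the asymptotic vertex/edge count and keeps maximum capacity at $|V|$, so that a single black-box call to a max-flow algorithm suffices; everything else follows directly from the correspondence between path covers and flows established in \Cref{sec:minflow-reduction}.
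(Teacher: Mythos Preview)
Your proposal is correct and follows essentially the same approach as the paper: the paper's justification is the surrounding discussion in \Cref{sec:minflow-reduction}, which builds the flow reduction, caps every edge at $|V|$ (since no vertex lies on more than $|V|$ paths of an MPC), invokes the textbook min-flow-to-max-flow reduction~\cite[Theorem 3.9.1]{bang2008digraphs}, and then decomposes the resulting minimum flow into paths in $O(||\pathcover||)$ time~\cite[Lemma 1.11 (full version)]{kogan2022beating}. Your sketch spells out the same three steps with the same citations, so there is nothing to add.
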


The set of edges of the form $(v^{in}, v^{out})$ crossing a maximum ow-cut corresponds to a maximum antichain of $G$ (by merging the edges $(v^{in}, v^{out})$ into $v$, see~\Cref{sec:antichain-structure}). By further noting that $S = \{v \in \flowV \mid s\text{ reaches } v\text{ in }\residual{\flowG}{f^*}\}$ is a maximum ow-cut when $f^*$ is a minimum flow of $(\flowG, s, t, d)$, we obtain the following result.

\begin{restatable}{lemma}{fastantichainminflow}
\label{thm:main-antichain-minflow}
Given the flow reduction $(\flowG = (\flowV, \flowE), s, t, d)$ of a DAG $G = (V,E)$ and a minimum flow $f^*:\flowE: \rightarrow \mathbb{N}_0$ of it, we compute a maximum antichain of $G$ in time $O(|V|+|E|)$.
\end{restatable}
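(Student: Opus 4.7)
The plan is to follow the hint from the paragraph preceding the statement: compute the set $S$ of vertices reachable from $s$ in $\residual{\flowG}{f^*}$, let $T = \flowV \setminus S$, and return $A := \{v \in V \mid v^{in} \in S \text{ and } v^{out} \in T\}$.

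For the running time, I will build $\residual{\flowG}{f^*}$ in a single pass over $\flowE$: every edge of $\flowG$ contributes its reverse, and additionally contributes itself whenever $f^*(e) > d(e)$. Since $|\flowV| = O(|V|)$ and $|\flowE| = O(|V|+|E|)$, this construction, the subsequent BFS from $s$ that computes $S$, and the final scan over $V$ that extracts $A$ each take $O(|V|+|E|)$ time.

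For correctness I will first re-run the argument from the proof of \Cref{thm:maxowcut-minflow} to confirm that $(S,T)$ is a maximum $st$-ow-cut. By definition of $S$, no edge of $\residual{\flowG}{f^*}$ leaves $S$; in particular, every edge $(v,u) \in \flowE$ with $u \in S$ satisfies $v \in S$, since otherwise its reverse in the residual would leave $S$. Hence no edge of $\flowG$ goes from $T$ to $S$, so $(S,T)$ is an ow-cut. Moreover, every edge $e \in \flowE$ from $S$ to $T$ must be tight (i.e.\ $f^*(e) = d(e)$), for otherwise $e$ itself would appear as a direct edge in the residual and leave $S$. Combined with \Cref{prop:basicsMinFlow}(a) this yields $|f^*| = d((S,T))$, and then \Cref{prop:basicsMinFlow}(b) forces maximality.

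It remains to identify $A$ with the ow-cut and verify it is a maximum antichain. The only edges of $\flowE$ carrying nonzero demand are the split edges $(v^{in}, v^{out})$, so $d((S,T)) = |A|$. To see that $A$ is an antichain, suppose for contradiction that distinct $u, v \in A$ satisfy that $u$ reaches $v$ in $G$ via $u = w_0, w_1, \ldots, w_\ell = v$. The induced path in $\flowG$ goes $u^{out}, w_1^{in}, w_1^{out}, \ldots, w_{\ell-1}^{out}, v^{in}$; it starts at $u^{out} \in T$ and ends at $v^{in} \in S$, forcing some edge on it to cross from $T$ to $S$, contradicting that $(S,T)$ is an ow-cut. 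Hence $A$ is an antichain of size $|f^*|$; since by Dilworth's theorem together with the reduction in \Cref{sec:minflow-reduction} the size of every antichain is at most the width $|f^*|$, $A$ is maximum. The only subtle step is this path-crossing observation; everything else is a direct residual-graph computation.
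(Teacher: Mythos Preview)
Your proof is correct and follows exactly the approach the paper sketches in the paragraph preceding the lemma: compute $S$ as the vertices reachable from $s$ in $\residual{\flowG}{f^*}$, observe (via the argument from \Cref{thm:maxowcut-minflow}) that $(S,T)$ is a maximum ow-cut, and read off the antichain from the split edges crossing it. The paper does not spell out the details (it just points to \Cref{sec:antichain-structure} and the ow-cut/antichain correspondence), but the details you supply---tightness of crossing edges, $d((S,T))=|A|$ because only split edges carry demand, the path-crossing argument for the antichain property, and the $O(|V|+|E|)$ construction of the residual---are all sound and match the intended reasoning.
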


As such, this derives a minimum (maximum) flow running time algorithm to compute a maximum antichain of a DAG.

\begin{restatable}{corollary}{fastantichainminflowcor}
\label{cor:main-antichain-minflow}
Given a DAG $G = (V,E)$, we compute a maximum antichain of $G$ in time $O(\texttt{MaxFlow}(|V|, |E|, |V|))$, where $\texttt{MaxFlow}(n, m, c)$ is the running time of a maximum flow algorithm on a network of $n$ vertices, $m$ edges and maximum capacity $c$.
\end{restatable}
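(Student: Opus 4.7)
The plan is to simply chain together the reduction established in Section~\ref{sec:minflow-reduction} with Lemma~\ref{thm:main-antichain-minflow}, noting that we never need to actually decompose the flow into paths. First, I would build the flow reduction $(\flowG, s, t, d)$ of $G$ in $O(|V| + |E|)$ time; recall that $|\flowV| = O(|V|)$ and $|\flowE| = O(|V|+|E|)$.

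Next, I would compute a minimum flow $f^*$ of $(\flowG, s, t, d)$. Since every vertex of $G$ appears in at most $|V|$ paths of an MPC, it suffices to impose a capacity of $|V|$ on every edge of $\flowG$ and invoke the standard reduction from minimum flow with lower bounds to maximum flow (as alluded to right before Theorem~\ref{thm:reduction-to-max-flow} and in~\cite[Theorem 3.9.1]{bang2008digraphs}). This step runs in time $O(\texttt{MaxFlow}(|V|, |E|, |V|))$.

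Finally, given $f^*$, I would invoke Lemma~\ref{thm:main-antichain-minflow} directly on the pair $((\flowG, s, t, d), f^*)$ to extract a maximum antichain of $G$ in $O(|V| + |E|)$ additional time. Crucially, unlike Theorem~\ref{thm:reduction-to-max-flow}, we do not pay an additive $||\pathcover||$ term, because constructing an antichain from $f^*$ only requires a BFS in the residual graph $\residual{\flowG}{f^*}$ to identify the reachable set $S$ from $s$ and then reading off the edges of the form $(v^{in}, v^{out})$ crossing the cut — no flow decomposition is needed.

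Summing the three steps, the total running time is $O(\texttt{MaxFlow}(|V|, |E|, |V|))$, since the max flow call dominates the linear-time pre- and post-processing (any max flow algorithm must at least read its input, so $\texttt{MaxFlow}(|V|, |E|, |V|) = \Omega(|V|+|E|)$). There is no substantive obstacle here; the only thing to double-check is that the lower-bound-to-capacity reformulation preserves the $O(|V|)$ bound on capacities and the $O(|V|+|E|)$ bound on the number of edges, which is immediate from the construction of $\flowG$.
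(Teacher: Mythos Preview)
Your proposal is correct and follows essentially the same approach as the paper: the corollary is stated without an explicit proof there, being presented as the immediate consequence of combining the minimum-flow-via-maximum-flow reduction (cf.\ Theorem~\ref{thm:reduction-to-max-flow}) with Lemma~\ref{thm:main-antichain-minflow}. Your additional remark that no flow decomposition is needed (hence no $||\pathcover||$ term) is exactly the point that distinguishes this corollary from Theorem~\ref{thm:reduction-to-max-flow}.
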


More generally, every MPC algorithm (despite of its approach) can be used to obtain a maximum antichain, by first computing the corresponding residual.

\begin{restatable}{lemma}{fastantichain}
\label{thm:main-antichain}
Given a DAG $G = (V,E)$ and an MPC $\pathcover$, we compute a maximum antichain of $G$ in time $O(|V|+|E| + ||\pathcover||)$.
\end{restatable}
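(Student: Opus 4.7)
The plan is to reduce to the previous lemma (\Cref{thm:main-antichain-minflow}) by first constructing, from the given MPC $\pathcover$, both the flow reduction $(\flowG, s, t, d)$ and a minimum flow $f^*$ on it, all within the claimed time budget. Concretely, I would build $\flowG$ by the definition in \Cref{sec:minflow-reduction}: create the vertices $s, t, v^{in}, v^{out}$ and edges $(s, v^{in}), (v^{in}, v^{out}), (v^{out}, t)$ for every $v \in V$, together with $(u^{out}, v^{in})$ for each $(u, v) \in E$, with demands $d(v^{in}, v^{out}) = 1$ and $0$ elsewhere. This construction takes $O(|V| + |E|)$ time.

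Next, I would produce the flow $f^* : \flowE \to \mathbb{N}_0$ by initializing $f^*(e) = 0$ and, for each path $P = v_1, \ldots, v_\ell \in \pathcover$, incrementing by one unit the flow values on the edges of the corresponding $s$-$t$ path in $\flowG$, namely $(s, v_1^{in}), (v_1^{in}, v_1^{out}), (v_1^{out}, v_2^{in}), \ldots, (v_\ell^{in}, v_\ell^{out}), (v_\ell^{out}, t)$. Processing path $P$ takes $O(|P|)$ time, so the total work over $\pathcover$ is $O(||\pathcover||)$. The resulting $f^*$ satisfies flow conservation by construction, and satisfies the demand $d(v^{in}, v^{out}) = 1$ because $\pathcover$ is a path cover (every $v$ lies on some $P \in \pathcover$). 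Moreover $|f^*| = |\pathcover|$, and since $\pathcover$ is an MPC, $|\pathcover| = \width(G)$, which by the flow/MPC correspondence equals the size of a minimum flow of $(\flowG, s, t, d)$. Hence $f^*$ is a minimum flow.

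Finally, I would invoke \Cref{thm:main-antichain-minflow} on the pair $((\flowG, s, t, d), f^*)$ to extract a maximum antichain of $G$ in $O(|V| + |E|)$ additional time. Summing the three phases yields the overall bound of $O(|V| + |E| + ||\pathcover||)$.

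I do not foresee a substantive obstacle: the only point that requires care is arguing that the flow produced from $\pathcover$ is indeed \emph{minimum} (as opposed to merely feasible), but this is immediate from \Cref{thm:maxowcut-minflow} together with the fact that $|\pathcover|$ is the width of $G$. Everything else is a routine linear-time traversal, and the $||\pathcover||$ term is unavoidable simply because reading the input MPC already costs that much.
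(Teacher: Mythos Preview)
Your proposal is correct and matches the paper's intended approach: the paper states this lemma immediately after remarking that any MPC can be turned into a maximum antichain ``by first computing the corresponding residual,'' and your plan does exactly that---build $\flowG$, convert $\pathcover$ into a flow of size $|\pathcover|=\width(G)$ (hence minimum), and invoke \Cref{thm:main-antichain-minflow}. The time accounting and the minimality argument are both sound.
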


Since $||\pathcover|| = O(k|V|)$, by combining the previous result with our MPC algorithm we obtain.

\begin{restatable}{corollary}{fastantichaincor}
\label{cor:main-antichain}
Given a DAG $G = (V,E)$, we compute a maximum antichain of $G$ in time $O(k^2|V|+|E|)$.
\end{restatable}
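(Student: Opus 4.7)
The plan is to prove Corollary~\ref{cor:main-antichain} by directly chaining two results already established in the paper: the parameterized linear-time MPC algorithm of Theorem~\ref{thm:improved-mpc-flow}, and the reduction from MPC to maximum antichain given in Theorem~\ref{thm:main-antichain}. So the proof is essentially a complexity computation, and the only real task is to verify that the pieces compose without blowing up the running time.

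First, I would invoke Theorem~\ref{thm:improved-mpc-flow} on the input DAG $G = (V,E)$ to obtain an MPC $\pathcover$ of size $k = \width(G)$ in time $O(k^2|V| + |E|)$. Next, I would feed $(G, \pathcover)$ into the procedure of Theorem~\ref{thm:main-antichain} to extract a maximum antichain in time $O(|V| + |E| + \|\pathcover\|)$. It remains to bound $\|\pathcover\| = \sum_{P \in \pathcover} |P|$: each of the $k$ paths has at most $|V|$ vertices, hence $\|\pathcover\| \le k|V|$. (For an MPC produced through the minimum-flow reduction this is immediate; more generally it follows from the fact that a path is simple.)

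Adding the two running times yields
\begin{equation*}
    O(k^2|V| + |E|) \;+\; O(|V| + |E| + k|V|) \;=\; O(k^2|V| + |E|),
\end{equation*}
since the terms $|V|$, $|E|$, and $k|V|$ are absorbed by $k^2|V| + |E|$. This establishes the claimed bound.

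There is no real obstacle here, as both ingredients are in place. The one subtlety worth stating explicitly in the proof is the bound $\|\pathcover\| \le k|V|$, which ensures that the extraction step does not dominate. Everything else is bookkeeping.
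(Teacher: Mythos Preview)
Your proposal is correct and follows exactly the paper's own justification: invoke Theorem~\ref{thm:improved-mpc-flow} to obtain an MPC in $O(k^2|V|+|E|)$ time, then apply Lemma~\ref{thm:main-antichain} (note it is stated as a lemma, not a theorem) together with the bound $\|\pathcover\| \le k|V|$ to extract the maximum antichain within the same asymptotic budget.
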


\subsection{Transitive sparsification} We say that a spanning subgraph $S = (V, E_S)$ of a DAG $G = (V, E)$ is a \emph{transitive sparsification} of $G$, if for every $u, v \in V$, $u$ reaches $v$ in $S$ if and only if $u$ reaches $v$ in $G$. Since $G$ and $S$ have the same reachability relations on their vertices, they share their antichains, thus $\width(G) = \width(S)$. As such, an MPC of $S$ is also an MPC of $G$, and the edges $E\setminus E_S$ can be safely removed for the purpose of computing an MPC of $G$. If we have a path cover $\pathcover$ of size $t$ of $G$, then we can \emph{sparsify} (remove some transitive edges) the incoming edges of a particular vertex $v$ to at most $t$ in time $O(t+|N^-(v)|)$. If $v$ has more than $t$ in-neighbors then two of them belong to the same path, and we can remove the edge from the in-neighbor appearing first in the path as this is transitive. To efficiently filter out those transitive edges we create an array of $t$ elements initialized as $survivor = (v_{-\infty})^t$, where $v_{-\infty} \not \in V$ is before every $v\in V$ in topological order. Then, we process the edges $(u,v)$ incoming to $v$, we set $i \gets path(u)$ ($path(u)$ gives the ID of some path of $\pathcover$ containing $u$) and if $survivor[i]$ is before $u$ in topological order we replace it $survivor[i] \gets u$. Finally, the edges in the sparsification are $\{(survivor[i], v) \mid i\in\{1,\ldots,t\} \land survivor[i] \not = v_{-\infty}\}$.

\begin{restatable}{obs}{obssparsification}
\label{obs:sparsificationVertex}
Let $G = (V, E)$ be a DAG, $\pathcover$ a path cover, $|\pathcover|=t$, $v$ a vertex of $G$, and $path:V\rightarrow \{1,\ldots,t\}$ a function  that answers in constant time $path(v)$, the ID of some path of $\pathcover$ containing $v$. We can sparsify the incoming edges of $v$ to at most $t$ in time $O(t+|N^-(v)|)$.
\end{restatable}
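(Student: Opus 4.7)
The plan is to turn the algorithm sketched in the paragraph preceding the observation into a correctness proof together with a running-time analysis. The high-level idea is: if two distinct in-neighbors $u_1,u_2$ of $v$ lie on the same path of $\pathcover$, then the edge from the one appearing earlier is transitive and can be removed; hence it suffices to keep, per path ID, only the in-neighbor latest in topological order.

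First I would verify correctness. Every path $w_1,\ldots,w_\ell$ of $\pathcover$ uses edges of $G$, so $w_i$ reaches $w_j$ whenever $i \le j$, and therefore the path order is consistent with any topological ordering of $V$. Consequently, among the in-neighbors of $v$ that share a common path ID $i$, the one that is latest in topological order is also the latest along the path containing them; every earlier in-neighbor $u'$ reaches this latest in-neighbor $u$ through the path, and then reaches $v$ via $(u,v) \in E$, so $(u',v)$ is transitive. Deleting all such earlier edges therefore yields a transitive sparsification (restricted to the edges incoming to $v$). Since the retained edges are indexed by an array of size $t$, the pigeonhole principle gives the bound of at most $t$ surviving edges. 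A minor point, not a real obstacle, is that $path(u)$ returns only one representative path containing $u$; this is enough, because whenever two retained candidates share a value of $path$ they both belong to that single path, which is all the transitivity argument requires.

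For the running time I would argue: initializing the length-$t$ array $survivor$ with the sentinel $v_{-\infty}$ takes $O(t)$ time; each comparison ``is $survivor[i]$ before $u$ in topological order?'' is $O(1)$ assuming vertices carry their topological rank, which we may precompute once for $G$ in $O(|V|+|E|)$ time; processing the $|N^-(v)|$ incoming edges, each with $O(1)$ work, costs $O(|N^-(v)|)$; and the final scan of $survivor$ to emit the retained edges costs $O(t)$. Summing gives the claimed $O(t + |N^-(v)|)$ bound.

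The only part that requires any care is ensuring constant-time topological comparisons and constant-time access to $path$, both of which are explicitly assumed (or obtained by a one-time precomputation outside this observation), so I do not foresee a genuine obstacle. The proof is essentially a verification of the described procedure.
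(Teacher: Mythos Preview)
Your proposal is correct and follows essentially the same approach as the paper: the paper states this observation immediately after describing exactly the $survivor$-array procedure you analyze, and your write-up simply fleshes out the correctness argument (in particular, that path order is consistent with topological order, so comparing topological ranks suffices) and the running-time breakdown that the paper leaves implicit.
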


A \emph{path} function can be computed by scanning the path cover $\pathcover$ in time $O(||\pathcover||)$. If then we apply \Cref{obs:sparsificationVertex} to every vertex we obtain.

\begin{restatable}{lemma}{sparsificationalgorithm}
    \label{sparsification_algorithm}
    Let $G = (V, E)$ be a DAG, and $\pathcover$, $|\pathcover|=t$, be a path cover of $G$. Then, we can sparsify $G$ to $S = (V, E_S)$, such that $\pathcover$ is a path cover of $S$ and $|E_S| \le t|V|$, in $O(|V| + |E| = ||\pathcover||) = O(t|V|+|E|)$ time.
\end{restatable}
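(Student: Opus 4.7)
The plan is to apply \Cref{obs:sparsificationVertex} once per vertex, after a single preprocessing pass that builds the constant-time $path$ oracle from $\pathcover$. First, I would scan each path $P_i$ of $\pathcover$ in turn (for $i = 1, \ldots, t$) and set $path(v) \gets i$ for every $v \in P_i$; when a vertex belongs to several paths I fix a canonical choice (say, keep the first assignment). This pass visits each vertex occurrence in $\pathcover$ exactly once, runs in $O(||\pathcover||)$ time, and stores the result in an array that answers $path(v)$ in $O(1)$.

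Next, I would loop over every vertex $v \in V$ and invoke \Cref{obs:sparsificationVertex}, shrinking $I^-(v)$ to at most $t$ edges in time $O(t + |N^-(v)|)$. Summing over $V$ gives $\sum_{v \in V} O(t + |N^-(v)|) = O(t|V| + |E|)$; adding the preprocessing cost and using $||\pathcover|| \le t|V|$ bounds the total running time by $O(t|V| + |E|)$. The edge bound $|E_S| \le t|V|$ is immediate because each vertex retains at most $t$ in-edges.

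It remains to argue that $\pathcover$ itself is still a path cover of $S$. By the discussion preceding \Cref{obs:sparsificationVertex}, each deletion is \emph{transitive} with respect to $\pathcover$: whenever two in-neighbors $u_1, u_2$ of $v$ share the same $path$-label and $u_1$ precedes $u_2$ topologically, they lie on a common path of $\pathcover$ with $u_1$ before $u_2$, so $u_1$ still reaches $v$ via the surviving edge $(u_2, v)$. Moreover, any edge actually traversed by some $P \in \pathcover$ is preserved: its tail is the immediate predecessor of its head in $P$, and hence the topologically latest vertex of $P$ reaching that head, so the ``survivor'' rule keeps it. The main obstacle is tightening this last claim when a vertex is shared across several paths of $\pathcover$, which is exactly why the preprocessing fixes a canonical $path(v)$; with that convention the argument goes through uniformly.
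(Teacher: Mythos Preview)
Your overall plan---precompute the $path$ oracle by scanning $\pathcover$ and then apply \Cref{obs:sparsificationVertex} to every vertex---matches the paper's one-line argument, and your running-time and edge-count analyses are correct. The transitivity part of your last paragraph is also fine.

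The gap is in the claim that every edge used by $\pathcover$ survives. Your reasoning ``its tail is the immediate predecessor of its head in $P$, and hence the topologically latest vertex of $P$ reaching that head, so the survivor rule keeps it'' conflates the path $P_j$ that contains the edge with the path $P_{path(u)}$ that indexes the survivor slot. The survivor rule keeps, in slot $i$, the topologically latest in-neighbor with $path$-value $i$; if $path(u)=i\neq j$ while $(u,v)\in P_j$, then $u$ competes against other members of $P_i$, not of $P_j$, and can lose. The ``first assignment'' convention does not avoid this: take $V=\{u,v_1,v_2\}$ in this topological order, $E=\{(u,v_1),(u,v_2),(v_1,v_2)\}$, and $\pathcover=\{P_1=(u,v_1),\,P_2=(u,v_2)\}$. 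First assignment gives $path(u)=path(v_1)=1$, so when sparsifying $N^-(v_2)=\{u,v_1\}$ both in-neighbors land in slot $1$, $survivor[1]=v_1$, and the path-cover edge $(u,v_2)$ is deleted; hence $P_2$ is no longer a path of $S$. A choice that \emph{does} work is to set $path(u)$ to an index $i$ for which $u$'s successor in $P_i$ is topologically latest among all paths containing $u$: then any competitor $u'$ with $path(u')=i$ and $u'>_{\text{topo}} u$ lies after $u$ in $P_i$, hence at or after that successor, hence at or after $v$---contradicting $u'\in N^-(v)$. This refined $path$ is still computable in $O(||\pathcover||)$ time. (The paper's own proof is only a sketch and likewise does not specify which $path$ to use.)
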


The following lemma shows that we can locally sparsify a subgraph and apply these changes to the original graph to obtain a transitive sparsification. In other words, it allows us to remove transitive edges when working locally in a subgraph of $G$.
\begin{restatable}{lemma}{sparsificationOfSubgraph}
    \label{sparsificationOfSubgraph}
    Let $G = (V, E)$ be a graph, $S = (V_S, E_S)$ a subgraph of $G$, and $S' = (V_S, E_{S'})$ a transitive sparsification of $S$. Then $G' = (V, E\setminus (E_S \setminus E_{S'}))$ is a  transitive sparsification of $G$.
\end{restatable}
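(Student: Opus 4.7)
The plan is to prove both inclusions of reachability: since $G'$ is a subgraph of $G$, any reachability in $G'$ immediately holds in $G$, so the only work is the reverse direction. I will take a walk from $u$ to $v$ in $G$ and locally replace each removed edge by a detour that lies entirely in $G'$.

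More concretely, fix $u, v \in V$ with $u$ reaching $v$ in $G$, and let $P = w_0, w_1, \ldots, w_\ell$ be a path in $G$ with $w_0 = u$ and $w_\ell = v$. For each edge $e_i = (w_{i-1}, w_i)$ of $P$, I would distinguish two cases. If $e_i \notin E_S \setminus E_{S'}$, then $e_i$ survives in $G'$ by construction and can be kept as is. Otherwise $e_i \in E_S$, which in particular means $w_{i-1}, w_i \in V_S$, and $e_i \notin E_{S'}$. Since $S'$ is a transitive sparsification of $S$ and the single edge $e_i$ witnesses that $w_{i-1}$ reaches $w_i$ in $S$, there exists a path $Q_i$ from $w_{i-1}$ to $w_i$ in $S'$. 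Every edge of $Q_i$ belongs to $E_{S'} \subseteq E_S \subseteq E$, and no edge of $Q_i$ is in $E_S \setminus E_{S'}$, so every edge of $Q_i$ lies in $G'$.

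Concatenating the kept edges and the detours $Q_i$ yields a walk in $G'$ from $u$ to $v$, and any walk can be shortened into a (simple) path with the same endpoints. Hence $u$ reaches $v$ in $G'$, completing the nontrivial direction.

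The argument is essentially a single local-replacement step, so I do not anticipate a serious obstacle; the only point requiring a little care is noticing that the detour edges are drawn from $E_{S'}$, which is disjoint from the removed set $E_S \setminus E_{S'}$, and therefore safely survive in $G'$. The converse direction is immediate from $E \setminus (E_S \setminus E_{S'}) \subseteq E$.
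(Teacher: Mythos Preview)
Your proof is correct and follows essentially the same approach as the paper: locally replace each removed edge $(a,b)\in E_S\setminus E_{S'}$ by a path from $a$ to $b$ in $S'$, whose edges all survive in $G'$. The paper phrases this as a proof by contradiction while you give the direct constructive version, but the content is identical.
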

\begin{proof}
    Since $S'$ is a transitive sparsification of $S$, $E_{S'}\subseteq E_S$ thus $E\setminus (E_S \setminus E_{S'}) \subseteq E$ and then $G'$ is a subgraph of $G$. Now, suppose by contradiction that $u$ and $v$ are connected in $G$ by a path $P$, but they are not connected in $G'$. Then, $P$ contains an edge $e = (a, b)\in E_S\setminus E_{S'}$ disconnecting $b$ from $a$ in $S'$, but since $S'$ is a transitive sparsification of $S$, $a$ is connected to $b$ in $S'$, which is a contradiction.
\end{proof}

\subsection{Shrinking} As explained before, shrinking is the process of transforming an arbitrary path cover $\pathcover$ into an MPC, and it can be solved by finding $|\pathcover|-\width(G)$ decrementing paths in the residual of the flow induced by $\pathcover$ in $\flowG$, and then decomposing the resulting flow into an MPC.  M\"akinen et al.~\cite{makinen2019sparse} apply this idea to shrink a path cover of size $O(k\log{|V|})$. We generalize this approach in the following lemma.

\begin{restatable}{lemma}{shrinkingalgorithm}
\label{result:shrinking}
    Given a DAG $G = (V, E)$ of width $k$, and a path cover $\pathcover$, $|\pathcover|=t$, of $G$, we can obtain an MPC of $G$ in time $O(t(|V| + |E|))$.
\end{restatable}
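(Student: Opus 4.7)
The plan is to reduce to minimum flow via the construction $(\flowG, s, t, d)$ from Section~\ref{sec:minflow-reduction}. First I would convert the input path cover $\pathcover$ into an $st$-flow $f$ on $\flowG$ of size $|f|=t$: start with the zero function, and for each $P \in \pathcover$ push one unit of flow along the $s \to t$ path obtained by prepending $s$, splitting every $v \in P$ into $v^{in}, v^{out}$, and appending $t$. This respects flow conservation and, because $\pathcover$ covers every vertex, satisfies $f(v^{in}, v^{out}) \ge 1 = d(v^{in}, v^{out})$ for all $v \in V$. The construction takes $O(||\pathcover||) = O(t|V|)$ time.

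Next I would iteratively shrink $f$ by searching for decrementing paths in the residual $\residual{\flowG}{f}$. Each search is a BFS/DFS from $s$ looking for $t$, running in $O(|\flowV| + |\flowE|) = O(|V|+|E|)$ time; a successful search yields an $s \to t$ residual path, and we update $f$ by decreasing one unit on every direct edge and increasing one unit on every reverse edge of the path, which costs $O(|V|)$ per update and drops $|f|$ by one. By the argument in the proof of Theorem~\ref{thm:maxowcut-minflow}, when no decrementing path exists $f$ is a minimum flow, and by the MPC/min-flow correspondence $|f| = \width(G) = k$. Hence at most $t - k \le t$ iterations are performed, for a total of $O(t(|V|+|E|))$ time in this phase.

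Finally I would decompose the resulting minimum flow into $k$ paths using the iterative $s \to t$ extraction procedure of Section~\ref{sec:minflow-reduction}, merging split vertices and dropping $s,t$ to recover an MPC of $G$ in $O(||\pathcover^*||) = O(k|V|) = O(t|V|)$ time. Summing the three phases gives $O(t|V| + t(|V|+|E|) + t|V|) = O(t(|V|+|E|))$, as claimed. There is no real obstacle here: the only thing to be careful about is representing $f$ as an array indexed by $\flowE$ so that the residual network is traversable on the fly in linear time, and bounding the number of decrementing path searches by $t - k$ rather than by a generic function of $|f|$; both are immediate from the setup.
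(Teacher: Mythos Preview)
Your proposal is correct and follows essentially the same approach as the paper's proof: build the flow reduction, induce a flow of size $t$ from $\pathcover$, cancel at most $t-k$ decrementing paths via linear-time residual traversals, and decompose the resulting minimum flow into an MPC. The only cosmetic difference is that the paper bounds the final decomposition by $k$ traversals of $\flowG$ rather than by $O(||\pathcover^*||)$, but both give the same $O(t(|V|+|E|))$ total.
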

\begin{proof}
    We build the flow reduction $(\flowG, s, t, d)$ of $G$, and $\residual{\flowG}{f}$ (where $f$ is the flow induced by $\pathcover$ in $\flowG$) in time $O(|V|+|E|+||\pathcover||) = O(t|V|+|E|)$. Then, we shrink the corresponding flow to minimum by finding $\le t-k$ decrementing paths in $\le t-k$ traversals of $\flowG$, and finally, we decompose the minimum flow into an MPC in additional $k$ traversals (one per path) of $\flowG$. In total this takes $O(t(|V|+|E|))$ time.
\end{proof}

\section{A parameterized linear time MPC algorithm}\label{sec:progressive-flows}
In this section we describe our algorithm to find an MPC of a $k$-width DAG $G = (V, E)$ in time $O(k^2|V|+|E|)$ (\Cref{thm:improved-mpc-flow}). We start by describing a simpler version of the algorithm running in time $O(k^3|V|+|E|)$\footnote{We further simplify the result presented in SODA22~\cite{caceres2022sparsifying} by getting rid of \emph{splicing} in the algorithm.}, which shows the main ideas behind our approach. After that, we show how to reduce the running time dependency on the width to $k^2$. Finally, we show that our algorithm implicitly maintains a structure of decreasing antichains sweeping the DAG during the algorithm.

\subsection{Overview of the algorithm}

We rely on the reduction from MPC in a DAG to minimum flow (see \Cref{sec:preliminaries}). We process the vertices of $G$ one by one in a topological ordering $v_1, \ldots, v_{|V|}$. At each step, we maintain a minimum flow $f^*_i$ of $\flowG_i = (\flowV_i, \flowE_i)$ (the flow reduction of a transitive sparsification of $G_i = G[\{v_1, \ldots, v_i\}]$), and a flow decomposition $\pathcover_i$ of $f^*_i$, that is, an MPC of $G_i$. When the next vertex $v_{i+1}$ is considered, we first use $\pathcover_{i}$ to sparsify its incoming edges to at most $|f^*_{i}| \le k$ in time $O(k+|N^-(v_{i+1})|)$ (see \Cref{obs:sparsificationVertex} and \Cref{topologicalDoNotIncreaseWidth,sparsificationOfSubgraph}). Then, we set $\tmp_{i+1} \gets \pathcover_i \cup \{(v_{i+1}^{in}, v_{i+1}^{out})\}$, where $(v_{i+1}^{in}, v_{i+1}^{out})$ corresponds to the edge representing $v_{i+1}$ in the flow reduction (we represent $st$-flow paths either as a sequence of vertices or edges excluding the extremes for convenience), and $t_{i+1}$ as the flow induced by $\tmp_{i+1}$ in $\flowG_{i+1}$. After that, we search for a decrementing path in $\residual{\flowG_{i+1}}{t_{i+1}}$. If the search fails we set $\pathcover_{i+1} \gets \tmp_{i+1}, f^*_{i+1} \gets t_{i+1}$. Otherwise, we obtain $f^*_{i+1}$ by modifying $f^*_i$ according to the decrementing path and decompose $f^*_{i+1}$ into $\pathcover_{i+1}$. Once all vertices have been processed $f^*_{|V|}$ will be a minimum flow of $\flowG$ and $\pathcover_{|V|}$ an MPC of $\flowG$.

Note that by processing the vertices in topological order we have that either $|\pathcover_{i+1}| = |\pathcover_{i}| + 1$ or $|\pathcover_{i+1}| = |\pathcover_{i}|$ (recall \Cref{topologicalDoNotIncreaseWidth}), and thus it suffices to traverse $\residual{\flowG_{i+1}}{t_{i+1}}$ once. In general, such a traversal for a decrementing path can be performed in $O(|V|+|E|)$ time, leading to a quadratic algorithm. However, our algorithm guides the search by assigning an integer level $\ell_i(v)$ to each vertex $v$ in $\flowG_{i}$. The search is performed in a \emph{layered} manner: it starts from the highest reachable layer (the vertices of highest level according to $\ell_i$), and it only continues to the next highest reachable layer once all reachable vertices from the current layer have been visited (see \Cref{sec:layered-traversal}). The definition of the level assignment is algorithmic (see \Cref{sec:level-updates}) and driven by the maintenance of three structural invariants (see \Cref{sec:levels-and-invariants}), which in turn allows an efficient implementation of the layered traversal. We formally define these concepts next.

\subsection{Levels, layers and invariants}\label{sec:levels-and-invariants}
We define the level assignment given to the vertices of $\flowG_{i}$, $\ell_i: \flowV_i \to \{0, 1, \ldots, M_i\} \cup \{-\infty, +\infty\}$, and the invariants maintained on $\ell_i$. The values $-\infty,\infty$ are reserved exclusively for $\ell_i(s) = -\infty, \ell_i(t) = \infty$. A \emph{layer} is a maximal set of vertices with the same level, thus layer $l$ is $L_i^l = \{v \in V(\flowG_{i}) \mid \ell_i(v) = l\}$. All layers form a partition of $\flowV_i = \bigcupdot_l L_i^l$. We define the $l$-th \emph{layered cut} as the union of layers from $-\infty$ to $l$, $L_i^{\le l} = \bigcupdot_{j=-\infty}^l L_i^j$, and its complement $L_i^{> l} = \bigcupdot_{j=l+1}^\infty L_i^j$. We are ready to present the three structural invariants maintained by our algorithm.\vspace{0.5em}

\begin{description}
    \item[Invariant A]: If $(u, v)$ is an edge in $\residual{\flowG_i}{f^*_i}$ and $\{u, v\} \cap \{s, t\} = \emptyset$, then $\ell_i(u) \ge \ell_i(v)$. 
    \item[Invariant B]: \begin{itemize}
        \item If $f^*_i(u^{out}, t) > 0$, then $f^*_i(u^{out}, t) = 1$ and $\ell_i(u^{in}) < \ell_i(u^{out})$.
        \item If $f^*_i(s, u^{in}) > 0$, then $\ell_i(u^{in}) = 0$.
    \end{itemize} 
    \item[Invariant C]: If $l, l'\in \{0, 1, \ldots, M_i-1\}$ with $l < l'$, then $d(L_i^{\le l}) > d(L_i^{\le l'})$. 
\end{description}
\vspace{0.5em}

Note that \invA{} implies that the layered cuts are ow-cuts, since otherwise the corresponding reverse edge would appear in the residual breaking the invariant (and there are no edges from $t$). Moreover, the first layered cut is a maximum ow-cut.

\begin{lemma}\label{lemma:layeredCutMaxCut}
The first layered cut, $L^{\le 0}_i$, is a maximum ow-cut.
\end{lemma}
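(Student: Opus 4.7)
The plan is to prove two things: (i) the pair $(L^{\le 0}_i, L^{>0}_i)$ is an ow-cut of $\flowG_i$, and (ii) its demand equals $|f^*_i|$, which by Basic Property (b) makes it a maximum ow-cut.

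For (i), the conventions $\ell_i(s) = -\infty$ and $\ell_i(t) = +\infty$ put $s \in L^{\le 0}_i$ and $t \in L^{>0}_i$, so this is a valid cut. To see it is one-way, suppose for contradiction that $(u,v) \in \flowE_i$ with $\ell_i(u) > 0 \ge \ell_i(v)$. Then the reverse edge $(v,u)$ lies in $\residual{\flowG_i}{f^*_i}$. Since $s$ has no incoming edges in $\flowG_i$ and $t$ no outgoing edges, this reverse edge avoids both $s$ and $t$, so \invA{} demands $\ell_i(v) \ge \ell_i(u)$, contradicting the choice of $(u,v)$.

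For (ii), I would classify the edges of $\flowE_i$ crossing from $L^{\le 0}_i$ to $L^{>0}_i$ into the four types appearing in the reduction and pin down $f^*_i$ on each using \invAB{}. An edge $(u^{out}, v^{in})$ crossing the cut that carries positive flow would yield a direct residual edge going from a level $\le 0$ to a level $>0$, violating \invA{}; so its flow is $0$. An edge $(v^{in}, v^{out})$ crossing the cut with flow strictly above its demand $1$ would similarly put a direct residual edge going up in level, violating \invA{}; so its flow is exactly $1$. An edge $(s, v^{in})$ with $v^{in} \in L^{>0}_i$ carrying positive flow would force $\ell_i(v^{in}) = 0$ by \invB{}, a contradiction. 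Lastly, for $(v^{out}, t)$ with $v^{out} \in L^{\le 0}_i$: since $v^{out} \ne s$, the non-negativity of finite levels forces $\ell_i(v^{out}) = 0$; positive flow would then force $\ell_i(v^{in}) < 0$ via \invB{}, impossible for $v^{in} \ne s$. Summing via Basic Property (a) and using part (i) to kill the reverse direction, we obtain $|f^*_i| = d(L^{\le 0}_i)$, and maximality follows from Basic Property (b).

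The main subtlety is the four-case analysis for the source- and sink-incident edges; this is exactly where \invB{} is essential, as \invA{} explicitly excludes edges touching $s$ or $t$. A key non-obvious step is that, aside from $s$ and $t$, all finite levels are non-negative, so $v^{out} \in L^{\le 0}_i \setminus \{s\}$ forces $\ell_i(v^{out}) = 0$; without this, \invB{} would not pin down the flow on the $(v^{out}, t)$ edges. Notably, \invC{} is not needed for this lemma; it governs the ordering of demands among later layered cuts, whereas maximality of $L^{\le 0}_i$ is already forced by \invAB{} alone.
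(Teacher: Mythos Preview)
Your proof is correct and follows essentially the same approach as the paper. The paper's version is more compressed: it first uses \invB{} to argue in one stroke that any edge crossing $L^{\le 0}_i$ with positive flow cannot touch $s$ or $t$ (your last two cases), then applies \invA{} to the remaining interior edges (your first two cases); your explicit four-way case split is just an unpacking of that same argument, and your observation that \invC{} is not needed is accurate.
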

\begin{proof}
Indeed, by \invB{}, every edge $(u, v)$ crossing the ow-cut $L^{\le 0}_i$ with $f^*_i(u, v) > 0$ satisfies $\{u, v\} \cap \{s, t\} = \emptyset$. As such, by \invA{}, $f^*_i(u, v) = d(u,v)$, since otherwise the edge $(u, v)$ would appear in the residual breaking the invariant. Finally, we conclude that $f^*_i(L^{\le 0}_i) = |f^*_i| = d(L^{\le 0}_i)$ and $L^{\le 0}_i$ is a maximum ow-cut.
\end{proof}




Finally, note that, by \invC{} and \Cref{lemma:layeredCutMaxCut}, $M_i \le |f^*_i| \le k$.

\subsection{The algorithm}\label{sec:progressive-flows-algorithm}

\subsubsection{Initial sparisifcation}\label{sec:initial-sparsification}
Our algorithm starts by using $\pathcover_i$ to obtain at most $|f^*_i|$ edges incoming to $v_{i+1}$ in time $O(|f^*_i|+|N^-(v_{i+1})|) = O(k+|N^-(v_{i+1})|)$ (see \Cref{obs:sparsificationVertex}). This procedure requires to answer $path(v)$ (the ID of some path of $\pathcover_{i}$ containing $v$) queries in constant time. To satisfy this requirement, we maintain path IDs on every vertex/edge of every flow path $P \in \pathcover_i$. The following lemma states that the sparsification of incoming edges in $G_{i+1}$ produces an sparsification of outgoing edges in the corresponding residual.

\begin{lemma}\label{lemma:sparsifiedresidual}
For every $x \in \flowV_{i+1}\setminus \{s, t\}$, $|N^+({x})| = O(|f^*_{i}|)$, in $\residual{\flowG_{i+1}}{t_{i+1}}$.
\end{lemma}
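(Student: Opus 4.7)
The plan is to classify every out-neighbor of $x$ in $\residual{\flowG_{i+1}}{t_{i+1}}$ according to the two clauses in the definition of the residual: (i) reverses of edges of $\flowG_{i+1}$ incoming to $x$, and (ii) forward edges outgoing from $x$ whose flow is strictly above its demand. I would then split into cases based on whether $x = v^{in}$ or $x = v^{out}$ for some $v \in \{v_1,\ldots,v_{i+1}\}$, since the two halves of a split vertex have very different edge structures in the flow reduction.

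For $x = v^{out}$, the only in-edge in $\flowG_{i+1}$ is $(v^{in},v^{out})$, so clause (i) contributes at most one out-edge in the residual. The forward edges out of $v^{out}$ are $(v^{out},t)$ and $(v^{out},u^{in})$ for every out-neighbor $u$ of $v$ in $G_{i+1}$, all of demand $0$; they appear in the residual precisely when $t_{i+1}$ routes positive flow through them. Flow conservation bounds the total out-flow at $v^{out}$ by $|t_{i+1}| = |f^*_i|+1$, so clause (ii) contributes $O(|f^*_i|)$ edges. For $x = v^{in}$, the unique forward edge $(v^{in},v^{out})$ has demand $1$, so clause (ii) contributes at most one edge. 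Clause (i) contributes the reverses of $(s,v^{in})$ and of $(u^{out},v^{in})$ for each in-neighbor $u$ of $v$ in $G_{i+1}$; the transitive sparsification caps the latter count. Specifically, if $v = v_{i+1}$, the sparsification just performed at this iteration (see \Cref{obs:sparsificationVertex}) leaves at most $|f^*_i|$ in-neighbors; if $v = v_j$ for some $j \le i$, the sparsification carried out at the earlier iteration $j$ already bounded $|N^-(v_j)|$ by $|f^*_{j-1}| \le k$, and no incoming edges to $v_j$ can be added afterwards because the vertices are processed in topological order (any edge introduced when processing $v_{i+1}$ is outgoing from, not incoming to, previously processed vertices).

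The main subtlety, rather than a true obstacle, is checking that the sparsification performed at each previous iteration is still a valid sparsification of the current graph $G_{i+1}$. This follows from the topological-order processing, which only ever adds edges outgoing from already-processed vertices, together with \Cref{sparsificationOfSubgraph}, which guarantees that local transitive sparsifications lift to global ones. Combining the two cases yields $|N^+(x)| = O(|f^*_i|)$ in $\residual{\flowG_{i+1}}{t_{i+1}}$, as claimed, and also explains why $|f^*_i| \le k$ by \Cref{topologicalDoNotIncreaseWidth}, so the bound is indeed parameterized in the width.
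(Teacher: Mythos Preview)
Your approach mirrors the paper's: split on $x=v^{in}$ versus $x=v^{out}$, and for each case separately bound reverse and direct residual edges. Two points, however, deserve tightening.

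First, in the $v^{out}$ case you write that ``flow conservation bounds the total out-flow at $v^{out}$ by $|t_{i+1}|$''. Flow conservation only gives that the out-flow equals $t_{i+1}(v^{in},v^{out})$; to bound this by $|t_{i+1}|$ you still need the DAG/cut fact that no edge carries more than the total flow. The paper makes this explicit by taking the ow-cut $S=\{w\in\flowV_{i+1}: w\text{ reaches }u^{out}\}$ and reading off $t_{i+1}(S)=|t_{i+1}|$. Your statement is not wrong, just under-justified.

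Second, and more to the point of the lemma as stated, in the $v^{in}$ case with $v=v_j$, $j\le i$, you bound $|N^-(v_j)|$ by $|f^*_{j-1}|\le k$. But the lemma claims $O(|f^*_i|)$, not $O(k)$, and $|f^*_i|$ can be strictly smaller than $k$ at intermediate stages. What you need is $|f^*_{j-1}|\le|f^*_i|$, i.e.\ that widths of topological prefixes are nondecreasing. This follows by applying \Cref{topologicalDoNotIncreaseWidth} to $G_i$ (with $G_{j-1}$ as its prefix), and is exactly the parenthetical the paper inserts. With that one line added, your argument is complete and essentially identical to the paper's.
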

\begin{proof}
If $x$ is of the form $v^{in}$, then its only direct edge could be $(v^{in}, v^{out})$ (if $t_{i+1}(v^{in}, v^{out}) > 1$), its reverse edges are of the form $(v^{in}, u^{out})$, such that $(u, v)$ is an edge in $G_{i+1}$, thus there are at most $|f^*_i|$ such edges because of sparsification (recall that $|f^*_j| \le |f^*_i|$ for $j < i$, by \Cref{topologicalDoNotIncreaseWidth}). On the other hand, if $x$ is of the form $u^{out}$, then the only reverse edge is $(u^{out}, u^{in})$. To bound the number of direct edges consider the ow-cut $S =\{v \in \flowV_{i+1} : \mbox{$v$ reaches $u^{out}$ in $\flowG_{i+1}$}\}$, then $t_{i+1}(S) = |t_{i+1}| = |f^*_{i}| + 1$, and thus the number of direct edges $(u^{out}, v^{in})$ is at most $|f^*_i|+1$.
\end{proof}


\subsubsection{Layered traversal}\label{sec:layered-traversal}
Our layered traversal performs a BFS in each reachable layer from highest to lowest. If $t$ is reached, the search stops and the algorithm proceeds to obtain $f^*_{i+1}$ and $\pathcover_{i+1}$ from the decrementing path found. Since $f^*_{i}$ is a minimum flow of $\flowG_i$, every decrementing path $D$ in $\residual{\flowG_{i+1}}{t_{i+1}}$ starts with the edge $(s, v_{i+1}^{in})$ and ends with an edge of the form $(u^{out}, t)$ with $t_{i+1}(u^{out}, t) > 0$. Moreover, since $(v_{i+1}^{in}, v_{i+1}^{out})$ does not exist in $\residual{\flowG_{i+1}}{t_{i+1}}$\footnote{Recall that $t_{i+1}(v_{i+1}^{in}, v_{i+1}^{out}) = d(v_{i+1}^{in}, v_{i+1}^{out}) = 1$.}, the second edge of $D$ must be a reverse edge of the form $(v_{i+1}^{in}, u^{out})$, such that $u$ is an in-neighbor of $v_{i+1}$ in $G_{i+1}$.

We work with $M_{i}+1$ queues $Q_{0}, Q_1, \ldots, Q_{M_{i}}$ (one per layer), where $Q_j$ contains the \emph{enqueued} elements from $L^j_i$(layer $j$), therefore it is initialized as $Q_j \gets \{u^{out} \mid (u^{out}, v_{i+1}^{in}) \in \flowE_{i+1} \land \ell(u^{out}) = j\}$. By \Cref{lemma:sparsifiedresidual}, this initialization takes $O(|f^*_{i}| + M_i) = O(k)$ time. We start working with $Q_{M_i}$. When working with $Q_j$, we obtain the first element $u$ from the queue (if no such element exists we move to $L^{j-1}_i$ and work with $Q_{j-1}$), then we \emph{visit} $u$ and for each non-visited out-neighbor $v$ we add $v$ to $Q_{\ell(v)}$. Since edges in the residual do not increase the level (\invA), out-neighbors can only be added to queues at an equal or lower layer. As such, this traversal advances in a \emph{layered} manner, and it finds a decrementing path if one exists. Note that the running time of the layered traversal can be bounded by $O(|f^*_i|)=O(k)$ per visited vertex by \Cref{lemma:sparsifiedresidual}.

\begin{obs}\label{obs:end-vertices}
Let $E_{i} = \{u^{out} \in \flowV_i \mid f^*_{i}(u^{out}, t) > 0\}$, and $A_{i+1}$ the singleton set containing the last vertex $a_{i+1}^{out}$ in the decrementing path $D$ found by the layered traversal in $\residual{\flowG_{i+1}}{t_{i+1}}$, or the empty set if no decrementing path was found. Then, $E_{i+1} = E_i \cup \{v_{i+1}^{out}\} \setminus A_{i+1}$.
\end{obs}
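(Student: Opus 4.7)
The plan is to trace the flow on edges incoming to $t$ across the three flows involved: $f^*_i$ (whose support on edges $(u^{out},t)$ is precisely $E_i$), the intermediate flow $t_{i+1}$, and the final flow $f^*_{i+1}$ (whose support defines $E_{i+1}$).

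First I would observe that $t_{i+1}$ is obtained from $f^*_i$ by adding one unit of flow along the path $s \to v_{i+1}^{in} \to v_{i+1}^{out} \to t$ in $\flowG_{i+1}$, namely the $st$-flow path corresponding to the singleton DAG-path $(v_{i+1})$ appended to $\pathcover_i$. Hence for every $u$,
\[ t_{i+1}(u^{out}, t) = f^*_i(u^{out}, t) + [u = v_{i+1}]. \]
Combining this with \invB{} applied to $f^*_i$, which forces $f^*_i(u^{out}, t) \in \{0,1\}$, and with $f^*_i(v_{i+1}^{out}, t) = 0$ (since $v_{i+1} \notin V_i$), I obtain that $t_{i+1}(u^{out}, t) \in \{0,1\}$ for every $u$, and moreover
\[ \{u^{out} : t_{i+1}(u^{out}, t) > 0\} = E_i \cup \{v_{i+1}^{out}\}. \]

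Next I would split on whether a decrementing path is found. If not, then $f^*_{i+1} = t_{i+1}$ and $A_{i+1} = \emptyset$, so the displayed equality is immediate. Otherwise, $f^*_{i+1}$ is obtained from $t_{i+1}$ by subtracting one unit on each direct edge of $D$ and adding one on each reverse edge. Since $t$ has no outgoing edges in $\flowG_{i+1}$, no reverse edge can be incident to $t$, and so the last edge of $D$ must be a direct edge $(a_{i+1}^{out}, t)$ with $t_{i+1}(a_{i+1}^{out}, t) > 0$; in particular $a_{i+1}^{out} \in E_i \cup \{v_{i+1}^{out}\}$. The only edge incoming to $t$ whose flow is altered is $(a_{i+1}^{out}, t)$, whose value drops from $1$ to $0$ (using the $\{0,1\}$-valuation just established). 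Therefore
\[ E_{i+1} = (E_i \cup \{v_{i+1}^{out}\}) \setminus \{a_{i+1}^{out}\} = E_i \cup \{v_{i+1}^{out}\} \setminus A_{i+1}. \]

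The only subtle step is establishing that $t_{i+1}$ is $0/1$-valued on the edges into $t$, so that the single-unit decrement on $(a_{i+1}^{out}, t)$ removes $a_{i+1}^{out}$ from the end-vertex set rather than merely reducing its multiplicity; this reduces directly to \invB{} for $f^*_i$. The rest of the argument is bookkeeping about how a decrementing-path update modifies the flow on a single edge incident to $t$.
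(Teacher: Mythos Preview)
Your proof is correct and follows essentially the same approach as the paper: both arguments hinge on \invB{} to guarantee that the flow value on the edge $(a_{i+1}^{out},t)$ is exactly $1$, so that the single-unit decrement along the decrementing path drops it to $0$ and removes $a_{i+1}^{out}$ from the end-vertex set. Your version is slightly more explicit in tracking the intermediate flow $t_{i+1}$ and in allowing a priori that $a_{i+1}^{out}\in E_i\cup\{v_{i+1}^{out}\}$ rather than asserting $a_{i+1}^{out}\in E_i$, but this extra care does not change the substance of the argument.
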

\begin{proof}
    If no decrementing path was found the observation easily follows. On the other hand, if a decrementing path $D$ is found, the observation follows from the fact that the only edge in $D$ of the form $(u^{out}, t)$ with $u^{out} \in E_i$, comes from $a_{i+1}^{out}$, and by \invB{}, $f^*_{i}(a_{i+1}^{out}, t) = 1$, thus $f^*_{i+1}(a_{i+1}^{out}, t) = 0$ implying $a_{i+1}^{out}\not\in E_{i+1}$.
\end{proof}

 \subsubsection{Flow, level and path updates}\label{sec:level-updates}
Recall that if no decrementing path is found then we set $f^*_{i+1} \gets t_{i+1}$. Otherwise, we get $f^*_{i+1}$ by modifying $t_{i+1}$ according to the decrementing path $D$, as is standard in maximum/minimum flow algorithms, that is,
\begin{align*}
    f^*_{i+1}(u,v) = 
     \begin{cases}
        t_{i+1}(u,v) - 1 &\quad\text{if } (u, v) \in D \land (u, v) \in \flowE_{i+1} \\
        t_{i+1}(u,v) + 1 &\quad\text{if } (u, v) \in D \land (v, u) \in \flowE_{i+1} \\
       t_{i+1}(u,v)&\quad\text{otherwise}\\
     \end{cases}
\end{align*}

After obtaining $f^*_{i+1}$, we update the level of some vertices of $\flowV_{i+1}$ to maintain the invariants (\Cref{sec:levels-and-invariants}) on the new level assignment $\ell_{i+1}$. Moreover, to sparsify (\Cref{sec:progressive-flows-algorithm}) in the next iteration, we also compute the $path$ function (recall that $path(v)$ that gives the ID of some path of the MPC containing $v$) by decomposing $f^*_{i+1}$ into $\pathcover_{i+1}$.

If the smallest layer visited during the traversal is $L^l_i$, then we set $\ell_{i+1}(v_{i+1}^{in}) \gets l$, $\ell_{i+1}(v_{i+1}^{out}) \gets l+1$ (to maintain \invB{}), and change the level of every vertex $u$ visited during the traversal to $\ell_{i+1}(u) \gets l$ (to maintain \invA{}). Note that these level changes are made even if no decrementing path was found, in which case $l = 0$.

We note that after applying the previous level changes the following hold.

\begin{obs}\label{obs:last-vertices-dont-change-level}
$\ell_{i}(u^{out}) = \ell_{i+1}(u^{out})$ for every $u^{out} \in E_{i}\setminus A_{i+1}$
\end{obs}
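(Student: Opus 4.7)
The plan is to unpack exactly which vertices have their level modified from $\ell_i$ to $\ell_{i+1}$, and then argue that every $u^{out}\in E_i\setminus A_{i+1}$ falls outside this set. According to the level update rule in \Cref{sec:level-updates}, the only vertices whose level can differ between $\ell_i$ and $\ell_{i+1}$ are (i) the freshly introduced split vertices $v_{i+1}^{in}$ and $v_{i+1}^{out}$, and (ii) the vertices visited during the layered traversal performed on $\residual{\flowG_{i+1}}{t_{i+1}}$. Since any $u^{out}\in E_i\setminus A_{i+1}$ belongs to $\flowV_i$, the index $u$ is processed strictly before $v_{i+1}$, so $u^{out}\not\in\{v_{i+1}^{in},v_{i+1}^{out}\}$. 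Hence it suffices to rule out (ii), that is, to show that no $u^{out}\in E_i\setminus A_{i+1}$ is ever visited by the layered traversal.

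The key observation for this is that the edge $(u^{out},t)$ of $\flowG_{i+1}$ survives in the residual $\residual{\flowG_{i+1}}{t_{i+1}}$ whenever $u^{out}\in E_i$. Indeed, the definition of $t_{i+1}$ alters the flow values of $f^*_i$ only along the single $st$-path $s\to v_{i+1}^{in}\to v_{i+1}^{out}\to t$; since $u\neq v_{i+1}$ (as $u^{out}\in\flowV_i$), we have $t_{i+1}(u^{out},t)=f^*_i(u^{out},t)>0$, which by the definition of the residual places $(u^{out},t)$ as a direct edge of $\residual{\flowG_{i+1}}{t_{i+1}}$.

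Consequently, the very first time the layered traversal visits any vertex of $E_i$, it exposes $t$ among its out-neighbors, and the traversal halts (``If $t$ is reached, the search stops'', \Cref{sec:layered-traversal}). Therefore at most one vertex of $E_i$ is ever visited. If none is visited, the search failed and $A_{i+1}=\emptyset$, so $E_i\setminus A_{i+1}=E_i$ trivially lies outside the visited set. Otherwise the unique visited vertex from $E_i$ is precisely the terminating vertex of the decrementing path $D$, namely $a_{i+1}^{out}$, which is exactly the element removed by forming $E_i\setminus A_{i+1}$.

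Combining these two parts, every $u^{out}\in E_i\setminus A_{i+1}$ is neither in the newly introduced pair nor among the vertices visited by the layered traversal, so the level update rule leaves its value unchanged, giving $\ell_i(u^{out})=\ell_{i+1}(u^{out})$. The main (mild) obstacle is just being careful that transitioning from $f^*_i$ to $t_{i+1}$ does not affect $(u^{out},t)$ for $u\neq v_{i+1}$, which is immediate from how $t_{i+1}$ is constructed; once that is noted, the argument above is routine.
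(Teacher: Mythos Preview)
Your proof is correct and follows essentially the same approach as the paper's: both argue that the layered traversal visits at most one vertex of $E_i$, namely $a_{i+1}^{out}$, so only that vertex (among $E_i$) can have its level reassigned. The paper condenses this into a single sentence, while you spell out the supporting reason---that $(u^{out},t)$ lies in $\residual{\flowG_{i+1}}{t_{i+1}}$ for every $u^{out}\in E_i$, so the traversal halts at the first such vertex it visits---which is a welcome clarification.
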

\begin{proof}
The observation follows from the fact that the layered traversal only visits $a_{i+1}^{out}$ among all $E_{i}$ (if any), and thus only the level of $a_{i+1}^{out}$ changes to the smallest level visited.
\end{proof}

\begin{lemma}\label{lemma:invAB-before-merging}
\invAB{} hold for $\flowG_{i+1}, f^*_{i+1},\ell_{i+1}$.
\end{lemma}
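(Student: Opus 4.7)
The plan is to verify \invA{} and \invB{} by case analysis on each edge of $\residual{\flowG_{i+1}}{f^*_{i+1}}$, classifying it as inherited from $\residual{\flowG_i}{f^*_i}$, newly introduced by pushing flow along $D$, or incident to the split of $v_{i+1}$; and then to combine this classification with how the layered traversal relates old and new levels. Two structural facts will underlie everything: (i) every internal vertex of $D$ is visited by the traversal (since $D$ is extracted from BFS parents), and (ii) because queues $Q_{M_i}, \ldots, Q_{l+1}$ are drained fully before $Q_l$ is processed, any vertex ever enqueued into a layer strictly above $l$ is necessarily visited.

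For \invA{}, I would first establish a BFS closure property: if $v$ is unvisited but some visited vertex $u$ has a residual edge to $v$ in $\residual{\flowG_{i+1}}{t_{i+1}}$, then $\ell_i(v) \le l$, for otherwise $v$ would have been enqueued in $Q_{\ell_i(v)}$ with $\ell_i(v) > l$ and subsequently popped. Next, I would observe that every edge of $\residual{\flowG_{i+1}}{f^*_{i+1}}$ that is not already in $\residual{\flowG_i}{f^*_i}$ either has both endpoints on $D$ (hence both visited and both at new level $l$) or is incident to $v_{i+1}^{in}$ or $v_{i+1}^{out}$; in the latter case the only edge to verify is the reverse edge $(v_{i+1}^{out}, v_{i+1}^{in})$, which satisfies $l+1 \ge l$ by the level assignment. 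For edges inherited from the old residual, I perform a four-way split on the visited/unvisited status of the endpoints: whenever an endpoint is unvisited it keeps its old level, whenever it is visited it drops to $l$, and the closure property together with old \invA{} dispatches the mixed cases.

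For \invB{}, I would use \Cref{obs:end-vertices} to enumerate $E_{i+1} = E_i \cup \{v_{i+1}^{out}\} \setminus A_{i+1}$ and treat each element. For $u^{out} = v_{i+1}^{out}$, both parts hold by construction, since $f^*_{i+1}(v_{i+1}^{out}, t) = 1$ and $\ell_{i+1}(v_{i+1}^{in}) = l < l+1 = \ell_{i+1}(v_{i+1}^{out})$. For $u^{out} \in E_i \setminus A_{i+1}$, \Cref{obs:last-vertices-dont-change-level} yields $\ell_{i+1}(u^{out}) = \ell_i(u^{out})$; the delicate subcase is when $u^{in}$ is visited and therefore pulled down to level $l$, in which case I would combine the fact that visited vertices sit at old layer $\ge l$ with the strict old inequality $\ell_i(u^{in}) < \ell_i(u^{out})$ to conclude $\ell_{i+1}(u^{out}) = \ell_i(u^{out}) \ge l+1 > l = \ell_{i+1}(u^{in})$. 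The second part of \invB{} is simpler: since the only $s$-incident edge on $D$ is $(s, v_{i+1}^{in})$, $f^*_{i+1}(s, u^{in}) = f^*_i(s, u^{in})$ for every $u \ne v_{i+1}$, so old \invB{} gives $\ell_i(u^{in}) = 0$; if $u^{in}$ is unvisited the level is preserved, and if visited then $\ell_i(u^{in}) \ge l$ forces $l = 0$.

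The main obstacle I foresee is precisely the delicate subcase above of part 1 of \invB{}: preserving the strict inequality $\ell_{i+1}(u^{in}) < \ell_{i+1}(u^{out})$ when $u^{in}$ is visited but $u^{out}$ is not requires the interplay of the old strict gap and the old-layer lower bound on visited vertices, rather than following directly from the old invariant. Everything else reduces to routine case analysis once the BFS closure property and the structural facts (i) and (ii) are in hand.
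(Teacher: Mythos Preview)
Your proposal is correct and follows essentially the same approach as the paper: both hinge on the fact that the layered traversal drains every layer strictly above $l$ (your ``BFS closure property'' is exactly what the paper invokes as the final contradiction ``with the operation of the layered traversal''), together with \Cref{obs:end-vertices,obs:last-vertices-dont-change-level} for \invB{}. The paper packages the \invA{} argument as a short proof by contradiction rather than your forward four-way case split, and for the first part of \invB{} it uses the one-line monotonicity $\ell_i(u^{in}) \ge \ell_{i+1}(u^{in})$ in place of your detour through ``visited vertices sit at old layer $\ge l$'', but these are presentational differences only.
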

\begin{proof}
For \invA{} consider $(u, v) \in \residual{\flowG_{i+1}}{f^*_{i+1}}$ with $\{u, v\} \cap \{s, t\} = \emptyset$, and assume for a contradiction that $\ell_{i+1}(u) < \ell_{i+1}(v)$. Since (inductively) \invA{} holds for $\flowG_{i}, f^*_{i},\ell_{i}$, then only $u$ was visited by the layered traversal. Indeed, if both were visited then $\ell_{i+1}(u) = \ell_{i+1}(v)$ a contradiction, and if none or only $v$ was visited then $\ell_{i+1}(u) = \ell_{i}(u) \ge \ell_{i}(v) \ge \ell_{i+1}(v)$ also a contradiction (the last inequality follows since visiting $v$ can only decrease its level). Note that, by construction of $\ell_{i+1}$, every in-neighbor $w$ of $v_{i+1}^{in}$ in $\residual{\flowG_{i+1}}{f^*_{i+1}}$ is such that $\ell_{i+1}(w) \ge \ell_{i+1}(v_{i+1}^{in})$, and thus $(u, v) \in \flowG_{i}$. Moreover, since only $u$ was visited $f^*_{i}(u,v) = f^*_{i+1}(u,v)$, and thus $(u, v) \in \residual{\flowG_{i}}{f^*_{i}}$. However, this contradicts the operation of the layered traversal, since the smallest level visited was $\ell_{i+1}(u)$, but vertex $v$ at level $\ell_{i}(v) = \ell_{i+1}(v)$ ($ > \ell_{i+1}(u)$) was not visited.

For the first part of \invB{} consider $f^*_{i+1}(u^{out}, t) > 0$, then by \Cref{obs:end-vertices}, $u^{out} \in E_{i+1} = E_{i} \cup \{v_{i+1}^{out}\} \setminus A_{i+1}$. If $u^{out} = v_{i+1}^{out}$, by construction of $\ell_{i+1}$, $\ell_{i+1}(v_{i+1}^{out}) = \ell_{i+1}(v_{i+1}^{in}) + 1$, and by construction of $f^*_{i+1}$, $f^*_{i+1}(v_{i+1}^{out}, t) = 1$. Otherwise, we have that $u^{out} \in E_{i} \setminus A_{i+1}$, by \Cref{obs:last-vertices-dont-change-level}, $\ell_{i+1}(u^{out}) = \ell_{i}(u^{out}) > \ell_{i}(u^{in}) \ge \ell_{i+1}(u^{in})$, and inductively $f^*_{i+1}(u^{out}, t) = f^*_{i}(u^{out}, t) = 1$.

Finally, for the second part of \invB{} consider $f^*_{i+1}(s, u^{in}) > 0$. If $u^{in} = v_{i+1}^{in}$, then no decrementing path was found and by construction of $\ell_{i+1}$, $\ell_{i+1}(v_{i+1}^{in}) = 0$. Otherwise, note that $f^*_{i+1}(s, u^{in}) = f^*_{i}(s, u^{in})$ but then $\ell_{i}(u^{in}) = 0 \ge \ell_{i+1}(u^{in}) = 0$.
\end{proof}

\begin{lemma}\label{lemma:level-path-changes}
If $l$ is the smallest level visited by the layered traversal of $\residual{\flowG_{i+1}}{t_{i+1}}$, then $d(L^{\le l'}_{i+1}) = d(L^{\le l'}_{i})$ for every $l' \in \{0,..,M_i\}\setminus\{l\}$, and $d(L^{\le l}_{i+1}) = d(L^{\le l}_{i}) + 1$.
\end{lemma}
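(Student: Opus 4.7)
The plan is to exploit the fact that every layered cut $(L^{\le l'}_j, L^{> l'}_j)$ is an ow-cut (by \invA{}) to derive a clean decomposition
\[
|f^*_j| \;=\; d(L^{\le l'}_j) \;+\; \mathrm{End}_j(l'),
\]
where $\mathrm{End}_j(l') := |\{u^{out} \in E_j : \ell_j(u^{out}) \le l'\}|$, and then read off the lemma by differencing between iteration $i$ and $i+1$. To establish the decomposition, I would apply basic property (a) of \Cref{prop:basicsMinFlow} and enumerate the four edge types of $\flowG_j$: edges $(s,v^{in})$ from $S$ to $T$ must carry no flow, since \invB{} would otherwise pin $\ell_j(v^{in})=0\le l'$ and contradict $v^{in}\in T$; adjacency edges $(u^{out},v^{in})$ from $S$ to $T$ must carry no flow, for else the direct residual edge would violate \invA{}; each demand edge $(w^{in},w^{out})$ crossing $S\to T$ carries exactly $1$, since the strict inequality $\ell_j(w^{in})<\ell_j(w^{out})$ rules out $f>1$ (which would force equal levels via the direct residual edge); and each terminal edge $(v^{out},t)$ with $v^{out}\in S$ carries $f\in\{0,1\}$ by \invB{}, contributing $\mathrm{End}_j(l')$ in total.

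Having the identity, the lemma reduces to tracking $\Delta d(l') = (|f^*_{i+1}|-|f^*_i|) - \Delta\mathrm{End}(l')$. If no decrementing path is found, then $|f^*_{i+1}|=|f^*_i|+1$ and $l=0$; the only change to $E_j$ is the addition of $v_{i+1}^{out}$ at level $\ell_{i+1}(v_{i+1}^{out})=1$, so $\Delta\mathrm{End}(l')=\mathbb{1}[l'\ge 1]$ and hence $\Delta d(l')=\mathbb{1}[l'=0]=\mathbb{1}[l'=l]$. If a decrementing path is found, then $|f^*_{i+1}|=|f^*_i|$, \Cref{obs:end-vertices} gives $E_{i+1}=E_i\cup\{v_{i+1}^{out}\}\setminus\{a_{i+1}^{out}\}$, and \Cref{obs:last-vertices-dont-change-level} ensures that the surviving members of $E_i$ keep their levels. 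The crucial observation is $\ell_i(a_{i+1}^{out})=l$: the layered traversal processes queues top-down and halts exactly when the direct edge $(a_{i+1}^{out},t)$ is encountered, placing $a_{i+1}^{out}$ at the smallest layer visited. Together with $\ell_{i+1}(v_{i+1}^{out})=l+1$, this yields $\Delta\mathrm{End}(l')=\mathbb{1}[l'\ge l+1]-\mathbb{1}[l'\ge l]$, so $\Delta d(l')=\mathbb{1}[l'=l]$.

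The main obstacle is the edge-type case analysis supporting the decomposition identity: although each case is brief, it requires careful use of both \invA{} and \invB{} to rule out flow on every edge class except the demand edges (all saturated at $f=1$) and the terminal edges. Once the identity is in place, the lemma follows from a short arithmetic calculation driven by \Cref{obs:end-vertices}, \Cref{obs:last-vertices-dont-change-level}, and the stopping rule of the layered traversal.
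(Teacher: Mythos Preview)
Your proposal is correct and follows essentially the same approach as the paper: both establish the identity $|f^*_j| = d(L^{\le l'}_j) + |E_j^{\le l'}|$ from \invAB{}, then difference it between steps $i$ and $i+1$ using \Cref{obs:end-vertices} and \Cref{obs:last-vertices-dont-change-level}. The only notable point is that applying the identity at step $i+1$ requires the invariants to hold for $\flowG_{i+1}, f^*_{i+1}, \ell_{i+1}$, which is precisely \Cref{lemma:invAB-before-merging}; you use it implicitly but should cite it explicitly, as the paper does.
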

\begin{proof}
By \invAB{} we have that
\begin{align*}
    f^*_i(L^{\le l'}_{i}) = |f^*_i| = d(L^{\le l'}_{i}) + \sum_{u^{out} \in L^{\le l'}_{i}} f^*_i(u^{out}, t) = d(L^{\le l'}_{i}) + |\{u^{out} \in L^{\le l'}_{i} \cap E_i\}|
\end{align*}
Moreover, by \Cref{lemma:invAB-before-merging}, defining $E_{i}^{\le l'} = E_{i} \cap L^{\le l'}_{i}$ and reordering, we have

\begin{align*}
    d(L^{\le l'}_{i}) &= |f^*_{i}| - |E_{i}^{\le l'}|\\
    d(L^{\le l'}_{i+1}) &= |f^*_{i+1}| - |E_{i+1}^{\le l'}|
\end{align*}

Consider $ d(L^{\le l'}_{i+1})- d(L^{\le l'}_{i})$, then by \Cref{obs:end-vertices,obs:last-vertices-dont-change-level}, we have
\begin{align*}
     d(L^{\le l'}_{i+1})- d(L^{\le l'}_{i}) &= |f^*_{i+1}| - |f^*_{i}| + |E_{i}^{\le l'}| - |(E_{i}^{\le l'} \cup \{v_{i+1}^{out}\} \setminus A_{i+1}\}) \cap L_{i+1}^{\le l'}|\\
     &= |f^*_{i+1}| - |f^*_{i}| + |E_{i}^{\le l'}| - (|E_{i}^{\le l'}| + |\{v_{i+1}^{out}\} \cap L_{i+1}^{\le l'}| - |A_{i+1}\cap L_{i+1}^{\le l'}|)\\
     &= |f^*_{i+1}| - |f^*_{i}| + |A_{i+1}\cap L_{i+1}^{\le l'}| - |\{v_{i+1}^{out}\} \cap L_{i+1}^{\le l'}|
\end{align*}

If a decrementing path was not found then $|f^*_{i+1}| = |f^*_{i}| + 1$ and $A_{i+1} = \emptyset$, thus
\begin{align*}
    d(L^{\le l'}_{i+1})- d(L^{\le l'}_{i}) &= 1 - |\{v_{i+1}^{out}\} \cap L_{i+1}^{\le l'}|
\end{align*}

Since $\ell_{i+1}(v_{i+1}^{out}) = l+1 = 1$, then $d(L^{\le l'}_{i+1})- d(L^{\le l'}_{i}) = 0, \forall l' \in \{0,\ldots,M_{i}\}\setminus\{l\}$, and $d(L^{\le 0}_{i+1})- d(L^{\le 0}_{i})= 1$. Otherwise (a decrementing path was found), $|f^*_{i+1}| = |f^*_{i}|$ and $A_{i+1} = \{a_{i+1}^{out}\}$, thus 
\begin{align*}
    d(L^{\le l'}_{i+1})- d(L^{\le l'}_{i}) &= |\{a_{i+1}^{out}\} \cap L_{i+1}^{\le l'}| - |\{v_{i+1}^{out}\} \cap L_{i+1}^{\le l'}|
\end{align*}

Since $\ell_{i+1}(v_{i+1}^{out}) = l+1 = \ell_{i+1}(a_{i+1}^{out}) + 1$, then $d(L^{\le l'}_{i+1})- d(L^{\le l'}_{i}) = 0, \forall l' \not= l$, and $d(L^{\le l}_{i+1})- d(L^{\le l}_{i}) = 1$.
\end{proof}

\Cref{lemma:level-path-changes} shows that \invC{} is also (inductively) maintained since the demand of the layered cuts remain the same as in the previous iteration, except $d(L^{\le l})$, which increases by one. If this increment breaks \invC{}, i.e. $d(L_{i+1}^{\le l}) = d(L_{i+1}^{\le l-1})$, we decrement the level of every vertex $u$, $\ell_{i+1}(u) \ge l$, by one. We call this procedure \emph{merge} of layer $l$ and runs in $O(1)$ time per vertex of level $l$ or more. Note that the merge of layer $l$ naturally repairs \invC{}. Moreover, \invAB{} are also maintained after the merge. Indeed, for \invA{} it suffices to note that if $\ell_{i+1}(u) = \ell_{i+1}(v)$, then both decrement their level by one. As for \invB{}, we note that (by construction) layer $0$ is never merged (second part), the merge does not modify the flow (first part flow condition), and (before the merge) there are no vertices $u^{out}$ with $f^*_{i+1}(u^{out}, t) > 0$ at layer $l$\footnote{The condition $d(L_{i+1}^{\le l}) = d(L_{i+1}^{\le l-1})$ that triggers the merge implies that $|E^{\le l-1}_{i+1}| = |E^{\le l}_{i+1}|$, see \Cref{lemma:level-path-changes}.} (first part level condition). \Cref{fig:progressive} illustrates the evolution of the level assignment in a step of the algorithm.

\begin{figure}[t]
    \centering
    \begin{subfigure}[b]{0.32\textwidth}
        \centering
        \includegraphics[width=0.95\textwidth]{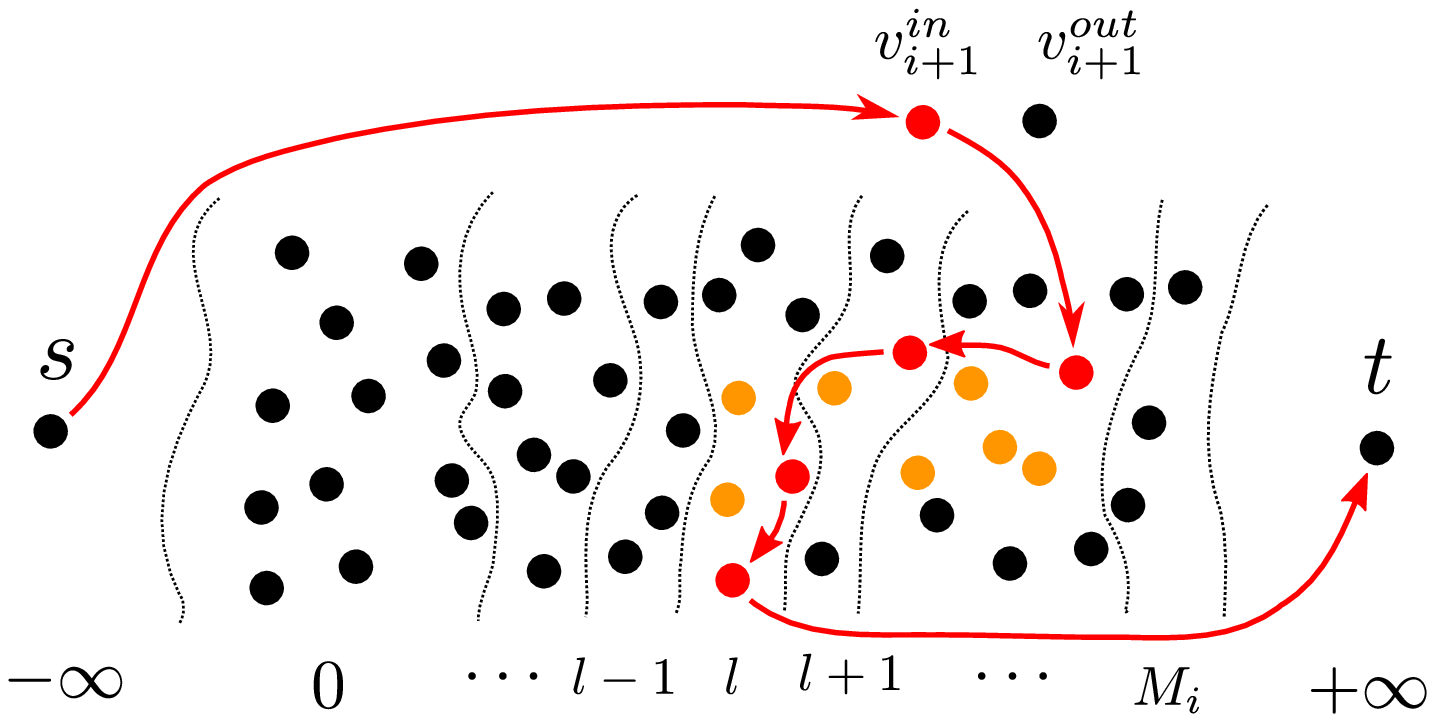}
        \caption[]%
        {{\small Layered traversal}}
        \label{subfig:traversal}
    \end{subfigure}
    \begin{subfigure}[b]{0.32\textwidth}
        \centering
        \includegraphics[width=0.95\textwidth]{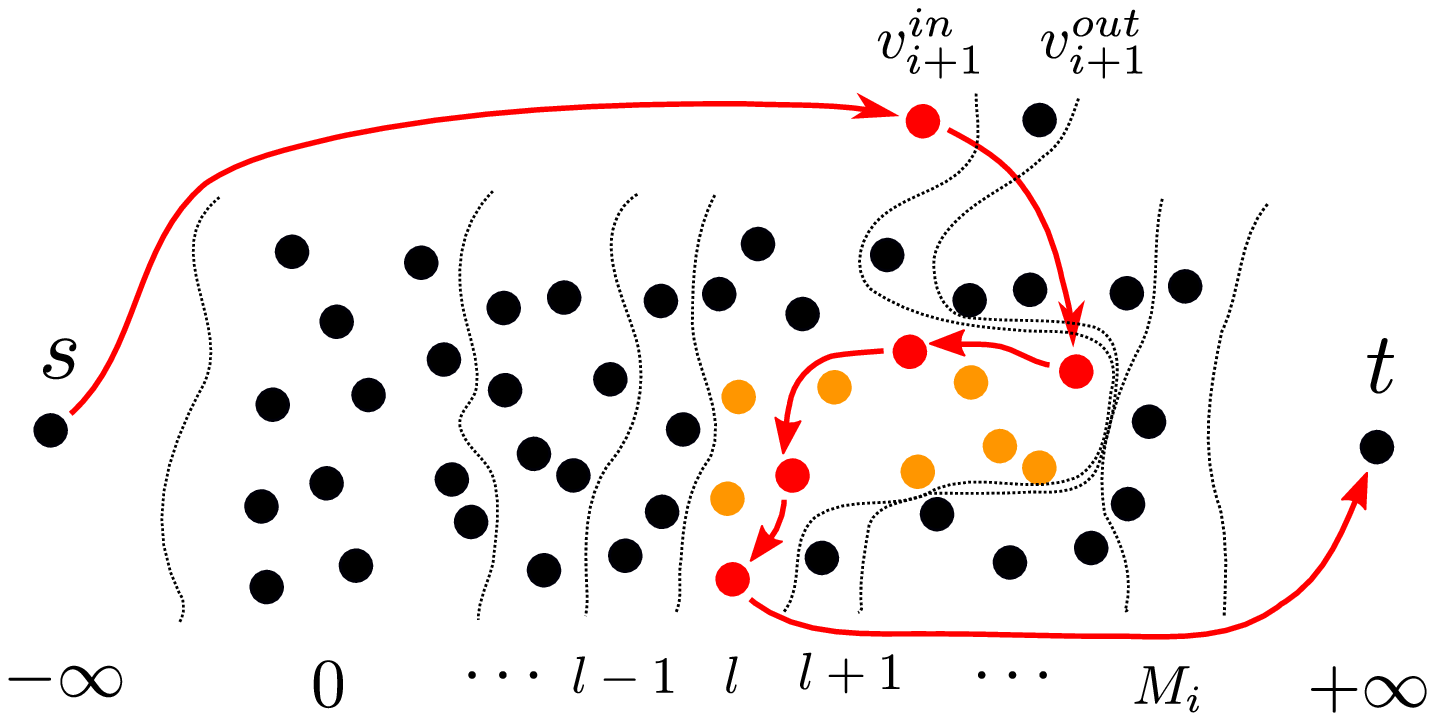}
        \caption[]%
        {{\small Level updates}}
        \label{subfig:updates}
    \end{subfigure}
    \begin{subfigure}[b]{0.32\textwidth}
        \centering
        \includegraphics[width=0.95\textwidth]{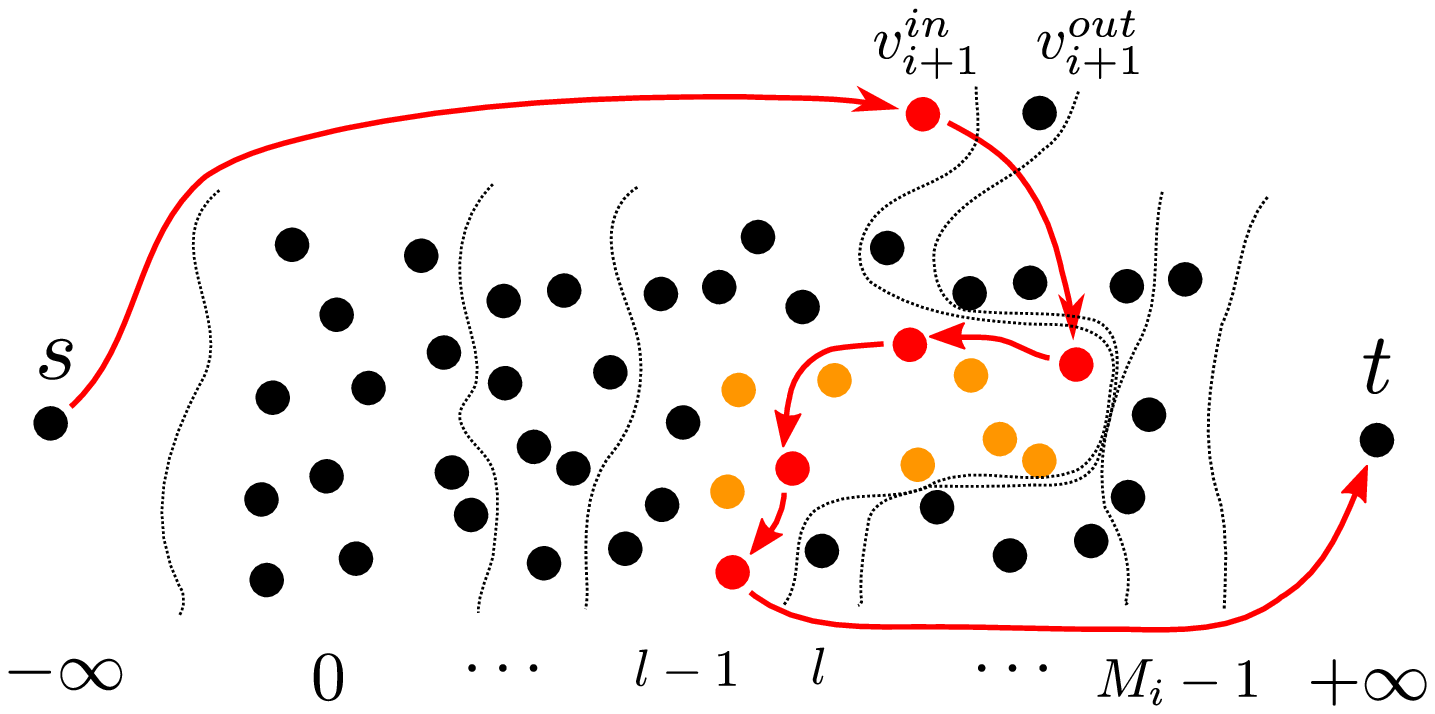}
        \caption[]%
        {{\small Merge of layer $l$}}
        \label{subfig:merge}
    \end{subfigure}
    \caption[]
    {\small Execution of our algorithm in an abstract example graph. Edges and their flow are absent for simplicity. Layers are divided by dotted vertical strokes, $M_i = \max_{v\in\flowV_i\setminus\{t\}}\ell_i(v)$. \Cref{subfig:traversal} shows a decrementing path in $\residual{\flowG_{i+1}}{t_{i+1}}$ (red) found by the layered traversal as well as all vertices visited (red and orange), $l$ is the smallest layer visited. \Cref{subfig:updates} shows the updates to the level assignment, all vertices visited by the traversal get level $l$, and $v_{i+1}^{out}$ gets level $l+1$. \Cref{subfig:merge} shows the result of merging layer $l$, all vertices of level $l$ or more decrease their level by one.}
    \label{fig:progressive}
\end{figure}

Finally, for the path updates we proceed as follows. If no decrementing path was found then we set $\pathcover_{i+1} \gets \tmp_{i+1}$ and also $path(v_{i+1}) \gets |f^*_{i+1}|$. Otherwise, we decompose the flow $f^*_{i+1}$ in the vertices of $L^{>l-1}_{i+1}$ and join it to $\tmp_{i+1} \cap L^{\le l-1}_{i+1}$. The correctness of the previous idea follows from the fact that $f^*_{i+1}(u,v) = t_{i+1}(u,v)$ when $u \in L^{\le l-1}_{i+1}$. Analogously, it is not necessary to change $path(v)$ for $v^{in} \in L^{\le l-1}_{i+1}$. To decompose $f^*_{i+1}$ in $L^{>l-1}_{i+1}$ we start decomposing each path from $t$ back to a vertex $v$ of level $\le l-1$, once we get to $v$ we join the decomposed suffix path $S'$ to the corresponding path from $\tmp_{i+1}$ ending at $v$, and set $path(u) \gets path(v)$ for $u \in S'$. Since every vertex in $L^{>l-1}_{i+1}$ can belong to up to $|f^*_{i+1}|$ paths, the running time of the entire procedure can be bounded by $O(|f^*_{i+1}|)$ per vertex in $L^{>l-1}_{i+1}$, which dominates the running time of the layered traversal.

\subsection{Running time analysis}\label{sec:running-time}
Note that the running time of step $i+1$ is bounded by $O(|N^{-}(v_{i+1})|)$ (from sparsification) plus $O(|f^*_{i+1}|) = O(k)$ per vertex whose level is $l$ or more, where $l$ is the smallest level visited by the layered traversal in  $\residual{\flowG_{i+1}}{t_{i+1}}$. The first part adds up to $O(|E|)$ for the entire algorithm, whereas for the second part we show that every vertex is charged $O(k)$ only $O(k^2)$ times in the entire algorithm, thus adding up to $O(k^3|V|)$ in total. Let $i$ be the iteration where a vertex $u$ was added, and consider the sequence $\left(d(L_{i}^{\le 0}), d(L_{i}^{\le 1}),\ldots,d(L_{i}^{\le \ell_{i}(u)})\right)$ and its evolution until $\left(d(L_{|V|}^{\le 0}), d(L_{|V|}^{\le 1}),\ldots,d(L_{|V|}^{\le \ell_{|V|}(u)})\right)$.  Note that every time $u$ is charged $O(k)$, then the smallest level visited in that step must be $l\le \ell(u)$. Moreover, by \Cref{lemma:level-path-changes}, any update that charges $u$ changes exactly one value in the sequence ($d(L^{\le l})$ is incremented by one), and possibly truncates the sequence on the right due to $u$'s level being lowered. By \invC{}, the sequence is always strictly decreasing, and since $d(L^{\le 0}) \le k$, it is updated $O(k^2)$ times.

\subsection{The improved algorithm}
We start the description of the $O(k^2|V| + |E|)$ time algorithm with a more refined analysis to that of \Cref{sec:running-time}. 

We have established that the running time of the layered traversal (\Cref{sec:layered-traversal}) and subsequent updates (\Cref{sec:level-updates}) of the iteration $i+1$, is bounded by $O(k)$ per vertex of level $l$ or more, where $l$ is the smallest visited level. However, we can obtain tighter bounds by breaking down the running time analysis as follows:
\begin{enumerate}
    \item[a)] $O(k)$ per vertex $u$ with $\ell_{i+1}(u) = l$. For the running time of the layered traversal, the update of $f^*_{i+1}$, and the update of $\ell_{i+1}$ before a possible merge.
    \item[b)] $O(1)$ per vertex of level $l$ or more. For the running time of the merge of level $l$.
    \item[c)] $O(k)$ per vertex of level $l$ or more. For the running time of the path updates that comes from decomposing $f^*_{i+1}$ into $\pathcover_{i+1}$.
\end{enumerate}

\begin{obs}
Running times a) and b) sum up to $O(k^2|V|)$.
\end{obs}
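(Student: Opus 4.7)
My plan is to analyze the sum of a) and b) by amortization, using the potential $\Phi_i(u) := d(L_i^{\le \ell_i(u)})$ defined for every existing vertex $u \in \flowV_i\setminus\{s,t\}$. By \invC{} and \Cref{lemma:layeredCutMaxCut}, $\Phi(u)$ takes values in $[0, k]$, so on any single vertex $\Phi$ can only strictly increase at most $k$ times over the course of the algorithm. The strategy is to identify each such increase with a distinct (a) charge, which will cap the total (a) work at $O(k) \cdot k|V| = O(k^2 |V|)$.

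The key step is to show that every vertex $u$ counted in the (a) charge of iteration $i+1$ strictly increases $\Phi(u)$ by at least one. By definition, (a) charges each $u$ with $\ell_{i+1}(u) = l$ before the possible merge, where $l$ is the smallest visited level. Such a $u$ is either (i) visited by the layered traversal, in which case \invA{} forces $\ell_i(u) \ge l$ and the update sets $\ell_{i+1}(u) = l$; or (ii) not visited and already at level $\ell_i(u) = l$. In both subcases, \Cref{lemma:level-path-changes} yields $d(L_{i+1}^{\le l}) = d(L_i^{\le l}) + 1$ and leaves $d(L^{\le l'})$ unchanged for $l' \ne l$; combined with the strict monotonicity of $d$ across layers from \invC{}, this gives $\Phi_{i+1}(u) \ge \Phi_i(u) + 1$ before the merge. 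The main obstacle I foresee is handling the subsequent merge (if any): I plan a short case analysis on whether $\ell_i(u)$ lies below, at, or above $l$, invoking both the merge trigger $d(L^{\le l}) = d(L^{\le l-1})$ and the index-shift $L_{\text{new}}^{\le j} = L_{\text{old}}^{\le j+1}$ for $j \ge l-1$, to conclude that the merge leaves $\Phi(u)$ exactly unchanged. The same case distinction, applied to vertices not counted in (a), should yield global monotonicity $\Phi_{i+1}(u) \ge \Phi_i(u)$, completing the amortization of (a).

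For the (b) contribution I will argue directly, without appealing to $\Phi$. The merge of layer $l$ touches only vertices of level $\ge l$, each losing one unit of level. Since every vertex is introduced with level at most $M_i + 1 \le k + 1$, its level is bounded below by zero, and any visit can only further decrease it, so each vertex participates in at most $O(k)$ merges. Summing gives $O(k|V|)$ total (b) work, which is absorbed by the $O(k^2|V|)$ bound for (a), yielding the claimed overall bound.
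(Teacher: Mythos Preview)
Your proposal is correct. For part (a) you use the same potential $\Phi_i(u)=d(L_i^{\le \ell_i(u)})$ that the paper (implicitly) uses, and your planned case analysis is sound: each (a)-charge strictly raises $\Phi(u)$ via \Cref{lemma:level-path-changes} and \invC{}, the merge leaves $\Phi$ unchanged because $L_{\text{new}}^{\le j}=L_{\text{old}}^{\le j+1}$ exactly compensates the level drop, and non-charged vertices keep $\Phi$ constant by \Cref{lemma:level-path-changes}. This is the paper's argument, only stated more carefully; the paper asserts $d(L_{i+1}^{\le \ell_{i+1}(u)})\ge d(L_i^{\le \ell_i(u)})+1$ in one line and does not explicitly verify monotonicity in non-charged steps or merge-invariance, both of which you supply.

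For part (b) you take a genuinely different route. The paper simply reuses the $O(k^2)$-per-vertex sequence argument from \Cref{sec:running-time} (now with $O(1)$ charges), obtaining $O(k^2|V|)$. Your observation that levels are non-increasing after creation---set to $l$ by a visit, decremented by one in a merge, and initialized at most $M_i+1\le k+1$---is correct and yields the tighter bound $O(k|V|)$ for (b) alone, since each merge that touches $u$ consumes one unit of level. This is more elementary and sharper than the paper's argument, though the gain is immaterial for the overall statement since (a) already forces $O(k^2|V|)$.
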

\begin{proof}
For running time a) note that every time a vertex $u$ is charged in some iteration $i+1$ $\ell_{i+1}(u) = l$. Then, by \Cref{obs:last-vertices-dont-change-level}, $d(L_{i+1}^{\le \ell_{i+1}(u)}) \ge d(L_{i}^{\le \ell_{i}(u)}) + 1$\footnote{The inequality can be strict. For example, if $\ell_{i+1}(u) < \ell_{i}(u)$.}, and since $d(L^{\le \ell(u)}) \le k$, $u$ is charged $O(k)$ times. As for running time b) we repeat the argument given in \Cref{sec:running-time}, but now every charge is only $O(1)$.
\end{proof}

To obtain our improved $O(k^2|V|+|E|)$ time algorithm we will compute the $path$ function (path ID of some path in $\pathcover_{i+1}$ containing a vertex) without maintaining $\pathcover_{i+1}$. More specifically, we show that we can maintain the $path$ function by decomposing $f^*_{i+1}$ only in edges of level $l$\footnote{At least one of the vertices has level $l$.}, which adds up to $O(k^2|V|)$ as in running time a), plus $O(1)$ per vertex of level $l$ or more, which also adds up to $O(k^2|V|)$ as in running time b). To achieve this, we will first show that the layered cuts maintained by our algorithm correspond to a structure of decreasing size antichains.

\subsubsection{Structure of antichains}\label{sec:antichain-structure}

Let us fix the iteration $i$ of the algorithm. Recall that the $l$-th layered cut is a ow-cut defined as $L_{i}^{\le l} = \bigcupdot_{j = -\infty}^{l} L_{i}^{j}$, and its complement $L_{i}^{>l} = \flowV_i \setminus L_{i}^{\le l} = \bigcupdot_{j = l+1}^{\infty} L_{i}^{j}$ (see \Cref{sec:levels-and-invariants}). By construction (of $\flowG_i$), $d(L_{i}^{\le l}) = |\{(u^{in}, u^{out}) \in \flowE_{i} \mid \ell_{i}(u^{in}) \le l \land \ell_{i}(u^{out}) > l\}|$, thus we can define $A_{i}^{l}$ as the vertices (in $G_{i}$) represented by those edges, that is $A_{i}^{l} = \{u \in V_{i} \mid \ell_{i}(u^{in}) \le l \land \ell_{i}(u^{out}) > l\}$.

\begin{obs}
$A_{i}^{l}$ is an antichain in $G_{i}$.
\end{obs}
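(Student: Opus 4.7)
The plan is to prove this by contradiction, leveraging \invA{} to forbid edges going ``backward'' across layered cuts in $\flowG_i$. Suppose there exist distinct $u, v \in A_i^l$ with $u$ reaching $v$ in $G_i$ via some path $u = w_1, w_2, \ldots, w_r = v$. Lifting this path to $\flowG_i$ by splitting each vertex, I obtain a path
\[
u^{in}, u^{out}, w_2^{in}, w_2^{out}, \ldots, w_{r-1}^{in}, w_{r-1}^{out}, v^{in}, v^{out}
\]
entirely in $\flowE_i$, none of whose internal vertices are $s$ or $t$.

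By definition of $A_i^l$, the vertex $u^{out}$ lies in $L_i^{>l}$ (since $\ell_i(u^{out}) > l$) while $v^{in}$ lies in $L_i^{\le l}$ (since $\ell_i(v^{in}) \le l$). Walking along the lifted path from $u^{out}$ to $v^{in}$, there must be a first edge $(x, y) \in \flowE_i$ that crosses from $L_i^{>l}$ to $L_i^{\le l}$, i.e., with $\ell_i(x) > l \ge \ell_i(y)$. Since this edge sits in the interior of the lifted path, $\{x, y\} \cap \{s, t\} = \emptyset$.

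Now the main step: by the definition of the residual network (see \Cref{sec:min-flow}), the reverse edge $(y, x)$ belongs to $\residual{\flowG_i}{f^*_i}$. Applying \invA{} to this residual edge yields $\ell_i(y) \ge \ell_i(x)$, which directly contradicts $\ell_i(x) > l \ge \ell_i(y)$. Hence no such reaching path can exist, and $A_i^l$ is an antichain in $G_i$.

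The argument is essentially a one-line consequence of \invA{} once the lifting is set up correctly; the only subtlety to double-check is that the endpoints of the crossing edge are genuinely internal (not $s$ or $t$), which is automatic because the path lift only traverses edges between split copies of original vertices in $V_i$. No heavier machinery (merges, flow conservation, or \invBC{}) is needed for this particular statement.
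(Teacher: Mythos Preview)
Your proof is correct and follows essentially the same approach as the paper. The paper compresses the argument by invoking the already-established fact that layered cuts are ow-cuts (which itself was derived from \invA{} via exactly your reverse-edge observation), whereas you unfold that step explicitly; the underlying idea---a path from $u^{out} \in L_i^{>l}$ to $v^{in} \in L_i^{\le l}$ would force an edge of $\flowG_i$ to cross the cut in the wrong direction, contradicting \invA{}---is identical.
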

\begin{proof}
Since $L_{i}^{\le l}$ is a ow-cut the edges exiting the cut (including those representing vertices in $A_{i}^{l}$) form an antichain (in $\flowG_i$), since a path connecting two such edges contradicts the cut being one way. This relation between ow-cuts and antichains has been studied before, see e.g.~\cite{mohring1985algorithmic,rademaker2012optimal,pijls2013another,marchal2018parallel}.
\end{proof}

\begin{figure}[t]
    \centering
    \begin{subfigure}[b]{0.48\textwidth}
        \centering
        \includegraphics[width=0.95\textwidth]{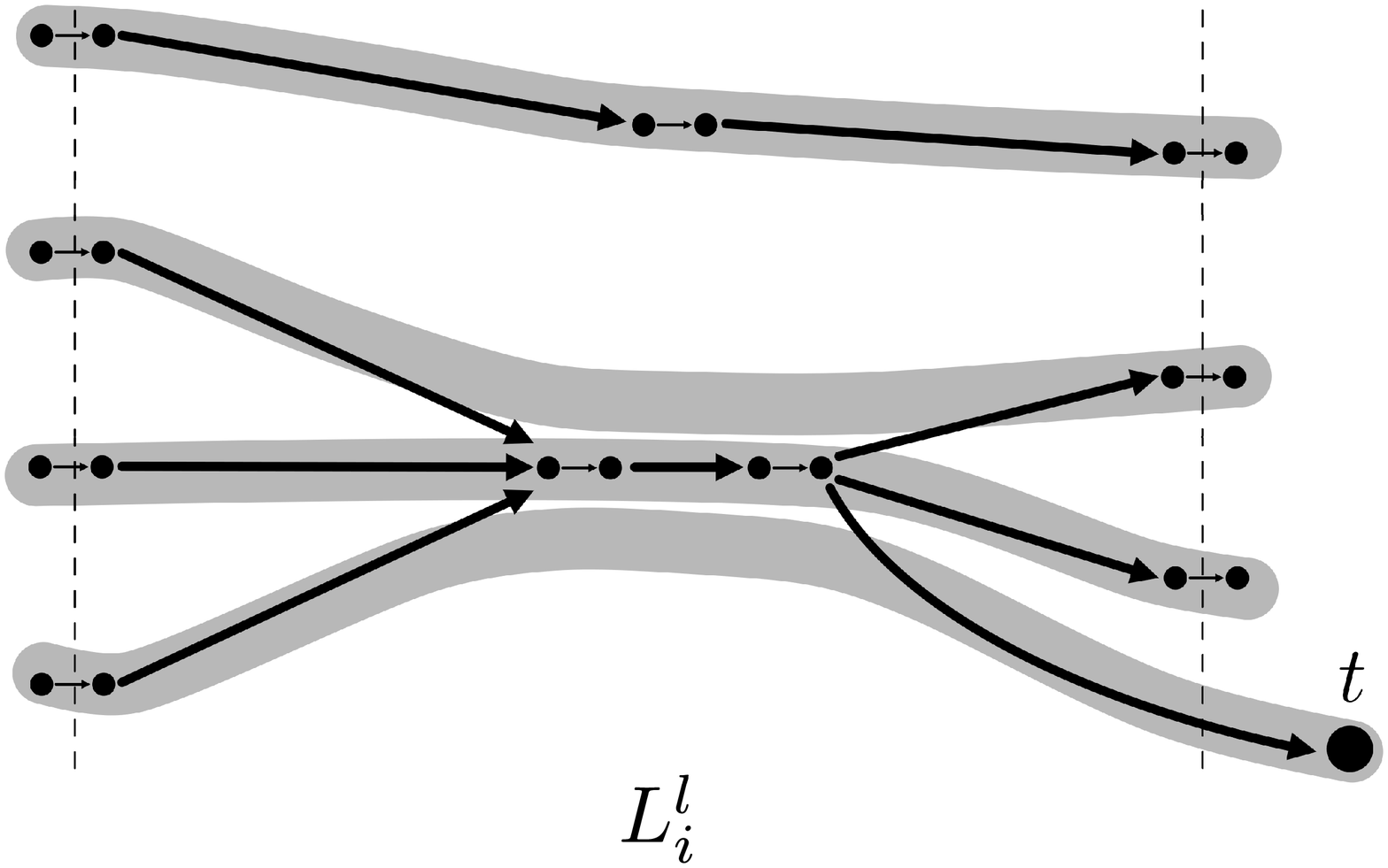}
        \caption[]%
        {{\small Flow decomposition of layer $l$}}
        \label{subfig:layerDecomposition}
    \end{subfigure}
    \begin{subfigure}[b]{0.48\textwidth}
        \centering
        \includegraphics[width=0.95\textwidth]{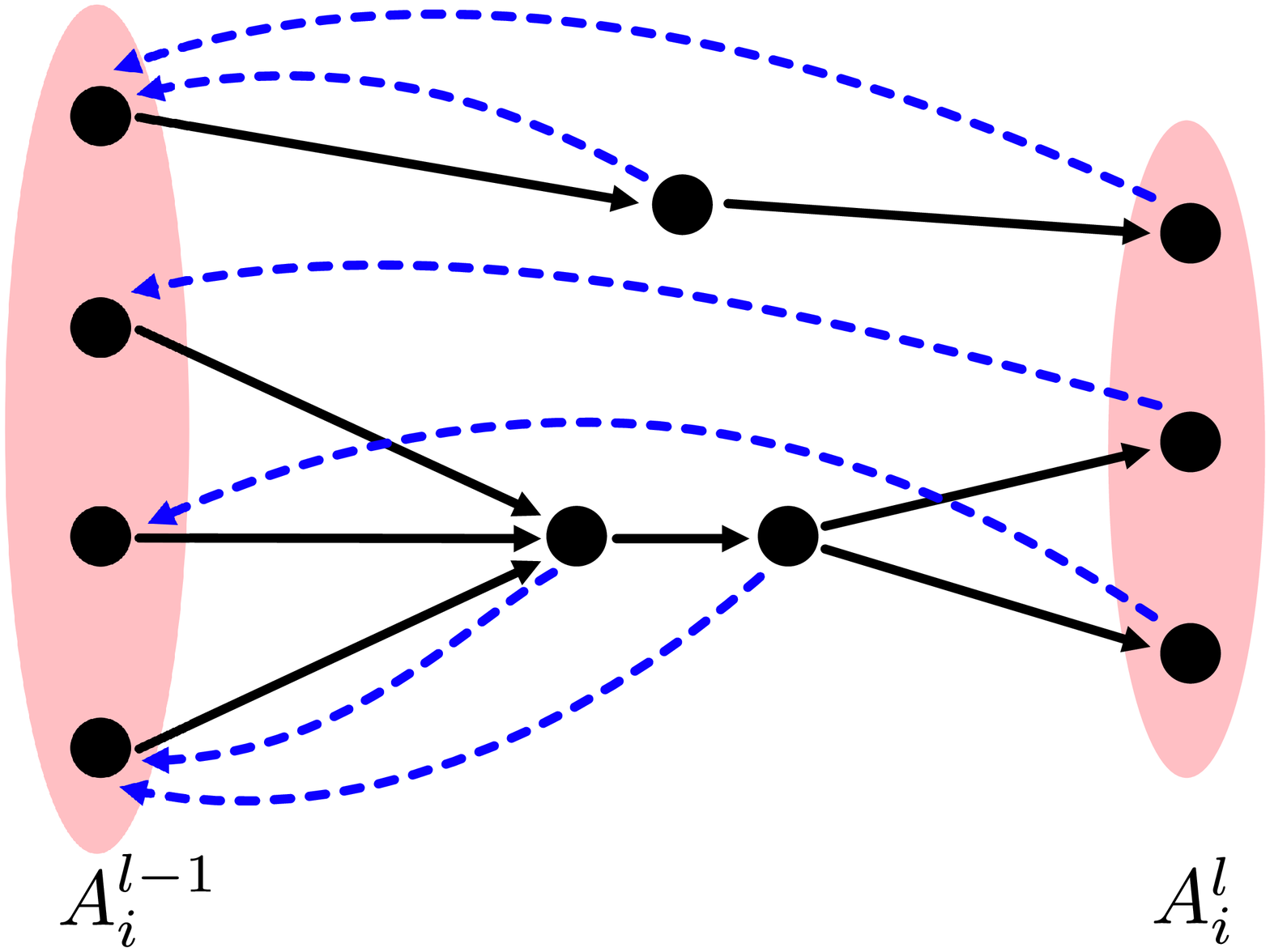}
        \caption[]%
        {{\small Layered antichains and back links}}
        \label{subfig:antichainAndLinks}
    \end{subfigure}
    \caption[]
    {\small Relation between $\pathcover_{i}$ (not maintained), the layered antichains and the back links in a layer $l$. \Cref{subfig:layerDecomposition} shows the vertices in layer $l$ (delimited within dashed vertical lines) and a flow decomposition of that layer (paths highlighted in gray). \Cref{subfig:antichainAndLinks} shows the corresponding vertices in $G$, the layered antichains $A_{i}^{l-1}, A_{i}^{l}$ (red ovals) and the back links of every vertex (blue dashed arrows) as obtained by running our maintenance procedure (see \Cref{sec:backLinksMaintenance}) when decomposing the paths from top to bottom.}
    \label{fig:layederAntichainsAndBackLinks}
\end{figure}

\invC{} implies that $|A_{i}^{l}| > |A_{i}^{l'}|$ if $0 \le l < l' \le M_i-1$. Therefore, $A_{i}^{0}, \ldots, A_{i}^{M_i-1}$ is a sequence of size-decreasing antichains implicitly maintained by our algorithm. We call these antichains, \emph{layered antichains}, and the vertices on them, \emph{antichain vertices}. By \invA{}, antichain vertices belong to exactly one path of $\pathcover_{i}$. Our improved algorithm will maintain the (only possible) path ID on antichain vertices, and for the rest of vertices $v$ it will maintain a \emph{back link} to an antichain vertex $u = backlink(v)$, such that $u$ is a predecessor of $v$ in some path of $\pathcover_{i}$. As such, we can compute the $path$ function in $O(1)$ time either directly or by first taking the corresponding back link. Note that a vertex $v$ is an antichain vertex if and only if $\ell_{i}(v^{in}) < \ell_{i}(v^{out})$, which can be checked in $O(1)$ time. See \Cref{fig:layederAntichainsAndBackLinks} for an example of these concepts.

\subsubsection{Back link maintenance}\label{sec:backLinksMaintenance} After applying the flow and level updates described in \Cref{sec:level-updates} (and only if a decrementing path was found) we decompose $f^*_{i+1}$ (only) on the edges of layer $l$, the smallest visited level. Note that a path decomposed in layer $l$ connects a vertex $u \in A_{i+1}^{l-1}$ to a vertex $w \in A_{i+1}^{l}$. After decomposing such a path, we set the back link of all vertices $v$ in the path to $backlink(v) \gets u$ and additionally we also store $newlink(v) \gets w$. Note that the running time of this decomposition is bounded by $O(k)$ per vertex of level $l$ (running time a)).

After decomposing layer $l$, we process the edges in $L_{i+1}^{>l}$ and for every edge of the form $(v^{in}, v^{out})$ we check whether the back link $u$ of $v$ is an antichain vertex. If it is not the case, we set $backlink(v) \gets newlink(u)$. Note that the running time of this procedure is bounded by $O(1)$ per vertex of level $l$ or more (running time b)). The correctness of this procedure is given by the following lemma.

\begin{lemma}\label{lemma:backlinks-decomposition}
After applying the back link maintenance procedure it holds that there is a path in $\pathcover_{i+1}$ where $backlink(v)$ is predecessor of $v$, for all $v\in V_{i+1}$.
\end{lemma}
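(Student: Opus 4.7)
The plan is to proceed by induction on the iteration $i$, constructing an explicit flow decomposition $\pathcover_{i+1}$ of $f^*_{i+1}$ that witnesses the claim. The base case $i=0$ is vacuous. For the inductive step, I fix $l$ as the smallest layer visited by the traversal (after any subsequent merge), and I use the crucial fact that $f^*_{i+1}$ differs from $t_{i+1}$ only along the decrementing path, whose vertices all receive level $l$ in $\ell_{i+1}$. Consequently, the flow value on any edge both of whose endpoints lie in $L^{\le l-1}_{i+1}$, or both of whose endpoints lie in $L^{> l}_{i+1}$, is identical to that in $t_{i+1}$. This lets me keep almost all of $\pathcover_i$ and surgically replace only the sub-paths crossing layer $l$.

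I would partition $V_{i+1}$ into three regions relative to $l$ and verify the claim separately. For $v$ with $\ell_{i+1}(v^{out}) \le l-1$, the back link was inherited from iteration $i$; since the level assignment below $l$ is unchanged and the flow there is unchanged, the antichain vertex $backlink(v)$ is still an antichain vertex and the path of $\pathcover_i$ witnessing predecessorship survives unchanged in $\pathcover_{i+1}$, so the inductive hypothesis applies. For vertices $v$ that sit on one of the decomposed paths of layer $l$, by construction of the procedure we set $backlink(v) \gets u$ where $u \in A^{l-1}_{i+1}$ is the starting antichain vertex of that decomposed path, so $u$ precedes $v$ in the decomposed sub-path, which I splice with the (unchanged) portions of $\pathcover_i$ below $l-1$ and above $l$ to obtain a full path in $\pathcover_{i+1}$.

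The delicate region is $\ell_{i+1}(v^{in}) > l$. Here the back link is inherited from iteration $i$, but its validity must be re-established. Case (i): the previous $backlink(v) = u$ is still an antichain vertex in $\ell_{i+1}$. Then $\ell_{i+1}(u^{out}) > l$ (otherwise $u$ would have collapsed into layer $l$), so $u$ and $v$ lie entirely in $L^{> l}_{i+1}$, the flow on the edges between them is unchanged, and the path of $\pathcover_i$ through $u$ and $v$ is reused verbatim in $\pathcover_{i+1}$. Case (ii): $u$ ceased to be an antichain vertex, which by the definition $\ell(u^{in}) < \ell(u^{out})$ can only happen if both $u^{in}$ and $u^{out}$ were visited by the traversal and now share level $l$; hence $u$ appears on some decomposed path of layer $l$, which ends at a unique antichain vertex $w = newlink(u) \in A^l_{i+1}$. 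The procedure then sets $backlink(v) \gets w$, and I must argue $w$ precedes $v$ in $\pathcover_{i+1}$.

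The main obstacle is exactly verifying case (ii); the rest is bookkeeping. The argument will go as follows: the old path $P \in \pathcover_i$ containing $u$ and $v$ traverses, after $u$, edges with sources in $L^{> l}_{i}$ (hence also in $L^{> l}_{i+1}$ once the merge is accounted for) on which the flow is preserved, so the suffix of $P$ from $u$'s successor to $v$ is a flow path in $f^*_{i+1}$ above layer $l$. In the decomposition I construct, this suffix is concatenated with the decomposed sub-path in layer $l$ whose terminus is exactly $w$: the flow conservation on $w^{out}$ forces the unit of flow entering $w^{out}$ from $w^{in}$ to exit along an edge of $L^{> l}_{i+1}$, and by matching the outgoing flow of $w$ with the unique such suffix passing through the old $u$-to-$v$ sub-path, the resulting path of $\pathcover_{i+1}$ contains $w$ followed by that suffix, and in particular by $v$. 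An inductive matching argument, processing decomposed paths of layer $l$ one at a time and pairing each with a surviving old suffix, will make this rigorous and finish the proof.
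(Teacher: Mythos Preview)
Your proof follows the same three-region case analysis as the paper: vertices with $v^{out}$ below layer $l$ are handled inductively, vertices touching layer $l$ are handled by the fresh decomposition, and vertices entirely above layer $l$ are split according to whether the old backlink $u$ remains an antichain vertex. The paper's proof is considerably terser than yours in the last sub-case; where you sketch an explicit matching between the decomposed layer-$l$ sub-paths and the surviving suffixes above layer $l$ to build $\pathcover_{i+1}$, the paper simply asserts that the decomposition ``led to'' $u'=\mathit{newlink}(u)$ and leaves the existence of a witnessing path implicit.

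One point to tighten in your Case~(i): the inference ``$\ell_{i+1}(u^{out}) > l$, otherwise $u$ would have collapsed into layer $l$'' is not quite right. It is possible that $u^{out}$ is visited (hence drops to level $l$) while $u^{in}$, sitting at some level strictly below $l$, is merely enqueued but never dequeued because the traversal halts at $l$; then $u$ is still an antichain vertex with $\ell_{i+1}(u^{out})=l$. The paper sidesteps this by making the weaker claim that every \emph{edge} on the old $u$-to-$v$ path has at least one endpoint in $L_{i+1}^{>l}$ (see its footnote), which is what is actually needed to conclude the flow on that path is unchanged. Your matching argument from Case~(ii) in fact covers this boundary situation as well, so the fix is to fold it in rather than to rely on $u$ lying strictly above layer $l$.
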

\begin{proof}
If $v^{out} \in L_{i+1}^{\le l-1}$, then the property (inductively) holds since $\flowG_{i+1}[L_{i+1}^{\le l-1}]$ was not affected by iteration $i+1$. If $v^{in} \in L_{i+1}^{l}$ or $v^{out} \in L_{i+1}^{l}$, then the property holds since the edge $(v^{in}, v^{out})$ was decomposed while decomposing $f^*_{i+1}$ in layer $l$. If $v^{in}, v^{out} \in L_{i+1}^{> l}$, consider $u$ to be the back link of $v$ before the update and $u'$ the back link of $v$ after the update. If $u = u'$, then the path from $u$ to $v$ in $\pathcover_{i}$ only uses vertices in $L_{i+1}^{> l}$\footnote{At least one of the vertices in the corresponding edge is in $L_{i+1}^{> l}$.}, and the property (inductively) holds since $\flowG_{i+1}[L_{i+1}^{> l}]$ was not affected by iteration $i+1$. If $u \not = u'$, then $(u^{in}, u^{out})$ was decomposed while decomposing $f^*_{i+1}$ in layer $l$. This decomposition led to edge $(u'^{in}, u'^{out})$, set $newlink(u) \gets u'$ and later set $backlink(v) \gets newlink(u) = u'$ (since $u$ is no longer an antichain vertex).
\end{proof}

Finally, we update the path ID of the antichain vertices by using their back links. More specifically, for each antichain vertex $v \in A_{i+1}^{l'}$ (these can be discovered during the back link maintenance procedure) whose path ID has not been updated and such that $l' \ge l$, we iteratively take back links until reaching a vertex $u \in A_{i+1}^{l''}$ with $l'' < l$ (or whose path ID was already updated), and update the path ID of all such vertices (including $v$) to the path ID of $u$. Note that this procedure runs in $O(1)$ time per antichain vertex, and since there are $O(k^2)$ such vertices ($O(k)$ antichains of size $O(k)$ each), it runs in $O(k^2)$ per iteration, thus $O(k^2|V|)$ for the whole algorithm.

\section{Support Sparsification Algorithm}
\label{sec:edge-thinning}
We present an algorithm that transforms any path cover $\pathcover, |\pathcover| = t$ of a DAG $G = (V, E)$ into one of the same size and using less than $2|V|$ distinct edges, in $O(t^2|V|)$ time (\Cref{thm:edge-thinning}). The main approach consists of splicing paths so that edges are removed from the \emph{support} $E_{\pathcover} = \{e \in P \mid P \in \pathcover\}$. It maintains a path cover $\pathcover', |\pathcover'| = t$ of $G' = (V, E_{\pathcover'})$ (thus also a path cover of $G$). At the beginning we initialize $\pathcover'\gets \pathcover$, and we splice paths so that at the end $|E_{\pathcover'}| < 2|V|$.

\subsection{Splicing}\label{sec:splicing} We call \emph{splicing} of a path cover $\pathcover$ through a path $D$ to the process of reconnecting paths in $\pathcover$ so that (after reconnecting) at least one of the paths contains $D$ as a subpath. Splicing additionally requires that for every edge $e$ of $D$ there is at least one path in $\pathcover$ containing $e$, but also maintains the \emph{multiplicity} of the edges. Recall that the \emph{multiplicity} of an edge $e$ with respect to a set of paths $\pathcover$, $\mu_{\pathcover}(e)$, is the number of paths in $\pathcover$ using $e$, that is, $\mu(e) = |\{P \in \pathcover \mid e \in P\}|$. We show how to splice $\pathcover$ in time $O(|D|)$.

\begin{restatable}{lemma}{splicingalgorithm}
\label{result:splicing}
    Let $G = (V, E)$ be a DAG, $D$ a proper path of $G$, and $\pathcover$ a path cover of $G$ such that for every edge $e \in D$ there exists $P\in\pathcover$ containing $e$. We can obtain, in time $O(|D|)$, a path cover $\pathcover'$ of $G$ such that $|\pathcover'| = |\pathcover|$ and there exists $P\in\pathcover'$ containing $D$ as a subpath. Moreover, $\mu_{\pathcover}(e) = \mu_{\pathcover'}(e)$ for all $e \in E$.
\end{restatable}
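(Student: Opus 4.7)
The plan is to represent $\pathcover$ through a data structure that supports a splicing operation along a path $D$ in time $O(|D|)$. For every path $P \in \pathcover$ and every vertex $v \in P$, we store a \emph{visit} node with pointers to its predecessor and successor visits along $P$; we also store, for every edge $e = (u,v) \in E$, a doubly-linked list of those visits $(P,u)$ whose out-edge in $P$ equals $e$. The total size of these auxiliary lists is $\sum_{e \in E} \mu_{\pathcover}(e) = ||\pathcover||$, so the data structure is maintained throughout the whole algorithm of \Cref{thm:edge-thinning} and each splice only has to update it.

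Using this structure, I would walk along $D = d_1, d_2, \ldots, d_m$. First, pick any visit $V$ to $d_1$ whose out-edge is $(d_1, d_2)$, obtained in $O(1)$ from the edge list of $(d_1,d_2)$; such a visit exists by hypothesis. Then for $j = 2, \ldots, m-1$, inspect the out-edge of $V$: if it equals $(d_j, d_{j+1})$, advance $V$ to its successor; otherwise, fetch some visit $V' \ne V$ to $d_j$ whose out-edge is $(d_j, d_{j+1})$ (which exists by hypothesis and is distinct from $V$ because $V$'s out-edge differs), \emph{swap} the successor pointers of $V$ and $V'$ (and, symmetrically, the predecessor pointers of their two successors), move $V$ and $V'$ between the two affected edge-indexed lists, and advance $V$ to its new successor.

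Correctness rests on three observations. First, swapping the suffixes of two paths at a shared vertex $v$ yields two valid simple paths: if a vertex $w$ appeared in one path's prefix up to $v$ and in the other path's suffix from $v$, then $w$ would reach $v$ and $v$ would reach $w$, so $w$ would reach itself, contradicting the acyclicity of $G$. Second, a swap preserves $\mu_{\pathcover}(e)$ for every $e \in E$, because the multiset of edges used across the whole cover is unchanged — each swap merely reassigns edges between two paths, and the local update at $d_j$ moves one incidence of $(d_j, b)$ and one of $(d_j, d_{j+1})$ across paths without changing their totals. Third, an easy induction on $j$ shows that after step $j$ the current visit $V$ lies at $d_j$ and the path through $V$ in the updated cover contains $d_1, \ldots, d_j$ as a consecutive subpath; after step $m-1$ some path of $\pathcover'$ therefore contains $D$ as a subpath. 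The number of maximal paths in the visit graph is invariant under swaps, so $|\pathcover'| = |\pathcover|$.

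The main obstacle is the simple-path argument after swapping suffixes, but it follows directly from the DAG property as above. Everything else is bookkeeping: each of the at most $m-1$ steps performs $O(1)$ pointer updates on visit nodes and edge-indexed lists, giving the claimed $O(|D|)$ running time.
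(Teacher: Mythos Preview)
Your proof is correct and follows essentially the same approach as the paper: walk along $D$ edge by edge, maintaining a current path that contains the processed prefix of $D$, and whenever the next edge of $D$ is not the next edge of the current path, swap suffixes with a path that does contain it. You are more explicit than the paper about the supporting data structure (edge-indexed visit lists) and about why the swapped paths remain simple in a DAG; aside from a small indexing slip in the loop description (your invariant should place $V$ at $d_j$ at the start of iteration $j$, which requires one advance after initialization), the argument matches the paper's.
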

\begin{proof}
    We process the edges of $D$ one by one, and maintain a path $P$ of the path cover that contains as subpath a prefix of $D$, at the end of the algorithm $P$ will contain the whole $D$ as a subpath, as required. We initialize $P$ to be some path of $\pathcover$ containing the first edge of $D$. Then, when processing the next edge $e$ of $D$, we first check if $e$ is the next edge of $P$, if so we continue to the next edge of $D$. Otherwise, let $P'$ be a path of the path cover containing $e$, then we connect the prefix of $P'$ until $e$ (excluding) with the suffix of $P$ from the edge previous to $e$ in $D$ (excluding), and we also connect the prefix of $P$ until the edge previous to $e$ in $D$ (including) with the suffix of $P'$ from $e$ (including). Note that each of these connections can be made by manipulating pointers in $O(1)$ time, also note that the new set of paths forms a path cover, and the edges of $G$ preserve their multiplicity, as edges in the path cover are never created or removed, only change path. 
\end{proof}


To decide how to splice paths in our sparsification algorithm, we color the vertices of $v \in V$ based on their degree, that is, if $deg_{G'}(v) \le 2$ we color $v$ \blue{}, and \red{} otherwise. We also color the edges $(u, v) \in E_{\pathcover'}$ according to the color of their endpoints, that is, if both $u$ and $v$ are \blue{}, we color $(u, v)$ \blue{}, likewise if both $u$ and $v$ are \red{}, we color $(u, v)$ \red{}, otherwise we color $(u, v)$ \purple{}. We traverse the underlying undirected graph of $G'$ in search of a \red{} cycle  (cycle of \red{} edges) $C$ and splice paths along $C$ so that at least one \red{} edge is removed from $G'$. We repeat this until no \red{} cycles remain. Therefore, at the end we have that \red{} vertices and edges form a forest, \blue{} vertices and edges form a collection of vertex-disjoint paths and cycles, and \purple{} edges connect \red{} vertices with the extreme vertices of \blue{} paths. As such, if the number of \blue{} and \red{} vertices is $n_b$ and $n_r$, respectively, and the number of \blue{} paths is $p$, there are $n_b - p$ \blue{} edges, less than $n_r$ \red{} edges, and at most $2p$ \purple{} edges. Therefore, $|E_{\pathcover'}| < n_b -p + n_r + 2p \le 2|V|$, as desired. The following remark shows that the factor $2$ from the bound is asymptotically tight.

\begin{figure}[t]
    \centering
    \includegraphics[scale=0.4]{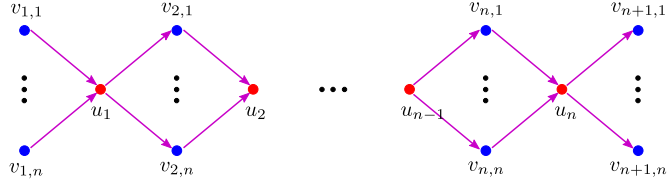}
    \caption{\small A DAG $G$ showing that the factor $2$ from the bound of \Cref{thm:edge-thinning} is asymptotically tight. The figure shows the example graph, as well as the result of applying \Cref{thm:edge-thinning} on an MPC of it. The algorithm colors vertices $v_{i,j}$ \blue{}, vertices $u_i$ \red{}, and edges \purple{}, thus it does not find any \red{} cycle.}
    \label{fig:2-tight}
\end{figure}

\begin{remark}\label{remark:2-tight}
Consider the DAG $G = (V,E)$ from \Cref{fig:2-tight}, with $|V| = n+ n(n+1) = n(n+2)$, $|E| = 2n^2$ and $\width(G) = n$. Note that any path cover $\pathcover$ of size $n$ (an MPC) must use every edge of the graph, then $|E_{\pathcover}|/|V| = |E_{\pathcover'}|/|V| = 2 - 4/(n+2)$.
\end{remark}

\begin{figure}[t]
    \centering
    \begin{subfigure}[b]{0.32\textwidth}
        \centering
        \includegraphics[width=0.95\textwidth]{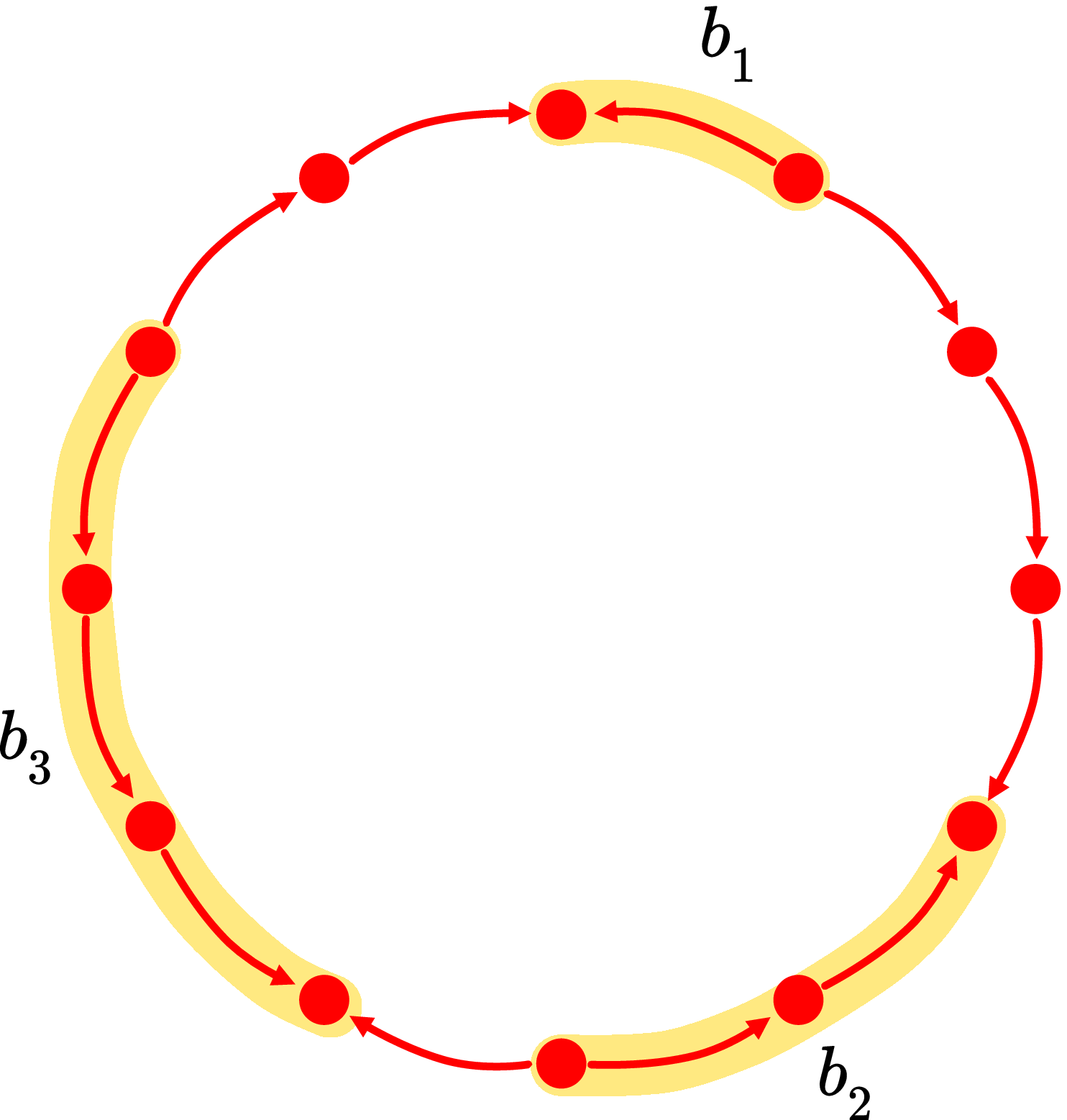}
        \caption[]%
        {{\small Backward segments}}
        \label{subfig:splicing1}
    \end{subfigure}
    \begin{subfigure}[b]{0.32\textwidth}
        \centering
        \includegraphics[width=0.95\textwidth]{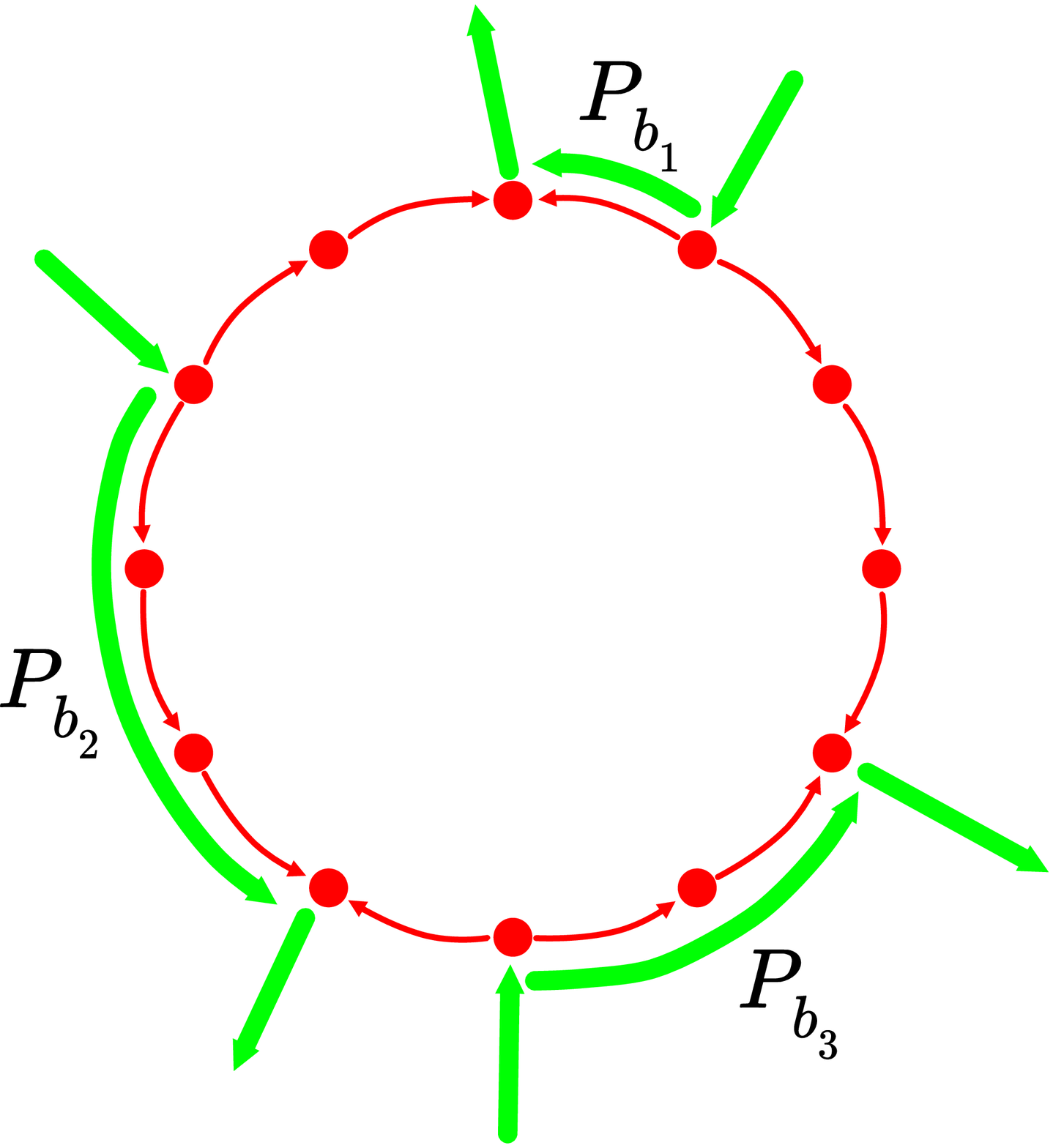}
        \caption[]%
        {{\small Spliced paths}}
        \label{subfig:splicing2}
    \end{subfigure}
    \begin{subfigure}[b]{0.32\textwidth}
        \centering
        \includegraphics[width=0.95\textwidth]{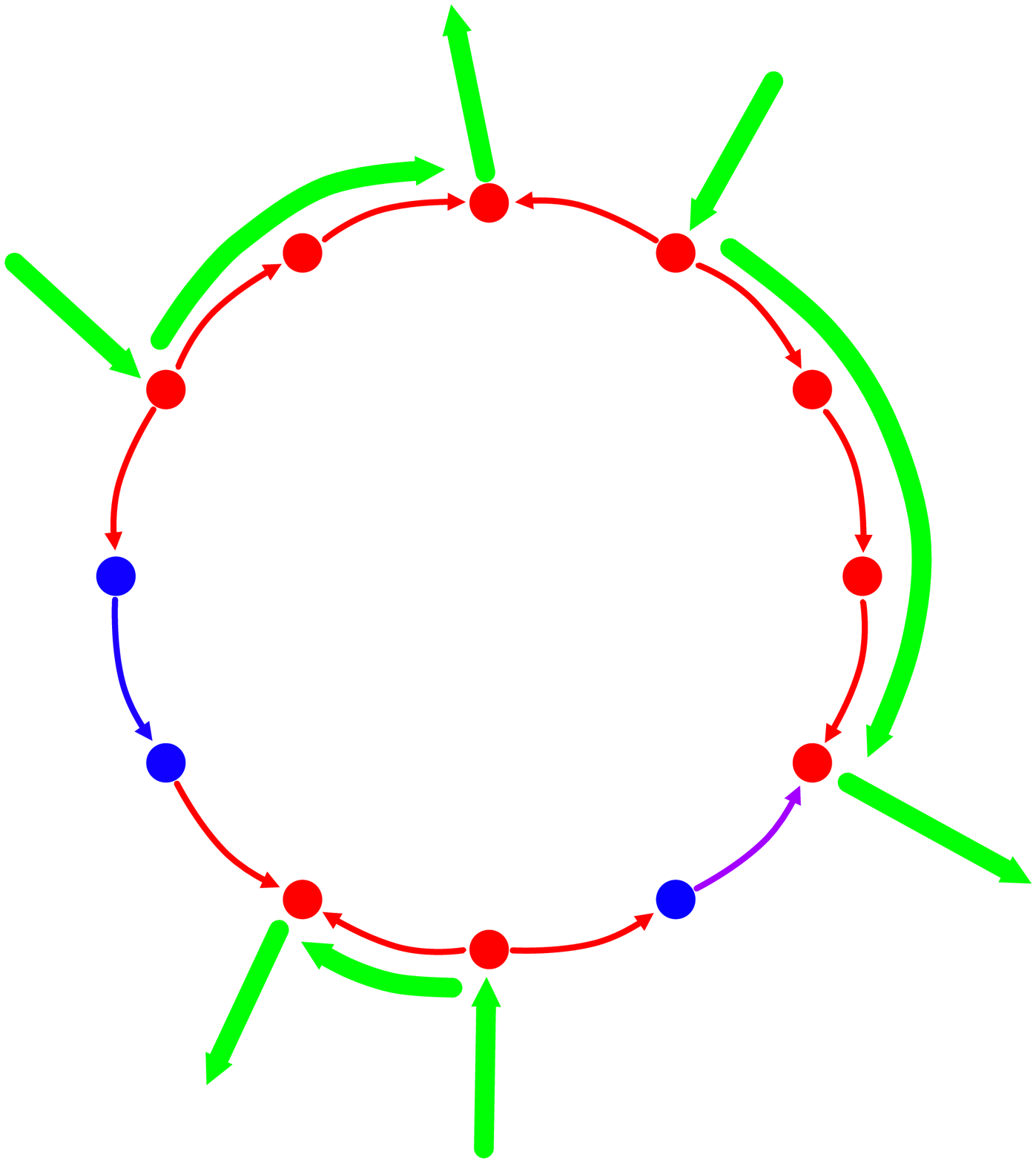}
        \caption[]%
        {{\small New paths}}
        \label{subfig:splincing3}
    \end{subfigure}
    \caption[]
    {\small Splicing in a red cycle. \Cref{subfig:splicing1} shows a red cycle with three backward segments highlighted in yellow (clockwise orientation of the cycle). \Cref{subfig:splicing2} shows the paths containing the backward segments (in green) after splicing through them. \Cref{subfig:splincing3} shows the new pathsPath reconnecting (in green) after removing the backward segments and reconnecting the suffixes and prefixes through the corresponding forward segments, it also shows that some vertices and edges become blue and purple, removing the red cycle.}
    \label{fig:splicing}
\end{figure}

When processing a \red{} cycle $C = v_1, \ldots, v_l, v_{l+1} = v_1$, we partition the corresponding edges of $G'$ in either \emph{forward} $F = \{(v_{i}, v_{i+1}) \in E_{\pathcover'}\}$ or \emph{backward} $B = \{(v_{i+1}, v_{i}) \in E_{\pathcover'}\}$ edges. We splice either along forward or backward edges depending on the comparison between $\sum_{e \in F} \mu(e)$ and $\sum_{e \in B} \mu(e)$. If $\sum_{e \in F} \mu(e) \ge \sum_{e \in B} \mu(e)$, we only splice along backward edges, otherwise only along forward edges. Here we only describe how to splice along backward edges, splicing along forward edges is analogous.
The splicing procedure considers the backward \emph{segments} of the cycle, namely, maximal subpaths of consecutive backward edges in $C$. For each backward segment $b$, it generates a path $P_b \in \pathcover'$ that traverses $b$ entirely, by splicing paths along $b$. For this we apply the splicing procedure of \Cref{result:splicing} on every backward segment, which runs in total time $O(|B|) = O(|C|)$. After that, for every $P_b$ we remove $b$ and reconnect the parts of $P_b$ entering and exiting $b$ to their corresponding adjacent forward segments. See \Cref{fig:splicing} for an example. Note that vertices of $b$ are still covered by some path after splicing since they are \red{}, and the splicing procedure preserves the multiplicity of edges. Also note that the net effect is that the number of paths remains unchanged but the multiplicity of forward edges has increased by one and the multiplicity of backward edges has decreased by one, thus the condition $\sum_{e \in F} \mu(e) \ge \sum_{e \in B} \mu(e)$ will be valid again after the procedure. As such, we repeat the splicing procedure until some backward edge has multiplicity $0$, removing $C$ in this way.

To analyze the running time of all splicing procedures during the algorithm, we consider the function $\Phi(G') = \sum_{e \in E_{\pathcover'}} \mu(e)^2$. We study the change of $\Phi(G')$ of applying the splicing procedure, $\Delta \Phi$. Since the only changes on multiplicity occur on forward and backward edges we have that 
\begin{align*}
\Delta \Phi &= \sum_{e\in F} \left((\mu(e)+1)^2-\mu(e)^2\right) + \sum_{e\in B} \left((\mu(e)-1)^2-\mu(e)^2\right)\\
&= |F| + |B| + 2 \left(\sum_{e \in F} \mu(e) - \sum_{e \in B} \mu(e)\right) \ge |C|.
\end{align*}

As such, each splicing procedure takes $O(|C|)$ time, and increases $\Phi(G')$ by at least $|C|$. Since at the end $\Phi(G') \le  t^2|E_{\pathcover'}|  \le t^22|V|$, the running time of all splicing procedures amounts to $O(t^2|V|)$.

Finally, we describe how to traverse the underlying undirected graph of $G'$ while detecting \red{} cycles in linear time, which is $O(t|V|)$. We perform a \emph{modified} DFS traversal of the graph. We additionally mark the edges as \emph{\processed{}} either when the edge is removed (gets multiplicity $0$), or when the traversal \emph{pops} this edge from the DFS stack\footnote{When an edge is marked as \processed{} we move it at the end of the adjacency list of the corresponding vertex. Therefore, the first edge in the adjacency list of a vertex is always not marked as \processed{}, unless all of them are.}. Since our graph is undirected, all edges are between a vertex and some ancestor in the DFS tree (no crossing edges), thus cycles can be detected by checking if the vertex being visited already is in the DFS stack (and it is not the top of the stack)\footnote{We can maintain an array \emph{in-stack} indicating whether a vertex is in the DFS stack.}. When a \red{} cycle is detected, then we pop from the DFS stack all vertices of the cycle, but without marking as \processed{} the corresponding edges. The cost of these pops plus the additional cost of traversing the edges of the cycle again in a future traversal is linear in the length of the cycle, thus these are charged to the corresponding splicing procedures of this cycle, and the cost of the traversal remains proportional to the size of the graph. 



\bibliographystyle{siamplain}
\bibliography{main}

\end{document}